\documentclass[12pt,a4paper]{article}
\usepackage[utf8]{inputenc}
\usepackage{lmodern}
\usepackage[T1]{fontenc}
\usepackage[english]{babel}
\usepackage{microtype}
\usepackage{amsmath,amssymb,mathtools}
\usepackage{amsthm}
\usepackage{graphicx}
\usepackage[natbibapa]{apacite}
\usepackage{hyperref}
\usepackage{booktabs}
\usepackage{array}
\usepackage{enumitem}

\newcommand{\R}{\mathbb{R}}
\newcommand{\Simp}{\Sigma_2}   

\newtheorem{theorem}{Theorem}[section]
\newtheorem{lemma}[theorem]{Lemma}
\newtheorem{corollary}[theorem]{Corollary}
\newtheorem{proposition}[theorem]{Proposition}
\theoremstyle{definition}
\newtheorem{definition}[theorem]{Definition}
\theoremstyle{remark}
\newtheorem{remark}[theorem]{Remark}

\begin{document}

\title{Crisis, Disengagement, and Structural Realignment: A Threshold Model of Radical-Party Support}

\author{%
  Alexander Omelchenko\thanks{%
    Constructor University Bremen gGmbH, Campus Ring~1, 28759 Bremen,
    Germany. Email: \texttt{aomelchenko@constructor.university}.}%
}
\date{\today}

\maketitle

\begin{abstract}
When does a crisis-induced surge in radical-party support fade away, and
when does it become a durable realignment? We address this in a
mathematical sociology threshold model on a conserved population. The
baseline admits a global classification through a Perron--Frobenius
threshold. Adding a crisis-induced disengagement compartment, we
separate state shocks (which alter the current state) from structural
shocks (which alter parameters). State shocks affect transients but
cannot move the long-run attractor; durable realignment requires
structural threshold-crossing. We derive a critical shock amplitude and
a finite mobilisation-window bound, and show that cumulative structural
shifts can produce staircase realignment. A stylised illustration uses
German federal elections, 2013--2025.
\end{abstract}

\begin{quote}
\small\noindent\textbf{Keywords:} voter dynamics, political polarisation,
radical-party support, threshold models, simplex dynamics, bifurcation,
compartmental models
\end{quote}

\section{Introduction}

A central question in the analysis of post-crisis European politics is
when a surge in radical-party support reflects a transient electoral
shock and when it reflects a durable political realignment. In dynamical
terms, the difference is between a shock to the state of the system and a
shock to the vector field governing its subsequent evolution. The former
changes the current distribution of voters; the latter changes the
post-crisis recruitment and reabsorption environment. The two mechanisms
are not empirically distinguishable from a single election: both may
produce a sharp rise in radical vote share and both can be politically
consequential at their peak. This article develops a conserved-electorate
voter-flow model that separates the two cases analytically and derives
explicit thresholds for transient surges and long-run regime shifts.

Throughout the paper, ``radical'' is used as a modelling category rather
than as a legal or normative classification. It denotes party blocs outside
mainstream party competition that recruit from the mainstream electorate and
may trigger reactive mobilisation by an opposed bloc. The model is therefore
agnostic about the detailed ideological content of the blocs; empirical
applications require a separate mapping from parties to model compartments.

The empirical motivation is well known. Across Western democracies,
challenger and radical parties have grown while mainstream parties have
weakened, especially in the aftermath of successive crises
\citep{mudde2019,norris2019,kriesi2008,hobolt2016}. The qualitative
feature of interest is not simply that radical support rises during
crises, but that it may settle at a higher post-crisis reference level
afterwards. The stylised interpretation developed below treats such a
ratchet pattern as the possible cumulative effect of structural shifts
in the recruitment environment, rather than as repeated temporary
volatility alone. The aim of the present paper is to analyse this
distinction---between transient electoral surges and durable
realignment---in a minimal mathematical sociology framework, not to
explain the trajectory of any particular party.

The framework belongs to the tradition of threshold models in
mathematical sociology \citep{granovetter1978,macy1991}, in which
aggregate macro-level outcomes emerge from population-level rules of
recruitment, conversion, and reabsorption. We adapt this tradition to a
simplex setting with explicit competition between radical wings and
reactive polarisation, which together generate the threshold structure
analysed below.

Existing formal models of voter flows do not directly isolate the distinction
at issue here: the difference between temporary electoral surges and durable
regime shifts in radical support. Bounded-confidence and network models
describe individual opinion formation and cluster dynamics
\citep{deffuant2000,hegselmann2002,lorenz2007,baumann2020,flache2017}, but
they do not directly address the aggregate vote-share question studied
here: whether a surge in radical support is a transient displacement of
the state or evidence of a change in the long-run regime selected by the
vector field.

Aggregate ODE models closer to the present
paper include the satisficing model of \citet{yang2020}, which tracks party
\emph{positions} rather than voter shares, and the sociophysics model of
\citet{diep2024} applied to the 2024 US election. The most directly comparable
line of work is the cRUD framework of \citet{volkening2020}, later extended by
\citet{branstetter2026}: Democratic, Republican, and undecided fractions evolve
on a conserved-electorate simplex, with parameters fitted to polling data and
probabilistic forecasts produced by simulation. This framework shares our
simplex constraint and conserved electorate, but is designed for election
forecasting rather than structural analysis. It asks who wins a given race;
the present paper asks whether the system has a stable mainstream equilibrium,
and under what conditions a crisis can permanently displace it. The model also
targets multi-party European systems, incorporating \emph{reactive
polarisation} between wings: growth of one radical bloc amplifies recruitment
by the opposite wing, a feedback absent from two-party models.

The mathematical apparatus we use---a Perron--Frobenius spectral threshold
governing global stability of a radical-free equilibrium---is structurally
analogous to the next-generation matrix framework for basic reproduction
numbers in epidemiology \citep{diekmann1990,vandendriessche2002}. The
substantive interpretation differs: instead of secondary infections per
infected individual, $\mathcal{R}_{\mathrm{base}}$ measures whether
recruitment plus reactive polarisation are strong enough relative to
reabsorption to sustain a positive long-run radical share. The connection
is conceptual rather than mechanical---politics is not contagion---but it
makes available a well-developed body of techniques for global threshold
analysis on a simplex.

A related literature studies democratic erosion and institutional
backsliding \citep{levitsky2018,haggard2021,grillo2024}. The model below
is connected to that literature in motivation but differs in object and
method. Crisis-induced disengagement is represented as a state variable,
while persistent changes in the recruitment and reabsorption environment
are represented as structural parameter shifts. The model therefore does
not explain institutional erosion directly; instead, it asks how changes
in the political environment affect aggregate voter flows and long-run
radical-party support.

To capture the transient amplification component of crises, the model
introduces a compartment for voters who temporarily disengage from active
party competition during a crisis. This compartment is not total
abstention: it represents crisis-induced disengagement above ordinary
background non-voting. Survey evidence documents politically dissatisfied,
weakly attached, and non-participating voters
\citep{dalton2004,torcal2006,eurobarometer2013,gles2021}. The model does
not assume that disengagement mechanically precedes radical support in
every case. Rather, it treats disengagement as a transient reservoir from
which voters may either return to the mainstream or be mobilised by
radical parties. The branching between these two outflows governs the
size and duration of post-crisis radical surges, but not the long-run
equilibrium itself: durable realignment arises from structural changes in
the recruitment environment.

The paper makes three contributions, organised around a single
state--structure decomposition.

First, within the conserved-electorate voter-flow model, we separate two
ways in which a crisis can enter the dynamics. A state shock changes the
current distribution of voters, for example by moving part of the
mainstream pool into crisis-induced disengagement. For fixed post-shock
parameters, such a shock can affect transients but cannot change the
long-run attractor. A structural shock changes the post-crisis parameter
vector and can therefore change the attracting equilibrium. The two
mechanisms have different thresholds: in the symmetric reduction,
state-driven transient growth requires a critical shock amplitude
$\Delta_c$, while lasting realignment requires crossing the spectral
threshold $\lambda_{\mathrm{PF}}(M^{-1}K)=1$.

Second, we develop the global theory behind this decomposition. The
baseline three-group simplex system admits a complete global
classification governed by the Perron--Frobenius root
$\mathcal{R}_{\mathrm{base}}=\lambda_{\mathrm{PF}}(M^{-1}K)$, with no
periodic orbits and no attracting bistability. Adding a disengagement
compartment preserves the equilibrium set on the face $A=0$ but
introduces a transient-amplification mechanism. In the symmetric
reduction, the total radical share $S=L+R$ satisfies a closed
two-dimensional system with an explicit critical state-shock amplitude
and mobilisation-window bound.

Third, we show how staircase dynamics arise from cumulative structural
change. Pure state shocks cannot generate a sequence of new long-run
floors under fixed parameters. By contrast, individually insufficient
structural shifts can cumulate until the spectral threshold is crossed;
after crossing, further positive structural shifts lower the long-run
mainstream share and raise the radical-support floor. The empirical
implication is that temporary crisis volatility and durable political
realignment should be modelled as distinct mechanisms.

Substantively, the model yields three qualitative implications. First,
temporary disengagement shocks can produce large but reversible surges in
radical-party support, especially when re-engagement is slow. Second, a
durable post-crisis floor requires a structural change in the recruitment
and reabsorption environment, not merely a large temporary displacement
of voters. Third, repeated crises can generate a staircase pattern only
when their structural components accumulate across episodes. These
implications are qualitative: the model is intended to clarify
mechanisms and thresholds, not to provide an election-forecasting
device.

Substantively, the model yields four qualitative implications. First, in
a subcritical fixed-parameter regime, a state shock can generate a
transient radical surge, but the return speed should increase with
reabsorption and re-engagement rates and the surge duration should
increase with the size of the disengagement reservoir. Second,
persistent post-crisis radical support requires post-shock structural
parameters above the recruitment threshold. Third, repeated crises
without structural accumulation should not create a sequence of
successively higher long-run radical-support equilibria. Fourth,
positive cross-wing couplings imply counter-mobilisation: growth in one
radical bloc increases recruitment pressure into the opposed bloc.
These implications are qualitative: the model clarifies mechanisms and
thresholds rather than providing an election-forecasting device.

The paper is organised as follows. Section~\ref{sec:model} introduces the
baseline voter-flow model. Section~\ref{sec:baseline} analyses the baseline system and establishes
path-independence of the fixed-parameter dynamics. Section~\ref{sec:extended}
extends the model by adding disengagement and distinguishing state shocks from
structural shocks. Section~\ref{sec:legacy} derives the conditions under which
crises leave a lasting political legacy, including the critical shock amplitude
and the structural threshold for permanent equilibrium shifts. Section~\ref{sec:germany}
offers a stylised empirical illustration using German federal elections,
2013--2025. Section~\ref{sec:discussion} discusses implications and
limitations. Proofs and additional numerical illustrations are collected
in the appendices.

\section{A Voter-Flow Model of Political Competition}\label{sec:model}

We begin with a baseline model in which the electorate is distributed across
three groups: a left-radical bloc $L(t)$, a right-radical bloc $R(t)$, and a
mainstream bloc $C(t)$ comprising all remaining party support. All variables
are normalised shares of a fixed reference electorate, so that
\[
  L(t)\ge0,\qquad R(t)\ge0,\qquad C(t)=1-L(t)-R(t)\ge0,
\]
and the feasible state space is the probability simplex
\[
  \Simp = \{(L,R)\in\R^2 : L\ge0,\;R\ge0,\;L+R\le1\}.
\]
The conserved-electorate constraint is a deliberate modelling choice: the
model studies redistribution within a fixed political population rather than
changes in population size. In the baseline, ordinary background non-voting
is not tracked separately: $C$ collects mainstream-party support together
with the ordinary abstention level. Crisis-induced disengagement above this
background is introduced as a separate compartment $A(t)$ in
Section~\ref{sec:extended}; the empirical mapping in
Section~\ref{sec:germany} accordingly distinguishes $C_{\mathrm{obs}}$ from
observed non-voting $N$.

In sociological terms, the compartments correspond to political-attachment
categories: $L$ and $R$ to active radical attachment on opposing
ideological poles, $C$ to mainstream/non-radical attachment, and (in the
extended model of Section~\ref{sec:extended}) $A$ to crisis-induced
detachment from political participation. The recruitment, reactive
polarisation, and reabsorption parameters represent socialisation,
counter-mobilisation, and re-engagement processes operating at the
aggregate level. The model is therefore neutral about the micro-level
mechanisms producing these flows; it specifies their aggregate
consequences.

The dynamics are governed by
\begin{equation}\label{eq:model}
\begin{aligned}
  \dot{L} &= \alpha_L\,L\,C \;-\; \mu_L\,L \;+\; \gamma_{RL}\,R\,C, \\
  \dot{R} &= \alpha_R\,R\,C \;-\; \mu_R\,R \;+\; \gamma_{LR}\,L\,C,
\end{aligned}
\end{equation}
where $C=1-L-R$ and all parameters are strictly positive. By construction,
$\dot{C}=-\dot{L}-\dot{R}$, so $L+R+C=1$ is conserved for all $t\ge0$.

The parameters have a direct political interpretation. The terms
$\alpha_L L C$ and $\alpha_R R C$ describe within-bloc recruitment from the
mainstream: existing radical support increases the rate at which mainstream
voters move into the same radical bloc. The terms $-\mu_i X_i$ describe
reabsorption from radical support back into the mainstream. The terms
$\gamma_{ij} X_j C$ describe \emph{reactive polarisation}: the visibility or
support level of one radical wing increases recruitment from the mainstream
into the opposite wing. These parameters are summarised in
Table~\ref{tab:params}.

\begin{table}[ht]
\caption{Parameters of the baseline voter-flow model.}\label{tab:params}
\centering
\renewcommand{\arraystretch}{1.4}
\begin{tabular}{lll}
\toprule
Parameter & Interpretation & Range \\
\midrule
$\alpha_L,\alpha_R$ & Within-bloc recruitment from mainstream & $(0,\infty)$ \\
$\mu_L,\mu_R$       & Reabsorption to mainstream & $(0,\infty)$ \\
$\gamma_{RL}$       & Reactive polarisation: $R\to L$ recruitment & $(0,\infty)$ \\
$\gamma_{LR}$       & Reactive polarisation: $L\to R$ recruitment & $(0,\infty)$ \\
\bottomrule
\end{tabular}
\end{table}

The strict positivity of the cross-couplings $\gamma_{RL}$ and $\gamma_{LR}$
is a structural assumption for the uniqueness results below. If one or
both cross-couplings vanish, the boundary and symmetric limiting cases
can admit non-unique left--right splits;
Appendix~\ref{app:proofs} records the no-cross-coupling limit.

The recruitment terms have the bilinear form of a minimal mass-action closure
compatible with conservation and saturation: recruitment vanishes when either
the recruiting radical bloc or the recruitable mainstream pool is absent. The
reabsorption terms $-\mu_i X_i$ are linear because return to the mainstream
is modelled as an individual-level exit rate from radical support. We use
$\gamma_{RL}RC$ rather than a direct wing-to-wing transfer term such as $RL$:
the intended mechanism is not direct switching between radical camps, but
movement from the mainstream toward one wing in response to the visibility
of the other. Trajectories that start in $\Simp$ remain in $\Simp$ for all
$t\ge0$; the proof is given in Appendix~\ref{app:proofs}.

This baseline model is intentionally minimal. Its role is to establish what a
fixed-parameter voter-flow system can and cannot produce. The next section
shows that, under fixed structural conditions, the baseline dynamics are
path-independent in the long run: temporary state perturbations may alter
the transient trajectory, but they do not change the attracting equilibrium
selected by the parameter vector.

\section{When Does the Mainstream Recover?}\label{sec:baseline}

The substantive question for the baseline model is whether a temporary rise
in radical support changes the long-run political regime or merely displaces
the electorate within a self-correcting system. The answer is governed by a
single threshold balancing radical recruitment against reabsorption into the
mainstream.

Define the recruitment matrix and reabsorption matrix:
\[
  K:=\begin{pmatrix}\alpha_L & \gamma_{RL}\\\gamma_{LR} & \alpha_R\end{pmatrix},
  \qquad
  M:=\begin{pmatrix}\mu_L & 0\\ 0 & \mu_R\end{pmatrix},
\]
and let $\mathcal{R}_{\mathrm{base}}:=\lambda_{\mathrm{PF}}(M^{-1}K)$ denote
the Perron--Frobenius root of the effective recruitment matrix. The quantity $\mathcal{R}_{\mathrm{base}}$ is the model's threshold parameter:
it measures whether radical recruitment, amplified by reactive polarisation,
is strong enough relative to reabsorption to sustain a positive long-run
radical share. It is formally analogous to the basic reproduction number in
compartmental models \citep{diekmann1990,vandendriessche2002}; the
mathematical connection is detailed in Appendix~\ref{app:proofs}.

\begin{theorem}[Baseline threshold and long-run recovery]
\label{thm:baseline_threshold}
Let
\[
  \mathcal{R}_{\mathrm{base}}:=\lambda_{\mathrm{PF}}(M^{-1}K).
\]
For the baseline model~\eqref{eq:model}, exactly two parameter regimes are
possible.
\begin{enumerate}
  \item If $\mathcal{R}_{\mathrm{base}}\le 1$, the radical-free equilibrium
    $E_0=(0,0)$ is globally asymptotically stable on $\Simp$: radical support
    dies out and the entire electorate returns to the mainstream.
  \item If $\mathcal{R}_{\mathrm{base}}>1$, the radical-free equilibrium
    is unstable and there exists a unique interior equilibrium
    $E_1=(L^*,R^*)$. This equilibrium is globally attracting on
    $\Simp\setminus\{E_0\}$, and its mainstream share is
    \[
      C^*=\frac{1}{\mathcal{R}_{\mathrm{base}}}.
    \]
    More explicitly, if $v\gg0$ is the right Perron--Frobenius eigenvector of
    $M^{-1}K$ normalised by $v_L+v_R=1$, then
    \[
      (L^*,R^*)=\left(1-\frac{1}{\mathcal{R}_{\mathrm{base}}}\right)v.
    \]
\end{enumerate}
\end{theorem}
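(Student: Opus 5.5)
The system is cooperative and linear in $X=(L,R)^\top$ apart from the common factor $C$. We can write it as
\[
  \dot X = \bigl(C\,K - M\bigr)X,\qquad C=1-L-R .
\]
The off-diagonal entries $\gamma_{RL}C$ and $\gamma_{LR}C$ are nonnegative on $\Simp$, so the system is cooperative there. Moreover, $K$ is irreducible because $\gamma_{RL},\gamma_{LR}>0$. The plan has four steps: first the equilibria, then local stability, then a Lyapunov argument for the subcritical case, and finally a monotone-systems argument for the supercritical case.

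\emph{Equilibria.} At an equilibrium with $X\neq 0$, the relation $CKX=MX$ gives $M^{-1}KX=(1/C)X$ with $X\ge0$ and $X\neq0$. The matrix $M^{-1}K$ is positive, so by Perron--Frobenius its only nonnegative eigenvector is the PF vector $v\gg0$. Hence $1/C=\mathcal{R}_{\mathrm{base}}$ and $X=sv$ for some $s>0$. Conservation then fixes $s=L+R=1-C=1-1/\mathcal{R}_{\mathrm{base}}$. This requires $C<1$, so an equilibrium with $X\neq 0$ exists, and is unique and interior, exactly when $\mathcal{R}_{\mathrm{base}}>1$. It has the stated form.

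\emph{Local stability at $E_0$.} The Jacobian at $E_0$ is $K-M$. Since $M$ is a positive diagonal matrix and $K$ is nonnegative, the standard M-matrix equivalence (as in \citealt{vandendriessche2002}) gives
\[
  s(K-M)<0\iff\lambda_{\mathrm{PF}}(M^{-1}K)<1,
\]
and likewise with both inequalities reversed. This settles instability of $E_0$ when $\mathcal{R}_{\mathrm{base}}>1$.

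\emph{Global stability for $\mathcal{R}_{\mathrm{base}}\le1$.} Let $w\gg0$ be the left PF eigenvector of $M^{-1}K$, and set $u^\top=w^\top M^{-1}$, so that $u^\top K=\mathcal{R}_{\mathrm{base}}\,u^\top M$. Take the Lyapunov function $V=u^\top MX$, which is positive for $X\ne0$. Then
\[
  \dot V=u^\top M(CK-M)X=(C\mathcal{R}_{\mathrm{base}}-1)\,u^\top M^2X .
\]
Since $C\le1$ and $\mathcal{R}_{\mathrm{base}}\le1$, we get $\dot V\le0$. Equality forces either $X=0$ or $C\mathcal{R}_{\mathrm{base}}=1$. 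The latter would require $C=1$ and hence $X=0$ anyway. So LaSalle's principle on the forward-invariant compact set $\Simp$ (invariance is proved in Appendix~\ref{app:proofs}) gives global convergence to $E_0$.

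\emph{Global attractivity for $\mathcal{R}_{\mathrm{base}}>1$.} This is the main step. The system is planar, cooperative and irreducible on the interior, and trajectories from $X\neq0$ immediately enter $\operatorname{int}\Simp$ because $K$ is irreducible. For such systems, Hirsch/Smith theory says every bounded orbit converges to an equilibrium; equivalently, planar cooperative systems have no nontrivial periodic orbits. The available equilibria are $E_0$ and $E_1$. I expect the obstacle to be excluding convergence to $E_0$ from nonzero initial data.

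To exclude it, I would use the same function $V$. Near $X=0$ we have $C\mathcal{R}_{\mathrm{base}}>1$, so $\dot V>0$ in a punctured neighbourhood of the origin. Hence no orbit with $X(0)\neq0$ can tend to $E_0$; this is a uniform persistence argument. Boundedness is immediate from the simplex. If the monotone-systems citation is to be avoided, an alternative route is a Dulac function $1/(LR)$, whose divergence should be negative on the interior, together with Poincar\'e--Bendixson. Either way, every orbit starting in $\Simp\setminus\{E_0\}$ converges to $E_1$, and its mainstream share is $C^*=1/\mathcal{R}_{\mathrm{base}}$.
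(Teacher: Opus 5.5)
Your equilibrium and local-stability steps match the paper, but two steps of the global argument fail as written.

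\textbf{Cooperativity.} Your primary argument for $\mathcal{R}_{\mathrm{base}}>1$ rests on a false claim: the system is not cooperative. That $CK-M$ is Metzler only shows that the nonnegative orthant is invariant. The Kamke condition concerns the Jacobian of the vector field, whose off-diagonal entries are $\partial\dot L/\partial R=\gamma_{RL}(C-R)-\alpha_L L$ and $\partial\dot R/\partial L=\gamma_{LR}(C-L)-\alpha_R R$. These are positive near $E_0$ but strictly negative near the face $C=0$. They can be negative at $E_1$ itself: in the symmetric case both equal $(2\gamma\mu-\beta(\beta-\mu))/(2\beta)$, which is about $-0.43$ for $\alpha=1$, $\gamma=0.1$, $\mu=0.2$. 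With this sign change the system is neither cooperative nor competitive on $\Simp$, so the planar Hirsch--Smith convergence theorem is unavailable.

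Your Dulac fallback is therefore not optional; it is the proof, and it works. For $B=1/(LR)$ one gets $\partial_L(BF_1)+\partial_R(BF_2)=-\alpha_L/R-\alpha_R/L-\gamma_{RL}(L+C)/L^2-\gamma_{LR}(R+C)/R^2<0$ on $\operatorname{int}\Simp$. This is a slightly simpler weight than the paper's $1/(LRC)$ in Corollary~\ref{cor:dulac}.

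To run Poincar\'e--Bendixson as the paper does in Theorem~\ref{thm:global_asym}, you still need two details you skip.
\begin{enumerate}
  \item You need $E_0\notin\omega(x_0)$, i.e.\ $\liminf V>0$. This is stronger than ``the orbit does not tend to $E_0$'', but it follows from $\dot V>0$ on a small sublevel set of $V$.
  \item You need a reason why $E_1\in\omega(x_0)$ forces $\omega(x_0)=\{E_1\}$. Either use local asymptotic stability of $E_1$ (the trace--determinant computation of Theorem~\ref{thm:PF}), or note that the Green's-theorem proof of Dulac also excludes homoclinic loops.
\end{enumerate}

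\textbf{Lyapunov function.} The Lyapunov function is also mis-specified. From $u^\top K=\mathcal{R}_{\mathrm{base}}\,u^\top M$ one gets $u^\top(CK-M)X=(C\mathcal{R}_{\mathrm{base}}-1)\,u^\top MX$. So the function with the factorisation you want is $V=u^\top X$, where $u$ is the left Perron vector of $KM^{-1}$. Your $V=u^\top MX=w^\top X$ does not satisfy the stated identity, since $M$ and $K$ do not commute, and it can fail outright. Take $\mu_L=1$, $\mu_R=100$, $\alpha_L=0.1$, $\gamma_{RL}=0.5$, $\gamma_{LR}=50$, $\alpha_R=1$. Then $\mathcal{R}_{\mathrm{base}}\approx0.56$ and $w\approx(1,0.91)^\top$, yet $\tfrac{d}{dt}\,w^\top X\approx44.8\,\varepsilon>0$ at $(\varepsilon,0)$.

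With $V=u^\top X$, both your subcritical LaSalle argument and your persistence argument are correct. This is a cleaner choice than the paper's weight, the left Perron vector $q$ of $K-M$. The exact factor $C\mathcal{R}_{\mathrm{base}}-1$ makes the case $\mathcal{R}_{\mathrm{base}}\le1$ a one-line computation. It also identifies explicitly the punctured neighbourhood of $E_0$ on which $V$ increases, namely $\{0<L+R<1-1/\mathcal{R}_{\mathrm{base}}\}$. The paper instead has to bound $-S\,q^\top Kx$ through $c_K$ and $q_{\min}$.
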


\begin{proof}
At any interior equilibrium, writing $x=(L^*,R^*)^\top$ and
$C^*=1-L^*-R^*$, the equilibrium conditions read $C^*\,Kx=Mx$, or
equivalently $M^{-1}Kx=(C^*)^{-1}x$. Since $x\gg0$, the Perron--Frobenius
theorem implies $(C^*)^{-1}=\lambda_{\mathrm{PF}}(M^{-1}K)$, which gives
$C^*=1/\mathcal{R}_{\mathrm{base}}$ and the formula for $E_1$ in terms of
the right Perron eigenvector. The global stability statements are proved in
Appendix~\ref{app:proofs}: the symmetric case is Theorem~\ref{thm:global}
and the asymmetric case is Theorem~\ref{thm:global_asym}.
\end{proof}

In the symmetric case $\alpha_L=\alpha_R=\alpha$, $\mu_L=\mu_R=\mu$,
$\gamma_{LR}=\gamma_{RL}=\gamma$, set $\beta:=\alpha+\gamma$. The total
radical share $S:=L+R$ then satisfies the scalar logistic-type equation
\[
  \dot S = S\bigl[\beta(1-S)-\mu\bigr],
\]
so $\mathcal{R}_{\mathrm{base}}=\beta/\mu$, and when
$\mathcal{R}_{\mathrm{base}}>1$ the long-run mainstream share at the
interior equilibrium is
\[
  C^* = \frac{\mu}{\beta} = \frac{\mu}{\alpha+\gamma}.
\]
Stronger recruitment ($\alpha$) or stronger reactive polarisation ($\gamma$)
lowers $C^*$; faster reabsorption ($\mu$) raises it. The scalar form of
$\dot S$ shows that the total radical share converges monotonically. In the
full two-dimensional model, trajectories converge to a unique long-run
equilibrium and no periodic oscillations are possible; the absence of
periodic orbits follows from a Bendixson--Dulac argument given in
Appendix~\ref{app:proofs}. (Section~\ref{sec:legacy} specialises to the symmetric parameter regime
with $\delta_L=\delta_R=\delta$; under that regime the total radical share
$S$ satisfies a closed two-dimensional system, and on the equal-wing
subspace $L=R=:P$ one has $S=2P$.)

The substantive implication of Theorem~\ref{thm:baseline_threshold} is a
negative result: under fixed parameters, the baseline model has a unique
attracting equilibrium in each regime. It therefore cannot generate
attracting bistability, path-dependent coexistence of multiple long-run
regimes, or staircase dynamics in radical support under unchanged
parameters. A crisis that temporarily shifts vote shares may produce a
large transient displacement, but once the shock ends the trajectory
returns to the same attracting equilibrium selected by the fixed parameter
vector. The only exceptional case is the nongeneric seed-free state $E_0$
in the supercritical regime: a perturbation that creates the first nonzero
radical seed selects the already-existing attracting equilibrium $E_1$.
This is seeding of an unstable zero state, not a path-dependent change in
the attractor created by the shock. Formally:

\begin{corollary}[No endogenous staircase dynamics under fixed parameters]
\label{cor:no_staircase}
For fixed positive parameters, the baseline model has no attracting
bistability, and no sequence of temporary state perturbations can generate a
sequence of new attracting long-run mainstream shares. Apart from the
nongeneric seed-free equilibrium $E_0$ in the supercritical regime, the
long-run attracting equilibrium is determined by the parameter regime, not
by shock history.
\end{corollary}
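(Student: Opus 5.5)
The corollary follows directly from Theorem~\ref{thm:baseline_threshold}. The argument splits according to the two parameter regimes. The key observation is that a temporary state perturbation only changes the initial condition from which the fixed-parameter flow restarts. The vector field, and hence its attractor, is untouched.

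The plan has three steps.

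\emph{Absence of attracting bistability.} Attracting bistability would require two distinct equilibria (or invariant sets) in $\Simp$, each with an open basin of attraction.
\begin{itemize}
\item If $\mathcal{R}_{\mathrm{base}}\le 1$, Theorem~\ref{thm:baseline_threshold}(1) gives global asymptotic stability of $E_0$ on all of $\Simp$. Its basin is therefore the whole simplex, which rules out any second attractor.
\item If $\mathcal{R}_{\mathrm{base}}>1$, part (2) shows that $E_1$ attracts every point of $\Simp\setminus\{E_0\}$. The only other equilibrium is $E_0$, and it is unstable. Its basin within $\Simp$ is at most the set of points whose trajectories stay at $E_0$.
\end{itemize}
I need that set to be just $\{E_0\}$, which is the content of ``globally attracting on $\Simp\setminus\{E_0\}$''. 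It is also consistent with the invariance of the face $L=R=0$ and with positivity of the cross-couplings: if $L>0$, then $\dot R\ge\gamma_{LR}LC>0$ wherever $C>0$. Either way, $E_0$ is not an attractor with nonempty interior basin. The vertex $C=0$ should be checked separately. There $\dot L=-\mu_L L<0$, so trajectories enter the interior and do not remain on the face $L+R=1$.

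\emph{Sequences of temporary state perturbations.} I will formalise such a sequence as a piecewise flow. At times $t_1<t_2<\dots$ the state is reset to arbitrary points $x_k\in\Simp$, and between resets the fixed-parameter flow of~\eqref{eq:model} acts. The long-run mainstream share after the final perturbation is the $\omega$-limit of the trajectory starting at $x_k$.
\begin{itemize}
\item By the first step, this limit is $C=1$ in the subcritical regime.
\item In the supercritical regime it is $C^*=1/\mathcal{R}_{\mathrm{base}}$ whenever $x_k\neq E_0$.
\end{itemize}
The set of attainable long-run mainstream shares is therefore $\{1\}$ or $\{1/\mathcal{R}_{\mathrm{base}}\}$. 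The only exception is the value $1$ attained from the exact seed-free state $E_0$, which is not an attractor. Because the attainable set has at most one attracting element, no sequence of perturbations can produce a sequence of distinct new attracting shares. The value depends on the parameters alone, through $\mathcal{R}_{\mathrm{base}}$, and not on the history $(x_k)$.

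\emph{Main obstacle.} The one delicate point is the nongeneric case of $E_0$ in the supercritical regime, which the statement explicitly excepts. I will remark there that a perturbation moving the state off $E_0$ selects the pre-existing attractor $E_1$ rather than creating a new one. Beyond that, everything rests on the global attractivity claims of Theorem~\ref{thm:baseline_threshold}, whose proofs are deferred to Theorems~\ref{thm:global} and~\ref{thm:global_asym}. The corollary itself is essentially bookkeeping on top of those results.
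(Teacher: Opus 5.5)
Your proposal is correct and follows the paper's own argument: the paper also reads the corollary off Theorem~\ref{thm:baseline_threshold}, since each fixed parameter vector has a single attracting equilibrium and a state perturbation only resets the initial condition, with the non-attracting seed-free point $E_0$ in the supercritical regime as the only exception. Two inessential asides are slightly off---$\dot R\ge\gamma_{LR}LC$ holds only on the edge $R=0$ (in general $\dot R=R(\alpha_RC-\mu_R)+\gamma_{LR}LC$), and on the edge $C=0$ (an edge, not a vertex) the right check is $\dot C=\mu_LL+\mu_RR>0$, because $\dot L=-\mu_LL$ vanishes at $(0,1)$---but you need neither, since part~2 of the theorem already covers all of $\Simp\setminus\{E_0\}$.
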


\begin{proof}
Immediate from Theorem~\ref{thm:baseline_threshold}: for each fixed positive
parameter set, the baseline dynamics admit a single attracting equilibrium.
\end{proof}

This corollary is the precise mathematical statement of the baseline model's
structural limitation. Explaining the ratchet patterns observed in post-2008
European elections requires a change in model structure. The next section
introduces that extension.

Additional results on local linearisation, the transcritical bifurcation
structure at $\mathcal{R}_{\mathrm{base}}=1$, comparative statics, and the
absence of periodic orbits are collected in Appendix~\ref{app:proofs}.

\section{Extending the Model: Disengagement and Structural
Shocks}\label{sec:extended}

Corollary~\ref{cor:no_staircase} identifies the central limitation of the
baseline model. Under fixed parameters, temporary shocks can change vote
shares, but they cannot generate history-dependent long-run floors,
bistability, or cumulative ratchet effects. To represent crisis episodes, we
add two distinct mechanisms with different roles:
\begin{enumerate}
  \item[(i)] a transient amplifier for crisis-induced disengagement,
    modelled by a compartment $A(t)$ that absorbs part of the mainstream
    during a crisis and is later drained by re-engagement and radical
    mobilisation; this modifies transient dynamics but not the equilibrium
    set;
  \item[(ii)] a permanent shift in the recruitment environment, modelled by
    a structural change in the parameter vector at the time of the crisis;
    this is the ingredient that can change the long-run attractor.
\end{enumerate}

The extended model tracks four groups: a left-radical bloc $L(t)$, a
right-radical bloc $R(t)$, a mainstream bloc $C(t)$, and disengaged voters
$A(t)$. Here $A(t)$ is interpreted as crisis-induced disengagement above
ordinary background abstention. The conservation law becomes
\begin{equation}\label{eq:conserve4}
  L(t)+R(t)+C(t)+A(t)=1 \qquad (t\ge0),
\end{equation}
so that $C=1-L-R-A$ and the feasible state space is
\[
  \mathcal{T}=\{(L,R,A)\in\R^3:\ L\ge0,\;R\ge0,\;A\ge0,\;L+R+A\le1\}.
\]

The dynamics are
\begin{equation}\label{eq:4group}
\begin{aligned}
  \dot{L} &= \alpha_L\,L\,C + \gamma_{RL}\,R\,C + \delta_L\,A\,L - \mu_L\,L, \\
  \dot{R} &= \alpha_R\,R\,C + \gamma_{LR}\,L\,C + \delta_R\,A\,R - \mu_R\,R, \\
  \dot{A} &= \sigma(t)\,C - \delta_L\,A\,L - \delta_R\,A\,R - \rho\,A,
\end{aligned}
\end{equation}
where $C=1-L-R-A$ and all constant parameters are positive. Throughout
the non-impulsive formulation, the crisis input
$\sigma\colon[0,\infty)\to[0,\infty)$ is assumed locally bounded and
piecewise continuous. Equivalently, one may allow locally integrable
nonnegative inputs and interpret solutions in the Carath\'eodory sense.
The extended model is still formulated on a conserved electorate. What changes is not the size of
the electorate but its internal composition: crises can now shift voters into
disengagement, and some of those voters may later be reabsorbed into the
mainstream or mobilised by radical blocs. When $\sigma(t)\equiv0$, the face $\{A=0\}$ is forward-invariant; any
trajectory starting with $A(0)=0$ remains there and the model reduces
exactly to the baseline system~\eqref{eq:model}. Moreover, every equilibrium
of the autonomous post-shock system has $A^*=0$, since
\[
  \dot A=-A(\delta_LL+\delta_RR+\rho)
\]
and the bracket is strictly positive (as $\rho>0$). Hence, for fixed
post-shock parameters, the equilibrium set of the extended system coincides
with the equilibrium set of the corresponding baseline system on the face
$A=0$: the disengagement compartment can amplify transients, but it does
not by itself create a new long-run equilibrium once the crisis input has
ended. Trajectories that start in $\mathcal{T}$ remain in $\mathcal{T}$ for
all $t\ge0$; the proof is given in Appendix~\ref{app:proofs4}.

The new terms have a direct political interpretation. The bilinear factors
$\delta_L A L$ and $\delta_R A R$ describe mobilisation of disengaged voters
into existing radical blocs. As in the baseline recruitment terms, mobilisation
requires both a recruitable pool and a pre-existing radical seed.
Consequently, if $L(t_0)=R(t_0)=0$ then $L(t)\equiv R(t)\equiv 0$ for all
later times under the dynamics considered here, regardless of the size of
$A(t_0)$ and regardless of subsequent parameter shifts within this model
class. A radical movement must therefore have some pre-existing
organisational or electoral core before it can mobilise disengaged voters;
the model does not describe how such a seed forms. The term $\rho A$
describes re-engagement from disengagement into the mainstream, and
$\sigma(t)C$ moves part of the current mainstream pool into disengagement
during a crisis. The additional parameters are summarised in
Table~\ref{tab:params4}.

\begin{table}[ht]
\caption{New parameters in the extended model. Surge conditions on
$\delta_L,\delta_R$ are imposed in Section~\ref{sec:legacy}.}\label{tab:params4}
\centering
\renewcommand{\arraystretch}{1.4}
\begin{tabular}{lll}
\toprule
Parameter & Interpretation & Range \\
\midrule
$\delta_L,\delta_R$ & Mobilisation: disengagement $\to$ radical & $(0,\infty)$ \\
$\rho$ & Re-engagement: disengagement $\to$ mainstream & $(0,\infty)$ \\
$\sigma(t)$ & Crisis input: mainstream $\to$ disengagement & $\sigma(t)\ge0$ \\
\bottomrule
\end{tabular}
\end{table}

The global equilibrium classification below does not require any ordering
between $\delta_i$ and the baseline recruitment parameters. Additional
inequalities are needed only for the transient-surge calculation. In the
symmetric reduction, the relevant comparison is not $\delta>\alpha$ but
$\delta>\beta=\alpha+\gamma$: the disengaged pool must be a more effective
source of radical mobilisation than the mainstream pool. A feasible
near-mainstream state-shock threshold $\Delta_c\in(0,1)$ requires the
stronger regime $\delta>\mu>\beta$. In the asymmetric model, the analogous
condition is the spectral inequality $\Phi(1)>1$
(Theorem~\ref{thm:PF_Delta} in Appendix~\ref{app:proofs4}), which does not
in general reduce to a componentwise comparison between $\delta_i$ and
$\alpha_i+\gamma_{ji}$.

Several modelling choices delimit the scope of the extended system.
Crisis-induced disengagement is drawn only from $C$ in keeping with the
definition of $A(t)$ as excess disengagement from mainstream party
competition; the asymmetry between mainstream and committed radical
attachment is consistent with survey evidence on differential stability of
party identification across blocs \citep{dalton2004,gles2021}. There is no
reactive mobilisation term of the form $\gamma^A_{ij}X_jA$: mobilisation
from $A$ is modelled as intra-bloc mobilisation by existing radical support,
while cross-wing reactive polarisation is routed through the mainstream
pool as in the baseline system. Finally, parameters are constant within
each post-shock regime; structural crises are represented as discrete jumps
in the parameter vector (Definition~\ref{def:structural}) rather than as
continuous parameter drift. These restrictions isolate the minimal
mechanism needed to separate transient amplification from long-run
realignment.

A temporary crisis shock is represented as an instantaneous transfer of part
of the current mainstream pool into disengagement.

\begin{definition}[State shock]\label{def:impulse}
At time $t_0$, a \emph{state shock} of amplitude $\Delta\in(0,1)$ is the jump
\begin{equation}\label{eq:jump}
  C(t_0^+)=(1-\Delta)\,C(t_0^-),\qquad
  A(t_0^+)=A(t_0^-)+\Delta\,C(t_0^-),
\end{equation}
with $L$ and $R$ unchanged. After the jump the system evolves autonomously
with $\sigma\equiv0$.
\end{definition}

\begin{remark}[Dirac representation of the impulse]
Equivalently, the jump can be represented by a Dirac impulse in $\sigma$ of
mass $s$ at $t_0$, with $\sigma$ otherwise zero, related to the amplitude
by $\Delta=1-e^{-s}$. We use the jump form throughout.
\end{remark}

\begin{definition}[Structural shock]\label{def:structural}
Let $\theta$ denote the vector of constant model parameters. A
\emph{structural shock} at time $t_0$ is a permanent change
\[
  \theta^-\longrightarrow\theta^+
\]
that remains in force for all $t>t_0$. It may, but need not, be accompanied
by the state shock in Definition~\ref{def:impulse}. We distinguish three
cases:
\begin{itemize}
  \item if $\theta^+=\theta^-$, the event is a \emph{pure state shock};
  \item if the state is continuous at $t_0$ but $\theta^+\ne\theta^-$, the
    event is a \emph{pure structural shock};
  \item if both components are present, the event is a \emph{mixed shock}.
\end{itemize}
In the symmetric reduction considered below, the structural component is
represented by a scalar shift $\beta^+=\beta^-+\Delta\beta$, with the
remaining parameters unchanged.
\end{definition}

Substantively, the distinction is between crises that temporarily redistribute
voters and crises that durably alter the environment in which radical
recruitment takes place. The analytical consequences of this distinction are
the subject of the next section.

\section{When Do Crises Leave a Lasting Political Legacy?}\label{sec:legacy}

The previous section introduced the extended model and the distinction
between state shocks and structural shocks. This section derives the
consequences of that distinction. Three results are central. First, after
the crisis input has ended, the autonomous system has the same threshold
structure as the baseline model: disengagement disappears at equilibrium
and the long-run regime is governed by a Perron--Frobenius recruitment
threshold. Second, in the symmetric reduction, a pure state shock can
generate a transient radical surge; the surge has an explicit critical
amplitude and a finite mobilisation-window bound. Third, a lasting change
in the long-run radical-support floor occurs only when the structural
component of the shock moves the post-crisis parameters across the
threshold. Thus disengagement governs transient amplification, whereas
structural parameter shifts govern long-run realignment.

Throughout this section, unless otherwise stated, all parameters are the
post-shock parameters of the autonomous system after the crisis input has
ended ($\sigma\equiv0$). For the post-shock system \eqref{eq:4group}, set
\[
  K:=\begin{pmatrix}
    \alpha_L & \gamma_{RL}\\
    \gamma_{LR} & \alpha_R
  \end{pmatrix},
  \qquad
  M:=\begin{pmatrix}
    \mu_L & 0\\
    0 & \mu_R
  \end{pmatrix},
\]
and define the radical-support threshold
\[
  \mathcal R_{\mathrm{rad}}:=\lambda_{\mathrm{PF}}(M^{-1}K).
\]
This is the extended model's analogue of the baseline threshold
$\mathcal R_{\mathrm{base}}$: it compares recruitment and reactive
polarisation against reabsorption into the mainstream. Note that
$\mathcal R_{\mathrm{rad}}$ depends on recruitment, polarisation, and
reabsorption parameters, but not on the mobilisation rates $\delta_i$ or
the re-engagement rate $\rho$. The latter affect the transient use and
depletion of the disengagement reservoir; they do not affect the
equilibrium threshold once the crisis input has ended.

\begin{theorem}[Global post-shock dynamics]\label{thm:global4}
Consider the autonomous post-shock system \eqref{eq:4group} with
$\sigma\equiv0$.
\begin{enumerate}
  \item If $\mathcal R_{\mathrm{rad}}\le1$, the radical-free equilibrium
  $E_0=(L,R,A)=(0,0,0)$ is globally asymptotically stable on $\mathcal T$.
  \item If $\mathcal R_{\mathrm{rad}}>1$, there is a unique positive
  radical-support equilibrium $E_1$. Explicitly, if $v\gg0$ is the right
  Perron--Frobenius eigenvector of $M^{-1}K$ normalised by $v_L+v_R=1$, then
  \[
    (L^*,R^*)=\left(1-\frac{1}{\mathcal R_{\mathrm{rad}}}\right)v,
    \qquad
    C^*=\frac{1}{\mathcal R_{\mathrm{rad}}},
    \qquad
    A^*=0.
  \]
  The radical-free axis
  $\Gamma:=\{(L,R,A)\in\mathcal T:\ L=R=0\}$
  is forward invariant and converges to $E_0$. Every trajectory starting in
  $\mathcal T\setminus\Gamma$ converges to $E_1$.
\end{enumerate}
\end{theorem}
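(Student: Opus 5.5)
The plan is to reduce the four-dimensional post-shock system to the baseline system~\eqref{eq:model} by showing that $A(t)\to0$ exponentially. The $(L,R)$-subsystem is then asymptotically autonomous with limit equation~\eqref{eq:model}, so I can invoke Theorem~\ref{thm:baseline_threshold} through the theory of asymptotically autonomous semiflows (Markus; Thieme's chain-recurrence version).

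\emph{Step 1 (equilibria and decay of $A$).} With $\sigma\equiv0$,
\[
\dot A=-A(\delta_LL+\delta_RR+\rho)\le-\rho A,
\]
so $A(t)\le A(0)e^{-\rho t}$ on the forward-invariant set $\mathcal T$. Every equilibrium therefore has $A^*=0$. On the face $A=0$ the equilibrium conditions are exactly those of the baseline model, and the proof of Theorem~\ref{thm:baseline_threshold} gives the formula for $E_1$ with $C^*=1/\mathcal R_{\mathrm{rad}}$.

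The axis $\Gamma$ is forward invariant because $\dot L$ and $\dot R$ vanish when $L=R=0$. On $\Gamma$ the exponential decay of $A$ sends the trajectory to $E_0$.

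Off $\Gamma$, suppose say $L(0)>0$. Then $\dot R\ge \gamma_{LR}LC-(\mu_R-\alpha_RC)R$, and $C>0$ for small $t>0$ (or $C$ becomes positive immediately, since $\dot C=\rho A+\mu_LL+\mu_RR>0$ whenever $C=0$). Hence $L(t),R(t)>0$ for all $t>0$. This is where the strict positivity of the cross-couplings enters.

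\emph{Step 2 (limit sets).} Write $x(t)=(L,R)$. Then $\dot x=F(x)+G(x,A(t))$, where $F$ is the baseline vector field and $|G|\le cA(t)\to0$ uniformly on $\mathcal T$. By Thieme's theorem, the $\omega$-limit set $\Omega$ of $(L,R,A)$ lies in $\{A=0\}$, and its projection is a compact, connected, invariant, internally chain-transitive set of the baseline flow on $\Simp$.

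If $\mathcal R_{\mathrm{rad}}\le1$, Theorem~\ref{thm:baseline_threshold} says $E_0$ is globally asymptotically stable for the baseline flow. The only chain-transitive set is then $\{E_0\}$, which gives part~1.

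If $\mathcal R_{\mathrm{rad}}>1$, the baseline flow has $E_1$ globally attracting on $\Simp\setminus\{E_0\}$ and no periodic orbits. Any chain-transitive set is therefore $\{E_0\}$ or $\{E_1\}$. A cycle $E_0\to E_1\to E_0$ is impossible because no baseline orbit converges to $E_0$ in backward time from $E_1$'s basin while also converging forward to $E_0$, except $E_0$ itself.

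\emph{Step 3 (excluding $E_0$ off $\Gamma$, the main obstacle).} I must show that a trajectory with $x(t)\gg0$ cannot have $\Omega=\{E_0\}$ when $\mathcal R_{\mathrm{rad}}>1$. Since $K-M$ is an irreducible Metzler matrix, it has a left Perron vector $w\gg0$ with eigenvalue $s:=s(K-M)$, and $s>0$ exactly when $\lambda_{\mathrm{PF}}(M^{-1}K)>1$ (the standard equivalence used in Appendix~\ref{app:proofs}).

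The $\delta$-terms are nonnegative, so
\[
\tfrac{d}{dt}(w^\top x)\ge w^\top(CK-M)x=s\,w^\top x-(1-C)\,w^\top Kx.
\]
If $\Omega=\{E_0\}$, then $C\to1$. For large $t$ this gives $\tfrac{d}{dt}(w^\top x)\ge (s/2)\,w^\top x$ with $w^\top x>0$, so $w^\top x$ grows exponentially and cannot tend to $0$. This contradicts $x\to0$.

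Hence $\Omega=\{E_1\}$, i.e.\ the trajectory converges to $(L^*,R^*,0)$. Uniqueness of $E_1$ follows from Step 1. The delicate points are the precise form of Thieme's result on chain-transitive sets and checking that the baseline chain-recurrent set is exactly $\{E_0,E_1\}$ without heteroclinic cycles; I expect this to follow directly from the global statements of Theorem~\ref{thm:baseline_threshold}.
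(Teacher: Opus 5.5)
Your route differs from the paper's mainly in part~1 and in how the limit sets are identified. The paper proves part~1 with an explicit strict Lyapunov function $W=q^\top x+\eta A$. Here $q\gg0$ is the left Perron vector of $K-M$, and $\eta$ is chosen so large that $-\eta\delta_iAX_i$ dominates the indefinite cross terms $A\,(q^\top(D-K))_iX_i$. This gives stability and global attraction at once, without invoking the baseline theorem. In part~2 the paper uses the same Perron-weighted persistence estimate as your Step~3, but directly rather than by contradiction. The bound $\dot W\ge[s_0-\kappa A(t)-(c_K/q_{\min})W]W$ gives $W\ge\varepsilon$ eventually, so the $\omega$-limit set is a compact invariant subset of the face $A=0$ bounded away from $E_0$. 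Theorem~\ref{thm:global_asym} then identifies it as $\{E_1\}$. Your chain-transitivity reduction is more modular: it transfers whatever global structure the baseline flow has, for any $A$-equation with $A\to0$. The price is heavier machinery. You can also drop the asymptotically autonomous framework altogether. The $(L,R,A)$ system is itself autonomous on the compact set $\mathcal T$, so its $\omega$-limit sets are invariant and internally chain transitive. Since they lie in the invariant face $\{A=0\}$, they are internally chain transitive for the baseline flow.

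Two points need repair.

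\textbf{Stability in part~1.} Part~1 asserts global asymptotic \emph{stability}, but your argument only yields global attractivity. For $\mathcal R_{\mathrm{rad}}<1$ this is harmless, since $J_4(E_0)=\mathrm{blockdiag}(K-M,-\rho)$ is Hurwitz. At $\mathcal R_{\mathrm{rad}}=1$, however, the linearisation is nonhyperbolic and attractivity alone does not imply stability. A short fix: with $D=\mathrm{diag}(\delta_L,\delta_R)$, $s=s(K-M)\le0$ and $\bar\kappa^+:=\max\{0,\max_i(w^\top(D-K))_i/w_i\}$, one has $\tfrac{d}{dt}(w^\top x)=s\,w^\top x+A\,w^\top(D-K)x-(L+R)\,w^\top Kx\le\bar\kappa^+A\,w^\top x$. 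Hence $w^\top x(t)\le w^\top x(0)\,e^{\bar\kappa^+A(0)/\rho}$, and $A(t)\le A(0)$, which gives Lyapunov stability. Alternatively, use the paper's $W$.

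\textbf{The chain-transitivity dichotomy.} Your reason why every internally chain-transitive set of the supercritical baseline flow is $\{E_0\}$ or $\{E_1\}$ is not the right one. Absence of periodic orbits is not what matters, and the sentence about orbits converging to $E_0$ in both time directions does not address $\varepsilon$-chains. The correct argument is an attractor--repeller decomposition. $E_1$ is asymptotically stable with basin $\Simp\setminus\{E_0\}$, so $\{E_1\}$ is an attractor with dual repeller $\{E_0\}$. The chain-recurrent set is therefore $\{E_0,E_1\}$, and connectedness forces each internally chain-transitive set to be a single point. Alternatively, skip chain recurrence: once your Step~3 estimate gives $w^\top x\ge\varepsilon$ eventually, any compact invariant set of the baseline flow inside $\{w^\top x\ge\varepsilon\}$ must be $\{E_1\}$, by global convergence and Lyapunov stability of $E_1$.

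With these two repairs your proof is complete.
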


\begin{proof}
See Appendix~\ref{app:proofs4}.
\end{proof}

Theorem~\ref{thm:global4} shows that the disengaged compartment is transient
at equilibrium. It can affect the path by which the system moves, but the
long-run regime is still determined by the structural recruitment threshold.
In the asymmetric model this threshold is Perron--Frobenius; in the symmetric
case it reduces to a scalar condition.

For closed-form shock calculations we use the symmetric parameter regime
\[
  \alpha_L=\alpha_R=\alpha,\qquad
  \mu_L=\mu_R=\mu,\qquad
  \gamma_{LR}=\gamma_{RL}=\gamma,\qquad
  \delta_L=\delta_R=\delta,
\]
and write
\[
  S(t):=L(t)+R(t),\qquad \beta:=\alpha+\gamma.
\]
A direct computation shows that the total radical share $S$ and the
disengagement share $A$ then satisfy the closed two-dimensional system
\begin{equation}\label{eq:2d}
\begin{aligned}
  \dot{S} &= S\bigl[(\beta-\mu)-\beta S+(\delta-\beta)A\bigr],\\
  \dot{A} &= -(\delta S+\rho)A,
\end{aligned}
\end{equation}
on the feasible region
$\mathcal F=\{(S,A):\ S\ge0,\ A\ge0,\ S+A\le1\}$. This reduction does not
require an equal-wing initial condition: under symmetric parameters, the
$(S,A)$-system is closed for arbitrary initial $(L_0,R_0,A_0)$, and the
wing imbalance $L-R$ decouples (see Appendix~\ref{app:proofs4}). On the
invariant equal-wing subspace $L=R=:P$ one has $S=2P$, recovering the
per-wing formulation.

\begin{theorem}[Equilibria of the symmetric extended model]\label{thm:eq4}
System \eqref{eq:2d} has no equilibrium with $A^*>0$. Its equilibria are
$(S,A)=(0,0)$ and, when $\beta>\mu$,
\[
  (S^*,A^*)=\left(1-\frac{\mu}{\beta},\,0\right).
\]
The corresponding equilibrium of the full symmetric four-group system on
the equal-wing subspace $L=R$ is
\[
  L^*=R^*=\tfrac12\!\left(1-\tfrac{\mu}{\beta}\right),
  \qquad
  C^*=\tfrac{\mu}{\beta},
  \qquad
  A^*=0.
\]
\end{theorem}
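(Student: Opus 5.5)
The plan is to solve the equilibrium equations of \eqref{eq:2d} directly on $\mathcal F$, and then lift the result to the four-group system.

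\textbf{Step 1: no equilibrium with $A^*>0$.} At an equilibrium, $\dot A=0$ gives $(\delta S^*+\rho)A^*=0$. On $\mathcal F$ we have $S^*\ge0$, and $\delta,\rho>0$, so $\delta S^*+\rho\ge\rho>0$. Hence $A^*=0$. This is the symmetric specialisation of the observation in Section~\ref{sec:extended} that $\dot A=-A(\delta_LL+\delta_RR+\rho)$ when $\sigma\equiv0$.

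\textbf{Step 2: the remaining equations.} With $A^*=0$, the equation $\dot S=0$ reduces to $S^*\bigl[(\beta-\mu)-\beta S^*\bigr]=0$. This gives either $S^*=0$ or $S^*=1-\mu/\beta$. The second root lies in $\mathcal F$ and is distinct from $0$ exactly when $0<1-\mu/\beta\le1$. Since $\mu>0$, this is equivalent to $\beta>\mu$. When $\beta\le\mu$ the second root is nonpositive, so it either coincides with the origin or is infeasible. This yields the stated list of equilibria.

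\textbf{Step 3: lifting to the four-group system.} Substitute $L=R=P$ and $A=0$ into \eqref{eq:4group} with $\sigma\equiv0$. Because $A$ is zero, the $\delta$ terms vanish, and the $L$-equation becomes $P\bigl[(\alpha+\gamma)C-\mu\bigr]=0$ with $C=1-2P$. This is consistent with the reduction: $S=2P=1-\mu/\beta$, so $C^*=\mu/\beta$ and $L^*=R^*=\tfrac12(1-\mu/\beta)$. The $R$-equation is identical by symmetry, and the $A$-equation holds trivially.

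As a cross-check, I would note that this agrees with Theorem~\ref{thm:global4}. In the symmetric case $M^{-1}K=\mu^{-1}\begin{pmatrix}\alpha&\gamma\\ \gamma&\alpha\end{pmatrix}$, whose Perron root is $\beta/\mu$ with normalised eigenvector $v=(\tfrac12,\tfrac12)$.

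The only point requiring care, and the main obstacle, is the claim that the $(S,A)$ reduction is closed. This amounts to summing the $L$ and $R$ equations and checking the stated algebra. Using $C=1-S-A$,
\[
\dot S=\beta S(1-S-A)+\delta AS-\mu S,
\]
which rearranges to the first line of \eqref{eq:2d}. Similarly, $\dot A=-\delta A(L+R)-\rho A$ gives the second line. Feasibility of the equilibrium (that $C^*\ge0$) is automatic, since $C^*=\mu/\beta\in(0,1)$.
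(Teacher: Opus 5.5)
Your proposal is correct and follows the paper's route. The $\dot A$-equation forces $A^*=0$ because $\delta S+\rho\ge\rho>0$ on $\mathcal F$, and substituting into $\dot S=0$ leaves $S^*=0$ or $S^*=1-\mu/\beta$, which is feasible and nonzero iff $\beta>\mu$. Your explicit checks of the $(S,A)$ reduction and of the lift to $L^*=R^*=\tfrac12(1-\mu/\beta)$, $C^*=\mu/\beta$ in the four-group system fill in computations the paper leaves implicit.
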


\begin{proof}
Since $\dot A=-(\delta S+\rho)A$ and $\delta S+\rho>0$, every equilibrium
has $A^*=0$. Substitution into $\dot S=0$ gives $S^*=0$ or
$S^*=1-\mu/\beta$, the latter feasible iff $\beta>\mu$.
\end{proof}

\begin{remark}[Interpretation of $A(t)\to0$]\label{rem:Abase}
The variable $A(t)$ represents excess crisis-induced disengagement above
ordinary background abstention. Thus $A(t)\to0$ means that the post-crisis
disengagement reservoir has been depleted: voters have either been reabsorbed
into the mainstream or mobilised into radical support. The transient phase can
nevertheless be politically important, especially when the re-engagement rate
$\rho$ is small.
\end{remark}

The stability classification mirrors the baseline model. If $\beta\le\mu$,
the origin $(S,A)=(0,0)$ is globally asymptotically stable on $\mathcal F$.
If $\beta>\mu$, the edge $\{S=0\}$ is invariant; every trajectory with
$S_0>0$ converges to the positive equilibrium of Theorem~\ref{thm:eq4}.
The proof is given in Appendix~\ref{app:proofs4}.

By the bilinear-mobilisation observation of Section~\ref{sec:extended}, the
shock results below are conditional on a nonzero radical seed,
$S_0=L_0+R_0>0$. The radical-free axis $\{S=0\}$ is invariant: a state
shock into a zero radical baseline cannot generate radical support,
regardless of the size of $A$ or any subsequent parameter shift.

We next quantify when a state shock creates a transient radical surge.
Suppose the system is initially close to the radical-free state, with
$S(t_0^-)=\varepsilon>0$, $A(t_0^-)=0$, and $0<\varepsilon\ll1$. A state
shock of amplitude $\Delta$ places $A(t_0^+)=\Delta\,C(t_0^-)$.

\begin{theorem}[Critical state-shock amplitude]\label{thm:critical}
Assume $\beta<\mu$. In the near-mainstream limit $\varepsilon\to0$, the
total radical share initially grows after the state shock if and only if
\begin{equation}\label{eq:growth_condition}
  (\beta-\mu)+(\delta-\beta)\Delta>0.
\end{equation}
Equivalently:
\begin{enumerate}
  \item if $\delta\le\beta$, no feasible state shock produces initial
  near-mainstream radical growth;
  \item if $\delta>\beta$, initial growth occurs if and only if
  \begin{equation}\label{eq:Delta_c}
    \Delta>\Delta_c:=\frac{\mu-\beta}{\delta-\beta};
  \end{equation}
  \item the threshold is feasible, $\Delta_c\in(0,1)$, if and only if
  $\delta>\mu>\beta$.
\end{enumerate}
\end{theorem}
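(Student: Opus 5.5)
The plan is to evaluate the reduced system \eqref{eq:2d} at the post-shock state and pass to the limit $\varepsilon\to0$. Immediately after the jump of Definition~\ref{def:impulse}, $S(t_0^+)=\varepsilon$, because $L$ and $R$ are unchanged. Also $C(t_0^-)=1-\varepsilon$ (since $A(t_0^-)=0$), so $A(t_0^+)=\Delta(1-\varepsilon)$. Substituting into the first line of \eqref{eq:2d} gives
\[
  \dot S(t_0^+)=\varepsilon\Bigl[(\beta-\mu)-\beta\varepsilon+(\delta-\beta)\Delta(1-\varepsilon)\Bigr].
\]
Since $\varepsilon>0$, the sign of $\dot S(t_0^+)$ is the sign of the bracket. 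I will define ``initial growth in the near-mainstream limit'' through the per-capita growth rate $\dot S/S$ at $t_0^+$. As $\varepsilon\to0$, this rate converges to $g(\Delta):=(\beta-\mu)+(\delta-\beta)\Delta$. The criterion \eqref{eq:growth_condition} is then exactly $g(\Delta)>0$: for strict inequality the bracket is positive for all sufficiently small $\varepsilon$, and for strict reverse inequality it is negative. I will state explicitly that ``if and only if'' is read in this limiting sense, since the borderline case $g(\Delta)=0$ is decided by the $O(\varepsilon)$ correction, which is negative there, so no growth occurs.

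The three enumerated items are then elementary case analysis on the affine function $g$, using $\beta<\mu$ so that $g(0)=\beta-\mu<0$. For (1), if $\delta\le\beta$ then $g$ is nonincreasing in $\Delta$, so $g(\Delta)\le g(0)<0$ for every $\Delta\in(0,1)$ and growth never occurs. For (2), if $\delta>\beta$ then $g$ is strictly increasing and vanishes at $\Delta_c=(\mu-\beta)/(\delta-\beta)$, so $g(\Delta)>0$ exactly when $\Delta>\Delta_c$. For (3), $\Delta_c>0$ holds automatically because numerator and denominator are positive under $\beta<\mu<\ldots$ and $\delta>\beta$. Moreover $\Delta_c<1$ iff $\mu-\beta<\delta-\beta$, that is iff $\delta>\mu$, which combined with the standing assumption gives $\delta>\mu>\beta$. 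Conversely, $\delta>\mu>\beta$ implies $\delta>\beta$, so $\Delta_c$ is defined and lies in $(0,1)$. If $\delta\le\beta$, $\Delta_c$ is undefined or nonpositive, which is consistent with (1).

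The only step needing care is the first: making precise what ``initially grows in the near-mainstream limit'' means. The $O(\varepsilon)$ terms must not change the sign except at the boundary case. I would handle this with the continuity argument above, optionally noting that $\dot S/S$ is continuous in time, so strict positivity at $t_0^+$ gives growth on a short interval. Everything else is algebra.
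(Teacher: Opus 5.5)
Your proposal is correct and takes the same approach as the paper: evaluate $\dot S/S=(\beta-\mu)-\beta S+(\delta-\beta)A$ from \eqref{eq:2d} at the post-shock state, let $\varepsilon\to0$ to get $(\beta-\mu)+(\delta-\beta)\Delta$, and then do the elementary sign analysis for the three enumerated items. Your explicit use of $A(t_0^+)=\Delta(1-\varepsilon)$ and your check of the borderline case $g(\Delta)=0$, where the correction $-\varepsilon\bigl(\beta+(\delta-\beta)\Delta\bigr)$ is negative, add a little rigour over the paper's terser proof, which leaves that case implicit.
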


\begin{proof}
From \eqref{eq:2d}, $\dot S/S=(\beta-\mu)-\beta S+(\delta-\beta)A$. At the
post-shock state and in the limit $\varepsilon\to0$ this equals
$(\beta-\mu)+(\delta-\beta)\Delta$, giving \eqref{eq:growth_condition}.
For (i), the left-hand side is nonpositive for all $\Delta\in[0,1]$ when
$\delta\le\beta$. For (ii), rearrangement of \eqref{eq:growth_condition}
under $\delta>\beta$ yields $\Delta>(\mu-\beta)/(\delta-\beta)=\Delta_c$.
For (iii), $\Delta_c<1$ iff $\mu-\beta<\delta-\beta$, i.e.\ $\delta>\mu$.
\end{proof}

The politically relevant regime $\delta>\mu>\beta$ admits a direct
interpretation. Stronger reabsorption $\mu$ raises the resilience
threshold; stronger mobilisation $\delta$ lowers it; proximity to the
structural threshold $\beta=\mu$ makes the system increasingly sensitive,
$\Delta_c\to 0^+$ as $\beta\to\mu^-$.

\begin{theorem}[Mobilisation-window bound]\label{thm:window}
Assume $\beta<\mu<\delta$, and let $\Delta_c=(\mu-\beta)/(\delta-\beta)$.
For a pure state shock of amplitude $\Delta>\Delta_c$ at time $t_0$ with
$A(t_0^-)=0$, set $\tau:=t-t_0$ and
\begin{equation}\label{eq:tstar}
  \tau^*:=\frac{1}{\rho}\ln\!\left(\frac{\Delta}{\Delta_c}\right)
        =\frac{1}{\rho}\ln\!\left(\frac{(\delta-\beta)\Delta}{\mu-\beta}\right).
\end{equation}
Then $\dot S(\tau)<0$ for every $\tau>\tau^*$ with $S(\tau)>0$. In
particular, the radical surge window is contained in $[0,\tau^*]$.
\end{theorem}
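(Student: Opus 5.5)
The plan is to work entirely in the closed symmetric system \eqref{eq:2d}. The argument bounds the disengagement reservoir $A(\tau)$ by an explicit decaying exponential, and then reads off the sign of $\dot S$ from the per-capita growth rate
\[
g(\tau):=\dot S/S=(\beta-\mu)-\beta S+(\delta-\beta)A .
\]
This is the same quantity used in the proof of Theorem~\ref{thm:critical}, but now evaluated along the whole post-shock trajectory rather than only at $\tau=0^+$.

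\textbf{Step 1: the initial reservoir.} By Definition~\ref{def:impulse} with $A(t_0^-)=0$, we have
\[
A(0^+)=\Delta\,C(t_0^-)=\Delta\bigl(1-S(t_0^-)\bigr)\le\Delta .
\]
After the jump the system is autonomous with $\sigma\equiv0$. By forward invariance of $\mathcal T$ (Appendix~\ref{app:proofs4}), which carries over to $\mathcal F$, we have $S(\tau)\ge0$ and $A(\tau)\ge0$ for all $\tau\ge0$.

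\textbf{Step 2: explicit decay of $A$.} The $A$-equation in \eqref{eq:2d} is linear in $A$ with a time-varying coefficient. It therefore integrates exactly to
\[
A(\tau)=A(0^+)\exp\!\Bigl(-\int_0^\tau(\delta S(s)+\rho)\,ds\Bigr).
\]
Since $\delta S\ge0$, this gives
\[
0\le A(\tau)\le A(0^+)\,e^{-\rho\tau}\le\Delta e^{-\rho\tau}.
\]
The bound deliberately discards the extra depletion caused by mobilisation ($\delta S$). This is why the result is only a bound on the window rather than its exact length.

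\textbf{Step 3: sign of the growth rate.} The hypothesis $\beta<\mu<\delta$ gives $\delta-\beta>0$. Together with $-\beta S\le0$ and Step 2, this yields
\[
g(\tau)\le(\beta-\mu)+(\delta-\beta)\Delta e^{-\rho\tau}.
\]
The right-hand side is strictly decreasing in $\tau$. It is negative exactly when $\Delta e^{-\rho\tau}<(\mu-\beta)/(\delta-\beta)=\Delta_c$, that is, when $\tau>\rho^{-1}\ln(\Delta/\Delta_c)=\tau^*$. Here $\tau^*>0$ because $\Delta>\Delta_c$. Hence for every $\tau>\tau^*$ with $S(\tau)>0$ we get $\dot S(\tau)=S(\tau)\,g(\tau)<0$. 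The second expression for $\tau^*$ in \eqref{eq:tstar} is just the substitution of $\Delta_c$. Consequently any interval on which $\dot S\ge0$ with $S>0$ lies inside $[0,\tau^*]$, which is the containment claim.

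\textbf{Main point of care.} I do not expect a genuine obstacle. The only delicate point is making each inequality go in the right direction. First, the sign $\delta-\beta>0$ is needed to replace $A$ by its upper bound. Second, $C(t_0^-)\le1$ is needed so the bound holds for an arbitrary pre-shock seed, not only in the $\varepsilon\to0$ limit. Third, the hypothesis $S(\tau)>0$ is what turns $g<0$ into the strict inequality $\dot S<0$. The strictness of the bound on $g$ for $\tau>\tau^*$ comes from the strict monotonicity of $e^{-\rho\tau}$, so no case split on whether $S$ or $\delta S$ vanishes is required.
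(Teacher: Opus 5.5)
Your proposal is correct and matches the paper's proof essentially line for line: bound $A(\tau)\le\Delta e^{-\rho\tau}$ using $\dot A\le-\rho A$, drop the nonpositive term $-\beta S$, and read off the sign of $(\beta-\mu)+(\delta-\beta)\Delta e^{-\rho\tau}$ for $\tau>\tau^*$. Your explicit check that $A(0^+)=\Delta\,C(t_0^-)\le\Delta$ fills in a step the paper leaves implicit, and it confirms that the bound holds for any pre-shock seed, not only in the limit $\varepsilon\to0$.
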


\begin{proof}
Since $\dot A=-(\delta S+\rho)A\le-\rho A$, integration gives
$A(\tau)\le\Delta e^{-\rho\tau}$. Using \eqref{eq:2d},
\[
  \frac{\dot S(\tau)}{S(\tau)}
  =(\beta-\mu)-\beta S(\tau)+(\delta-\beta)A(\tau)
  \le(\beta-\mu)+(\delta-\beta)\Delta e^{-\rho\tau},
\]
because $-\beta S\le0$. The right-hand side equals zero at $\tau=\tau^*$
and is strictly negative for $\tau>\tau^*$.
\end{proof}

The bound holds without a small-$S$ approximation: the term $-\beta S$
strictly accelerates the suppression of radical growth.

\begin{corollary}[Comparative statics of the mobilisation window]\label{cor:levers}
The window bound $\tau^*$ in Theorem~\ref{thm:window} depends on three
quantities: the re-engagement rate $\rho$, which enters as $1/\rho$; the
shock ratio $\Delta/\Delta_c$, which enters logarithmically; and the
structural distance from threshold $\mu-\beta$, which becomes decisive
when $\beta$ is close to $\mu$. Increasing re-engagement shortens the
window directly, while reducing baseline recruitment and reactive
polarisation both raises $\Delta_c$ and shortens the transient phase.
\end{corollary}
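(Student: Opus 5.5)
The plan is to read off all three claims of Corollary~\ref{cor:levers} from the closed form of $\tau^*$ in Theorem~\ref{thm:window}, treating the remaining quantities as fixed in the admissible regime $\beta<\mu<\delta$, $\Delta_c<\Delta<1$. First, I will write
\[
  \tau^*=\frac{1}{\rho}\,\ln\!\left(\frac{\Delta}{\Delta_c}\right),\qquad
  \Delta_c=\frac{\mu-\beta}{\delta-\beta},
\]
which already displays the three channels named in the statement. The re-engagement rate $\rho$ enters only through the prefactor $1/\rho$, since $\Delta_c$ does not involve $\rho$. The shock amplitude enters only through $\ln(\Delta/\Delta_c)$. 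Because $\Delta>\Delta_c$, the logarithm is strictly positive, so $\partial\tau^*/\partial\rho=-\rho^{-2}\ln(\Delta/\Delta_c)<0$. This gives the claim that increasing re-engagement shortens the window directly.

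Second, I will treat $\beta=\alpha+\gamma$ as a single lever, since $\alpha$ and $\gamma$ enter only through $\beta$. Differentiating gives
\[
  \frac{\partial\Delta_c}{\partial\beta}
  =\frac{-(\delta-\beta)+(\mu-\beta)}{(\delta-\beta)^2}
  =-\frac{\delta-\mu}{(\delta-\beta)^2}<0,
\]
using $\delta>\mu$. Hence lowering $\alpha$ or $\gamma$ raises $\Delta_c$. Then
\[
  \frac{\partial\tau^*}{\partial\beta}=-\frac{1}{\rho}\,\frac{\partial\ln\Delta_c}{\partial\beta}>0,
\]
so lowering $\beta$ shortens $\tau^*$. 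Equivalently, I can expand
\[
  \tau^*=\rho^{-1}\bigl[\ln\Delta+\ln(\delta-\beta)-\ln(\mu-\beta)\bigr].
\]
The two $\beta$-dependent terms have derivatives $-1/(\delta-\beta)$ and $+1/(\mu-\beta)$. Their sum is positive because $\mu-\beta<\delta-\beta$.

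Third, for the ``decisive near threshold'' claim, I will use this expansion: the term $-\rho^{-1}\ln(\mu-\beta)$ diverges to $+\infty$ as $\beta\to\mu^-$, while the other terms stay bounded. So $\tau^*\to\infty$ and $\Delta_c\to0^+$, matching the remark after Theorem~\ref{thm:critical}. By contrast, $\Delta$ enters only logarithmically and boundedly, since $\Delta<1$.

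The only delicate step is the sign of $\partial\Delta_c/\partial\beta$, which depends on the ordering $\delta>\mu$. I will state explicitly that the corollary is read inside the feasible regime of Theorem~\ref{thm:critical}(iii). I will also note that these are statements about the upper bound $\tau^*$ of the surge window, not about its exact length.
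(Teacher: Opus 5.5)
Your proposal is correct and takes essentially the paper's route: the corollary has no separate proof and is meant to be read off from the closed form \eqref{eq:tstar}, with $\partial\Delta_c/\partial\beta=(\mu-\delta)/(\delta-\beta)^2<0$ and $\Delta_c\to0^+$ as $\beta\to\mu^-$ recorded in Corollary~\ref{cor:Dc}, exactly the facts you use. Your caveat that these are statements about the bound $\tau^*$ rather than the exact window length is a useful precision that the paper's wording glosses over.
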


The preceding results concern transient growth. They do not determine whether
the crisis leaves a permanent political legacy. That question is governed by
the structural component of the shock.

\begin{theorem}[Bifurcation trigger]\label{thm:trigger}
Consider the symmetric system with fixed $\mu$ and initial $\beta_0<\mu$,
in which the structural component of the shock is restricted to a scalar
shift $\beta_0\to\beta_0+\Delta\beta$, with $\mu,\delta,\rho$ unchanged.
For any nonzero radical seed $S_0>0$, the structural shock produces a
permanent shift in the long-run regime if and only if
\begin{equation}\label{eq:trigger}
  \beta_0+\Delta\beta>\mu.
\end{equation}
When \eqref{eq:trigger} holds, the post-shock long-run mainstream share is
\begin{equation}\label{eq:Cinf}
  C^\infty=\frac{\mu}{\beta_0+\Delta\beta}<1,
\end{equation}
and this value is independent of the state-shock amplitude $\Delta$. If
the radical seed is exactly zero, the radical-free axis $\{S=0\}$ remains
invariant and the system returns to the radical-free state regardless
of $\Delta\beta$.
\end{theorem}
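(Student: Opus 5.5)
The plan is to reduce the statement to the global classification of the post-shock autonomous system. Here ``permanent shift in the long-run regime'' means that the $\omega$-limit of the post-shock trajectory differs from the pre-shock attractor. The pre-shock attractor is $E_0$, because $\beta_0<\mu$ and the stability classification following Theorem~\ref{thm:eq4} applies.

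After the shock at $t_0$, the system \eqref{eq:2d} is autonomous with parameters $(\beta^+,\mu,\delta,\rho)$, where $\beta^+=\beta_0+\Delta\beta$. Its initial condition is the post-jump state $(S_0,A(t_0^+))\in\mathcal F$. A state-shock component, if present, only changes $A(t_0^+)$ and $C(t_0^+)$; it leaves $S_0$ unchanged by Definition~\ref{def:impulse}.

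I would then split into two cases.

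\textbf{Case $\beta^+\le\mu$.} The origin is globally asymptotically stable on $\mathcal F$ (the stability classification following Theorem~\ref{thm:eq4}, equivalently Theorem~\ref{thm:global4}(i) with $\mathcal R_{\mathrm{rad}}=\beta^+/\mu\le1$). So $S\to0$ and $A\to0$. The long-run regime is again $E_0$, the same as before the shock, and no permanent shift occurs.

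\textbf{Case $\beta^+>\mu$.} Here $S_0>0$, so the trajectory starts off the invariant edge $\{S=0\}$. The appendix result quoted after Theorem~\ref{thm:eq4} (or Theorem~\ref{thm:global4}(ii) lifted through the symmetric reduction) gives convergence to $(S^*,0)$ with $S^*=1-\mu/\beta^+>0$. Hence
\[
C^\infty=1-S^*-0=\frac{\mu}{\beta_0+\Delta\beta},
\]
which is $<1$, so this is a genuinely different attractor. This proves both directions of \eqref{eq:trigger} and establishes \eqref{eq:Cinf}.

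Independence from $\Delta$ follows because the limit $(S^*,0)$ is the same for every initial point with $S_0>0$. The amplitude $\Delta$ affects only the initial value of $A$, which lies in the basin regardless. I should check that the post-jump state stays in $\mathcal F$: $S_0+A(t_0^+)\le S_0+C(t_0^-)\le 1$, so it does.

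For the zero-seed clause, $\dot S=S[\cdots]$ makes $\{S=0\}$ invariant, since uniqueness of solutions holds for this polynomial vector field. On this edge $\dot A=-\rho A$, so $A\to0$ and the trajectory returns to $(0,0)$ for any $\beta^+$.

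The main obstacle is not computational. It is making sure the cited global convergence result really covers every initial point with $S_0>0$, including points with large $A$ near the face $S+A=1$ and the critical case $\beta^+=\mu$. If the appendix result is only stated for $\beta\le\mu$ in the strict-inequality form, I would handle $\beta^+=\mu$ directly. Since $A(\tau)\le A_0e^{-\rho\tau}$, the term $(\delta-\beta)A$ is integrable in time, so $\dot S/S\le-\beta S+(\delta-\beta)^+A_0e^{-\rho\tau}$. A comparison with the logistic-type equation then forces $S\to0$. I expect to need this same asymptotically autonomous comparison (Markus/Thieme style) for the supercritical case if the appendix argument is unavailable. In that argument $A\to0$ exponentially and the limit equation $\dot S=S[(\beta^+-\mu)-\beta^+S]$ has $S^*$ globally attracting on $S>0$. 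The remaining point is to rule out $S\to0$, which follows because near $S=0$ with $A$ small, $\dot S/S\ge(\beta^+-\mu)/2-\beta^+S-|\delta-\beta^+|A>0$ eventually.
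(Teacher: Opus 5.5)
Your proposal is correct and takes essentially the same route as the paper: the jump leaves $S$ unchanged, and the long-run limit is read off from the global classification of the post-shock symmetric $(S,A)$-system (Proposition~\ref{prop:stab4}), split on $\beta_0+\Delta\beta\le\mu$ versus $\beta_0+\Delta\beta>\mu$, with the zero-seed clause following from invariance of $\{S=0\}$. Your fallback comparison arguments are not needed, since Proposition~\ref{prop:stab4}(i) already covers the critical case $\beta=\mu$ via the Lyapunov function $S+A$; the one small slip is that you should justify $S_0+A(t_0^+)\le1$ by $A(t_0^+)\le A(t_0^-)+C(t_0^-)$, because $A(t_0^+)\le C(t_0^-)$ holds only when $A(t_0^-)=0$.
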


\begin{proof}
See Appendix~\ref{app:proofs4}.
\end{proof}

\begin{remark}[Pure structural shock]\label{rem:pure_structural}
Theorem~\ref{thm:trigger} applies in particular to pure structural shocks
($\Delta=0$ in Definition~\ref{def:structural}). In that case the state is
continuous at $t_0$ and the bifurcation is triggered by the parameter
shift alone, without any turnout disruption. The structural component
$\Delta\beta$ is therefore both necessary and sufficient for a permanent
shift in the long-run regime; the state component governs only the transient
phase.
\end{remark}

\begin{remark}[The fundamental decomposition]\label{rem:decomposition}
Theorem~\ref{thm:trigger} separates the two components of a crisis. The state
component $\Delta$ determines whether radical support rises temporarily and
how long the mobilisation window remains open. It does not determine the
long-run equilibrium. The structural component $\Delta\beta$ determines
whether the long-run equilibrium itself changes. A society can therefore
manage the visible crisis shock while still suffering a durable political
realignment if the crisis permanently shifts the recruitment environment.
\end{remark}

For any nonzero radical seed, once $\beta_0+\Delta\beta>\mu$, return to the
pre-shock radical-free equilibrium is impossible under the autonomous
post-shock dynamics. Recovery requires a structural counter-shock: a
sustained reduction in recruitment or reactive polarisation, or an
increase in reabsorption, that moves the system back below threshold. The
disengaged compartment is not itself the source of irreversibility: it
amplifies transient surges, but lasting change comes from permanent shifts
in the structural parameters.

The asymmetric version of this statement replaces the scalar condition
$\beta>\mu$ by the Perron--Frobenius condition
$\lambda_{\mathrm{PF}}(M_{\mathrm{post}}^{-1}K_{\mathrm{post}})>1$.
The full asymmetric threshold and the corresponding asymmetric shock-window
bounds are stated in Appendix~\ref{app:proofs4}.

Finally, consider a sequence of crises. Let shocks occur at times
$t_1<t_2<\cdots<t_n$, with state components $\Delta_1,\ldots,\Delta_n$ and
nonnegative structural components $\Delta\beta_1,\ldots,\Delta\beta_n$.
Define the cumulative structural parameter after $k$ shocks:
$B_k:=\beta_0+\sum_{j=1}^k\Delta\beta_j$.

\begin{theorem}[Cumulative structural shocks]\label{thm:cumulative}
Assume $\beta_0<\mu$, $\Delta\beta_k\ge0$ for all $k$, and that the
radical seed is nonzero at the first threshold-crossing shock
(equivalently, $S(t_{k^*}^+)>0$ for the shock at which the supercritical
regime is first entered).
\begin{enumerate}
  \item If $B_k\le\mu$, the long-run mainstream share after the $k$-th
  shock is $C_k^\infty=1$. State shocks may create transient surges but do
  not alter the long-run regime.
  \item If $k^*$ is the first index such that $B_{k^*}>\mu$, the $k^*$-th
  shock crosses the structural threshold and the long-run mainstream share
  becomes $C_{k^*}^\infty=\mu/B_{k^*}<1$.
  \item For every later shock with $B_k>\mu$,
  $C_k^\infty=\mu/B_k\le C_{k-1}^\infty$, with strict inequality exactly
  when $\Delta\beta_k>0$.
  \item The threshold-crossing index is
  \begin{equation}\label{eq:kstar}
    k^*=\min\!\left\{k:\sum_{j=1}^k\Delta\beta_j>\mu-\beta_0\right\}.
  \end{equation}
\end{enumerate}
\end{theorem}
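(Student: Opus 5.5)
The plan is to treat each inter-shock interval $(t_k,t_{k+1})$ as an autonomous run of the symmetric system \eqref{eq:2d} with parameter $\beta=B_k$. The long-run mainstream share $C_k^\infty$ is then defined as the limit selected by those post-shock parameters if no further shock arrived. It is then read off from the stability classification following Theorem~\ref{thm:eq4}, or equivalently from Theorem~\ref{thm:global4} in the symmetric case. The key point is that every relevant attractor depends only on the current parameter $B_k$ and on whether $S>0$, never on the state components $\Delta_j$. Monotonicity of $B_k$ in $k$ follows immediately from $\Delta\beta_k\ge0$.

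For part (i): if $B_k\le\mu$, the classification gives global asymptotic stability of $(S,A)=(0,0)$ on $\mathcal F$, so $C_k^\infty=1-0-0=1$ for any post-shock state. The state component only changes the initial condition of this autonomous run. It therefore affects transients, for instance via Theorems~\ref{thm:critical} and \ref{thm:window}, but not the limit.

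For part (ii): by minimality of $k^*$, the parameter is subcritical before $t_{k^*}$ and supercritical afterwards. We assume $S(t_{k^*}^+)>0$. Invariance of $\{S=0\}$ and of $\{S>0\}$ ensures the trajectory stays off the radical-free edge. Then Theorem~\ref{thm:trigger}, applied with $\beta_0+\Delta\beta$ replaced by $B_{k^*}$, gives convergence to $S^*=1-\mu/B_{k^*}$, $A^*=0$. Hence $C_{k^*}^\infty=\mu/B_{k^*}<1$.

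For part (iii), we first need that the seed persists for every later shock. This is the step I expect to require the most care. $S(t)>0$ is preserved along the flow because $\dot S=S\cdot(\text{bounded})$, so $S$ cannot reach zero in finite time. State jumps leave $L,R$ unchanged, and parameter jumps leave the state unchanged. So $S(t_k^+)>0$ for all $k\ge k^*$, and the same convergence argument yields $C_k^\infty=\mu/B_k$. Since $B_k=B_{k-1}+\Delta\beta_k\ge B_{k-1}>\mu$, we get $\mu/B_k\le\mu/B_{k-1}=C_{k-1}^\infty$. Equality holds iff $\Delta\beta_k=0$, because $\mu>0$. For $k-1=k^*$ this uses the value from (ii).

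Part (iv) is a rewriting: $B_k>\mu$ is equivalent to $\sum_{j\le k}\Delta\beta_j>\mu-\beta_0$. Because the partial sums are nondecreasing, the first index satisfying this is exactly the $k^*$ of (ii), and $B_k>\mu$ for all $k\ge k^*$. This also justifies the phrase ``every later shock'' in (iii). The only subtlety is one of interpretation. $C_k^\infty$ must be understood as the attractor of the $k$-th autonomous regime rather than an attained limit, since finite intervals between shocks need not allow convergence.
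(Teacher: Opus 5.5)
Your proposal is correct and follows essentially the same route as the paper: read $C_k^\infty$ off the post-shock classification (Theorem~\ref{thm:trigger}, i.e.\ Proposition~\ref{prop:stab4}) as $1$ for $B_k\le\mu$ and $\mu/B_k$ for $B_k>\mu$, then use that $B_k$ is nondecreasing. Your explicit argument that $S>0$ persists along the flow and through all later jumps fills in a step the paper only handles by conditioning on a nonzero seed ``whenever the post-shock regime is supercritical''.
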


\begin{proof}
See Appendix~\ref{app:proofs4}.
\end{proof}

The monotone staircase conclusion uses the assumption $\Delta\beta_k\ge0$.
Allowing negative structural components would represent structural
counter-shocks and could reverse some steps or restore the subcritical
regime.

When $B_k<\mu$ and $\delta>B_k$, the transient near-mainstream surge
threshold after the $k$-th shock is
$\Delta_c^{(k)}=(\mu-B_k)/(\delta-B_k)$,
evaluated under the post-shock structural parameter $B_k$. A crisis can
thus create a visible radical surge while leaving no permanent legacy if
$B_k\le\mu$.

\begin{corollary}[Staircase dynamics]\label{cor:staircase}
Under the conditions of Theorem~\ref{thm:cumulative}, after the first
threshold-crossing shock $k^*$, every subsequent shock with $\Delta\beta_k>0$
strictly lowers the long-run mainstream share: $C_k^\infty<C_{k-1}^\infty$.
If all subsequent structural components are positive, the sequence
$C_{k^*}^\infty>C_{k^*+1}^\infty>\cdots$ is strictly decreasing. This is
the staircase pattern: repeated crises leave radical support at successively
higher long-run floors.
\end{corollary}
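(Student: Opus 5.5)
The corollary follows from part (iii) of Theorem~\ref{thm:cumulative} together with monotonicity of $B_k$. The plan is to make this explicit in three steps: fix the post-crossing regime, compare consecutive floors, and then chain the comparisons.

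First, I will show that once the threshold is crossed it stays crossed. Since every $\Delta\beta_j\ge0$, the sequence $B_k=\beta_0+\sum_{j\le k}\Delta\beta_j$ is nondecreasing in $k$. Hence $B_{k^*}>\mu$ implies $B_k\ge B_{k^*}>\mu$ for all $k\ge k^*$. Every post-crossing shock therefore lies in the supercritical regime.

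Second, I will identify the long-run floors. By part (ii) of Theorem~\ref{thm:cumulative}, which rests on Theorem~\ref{thm:trigger} and Theorem~\ref{thm:global4}, $C_{k^*}^\infty=\mu/B_{k^*}$. By part (iii), $C_k^\infty=\mu/B_k$ for every $k>k^*$.

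There is one point to confirm here: the radical seed must remain positive after $k^*$. The seed assumption is stated only at the first crossing shock, but $S$ stays positive afterwards for two reasons. The axis $\{S=0\}$ is invariant, so trajectories with $S>0$ never reach it in finite time. State shocks leave $L$ and $R$ unchanged. Hence $S(t_k^+)>0$ for all $k\ge k^*$, and Theorem~\ref{thm:trigger} applies at each shock. I expect this seed-persistence check to be the only non-mechanical step; everything else is algebra.

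Third, I will compare consecutive floors. For $k>k^*$ with $\Delta\beta_k>0$, we have $B_k=B_{k-1}+\Delta\beta_k>B_{k-1}>\mu>0$. Since $x\mapsto\mu/x$ is strictly decreasing on $(0,\infty)$, this gives $C_k^\infty=\mu/B_k<\mu/B_{k-1}=C_{k-1}^\infty$. If all subsequent $\Delta\beta_k$ are positive, chaining these strict inequalities gives the strictly decreasing sequence $C_{k^*}^\infty>C_{k^*+1}^\infty>\cdots$.

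Finally, I will translate the result into radical support. The long-run radical floor is $S_k^\infty=1-C_k^\infty=1-\mu/B_k$, which is therefore strictly increasing. This is the staircase interpretation stated in the corollary.
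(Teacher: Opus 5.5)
Your proposal is correct and follows the paper's route: the paper's proof is the one-line observation that $C_k^\infty=\mu/B_k$ with $B_k=B_{k-1}+\Delta\beta_k$, which is your third step. Your monotonicity-of-$B_k$ and seed-persistence checks make explicit what the paper leaves implicit when it invokes Theorem~\ref{thm:cumulative}. Positivity of $S$ is preserved by the flow, since $\dot S=S\cdot(\cdots)$, and by state shocks, which leave $L$ and $R$ unchanged.
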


\begin{proof}
Immediate from $C_k^\infty=\mu/B_k$ and $B_k=B_{k-1}+\Delta\beta_k$.
\end{proof}

The next section presents a single descriptive case---German federal
elections, 2013--2025---as a stylised illustration.

\section{Stylised Illustration: German Federal Elections, 2013--2025}\label{sec:germany}

To illustrate how the model's distinction between transient and
structural mechanisms can organise an observed electoral pattern, we
apply the framework to a single descriptive case: German federal
elections, 2013--2025. The aim is illustrative; no causal estimation is
attempted, and no claim is made about the full set of mechanisms driving
radical-party support in Germany.

Within this descriptive frame, the Bundestag second-vote share for the
right-radical bloc (Alternative für Deutschland) followed a pattern
consistent with the staircase mechanism of Section~\ref{sec:legacy}: it
rose from 4.7\% in 2013 to 12.6\% in 2017, fell partially to 10.3\% in
the original 2021 final result, and reached 20.8\% in 2025
\citep{bundeswahlleiter2013,bundeswahlleiter2017,%
bundeswahlleiter2021,bundeswahlleiter2025}. The 2021 dip is not an
anomaly: Corollary~\ref{cor:staircase} governs long-run floors rather
than monotone election-by-election observations.

This section offers a stylised illustration rather than a formal estimation
exercise. The goal is to show how the model's distinction between temporary
state shocks and permanent structural shifts maps onto this sequence. No
parameters are estimated and no econometric identification is claimed.

The model is formulated on a conserved electorate, so the empirical
mapping uses the eligible electorate as the common reference population.
The 2021 figures use the original 2021 final result (turnout 76.6\%, AfD
10.3\%), before the 2024 repeat election in parts of Berlin, in order to
keep the four-election sequence internally consistent. Using the
post-repeat 2024 adjustment changes entries only at the third decimal
place and leaves all qualitative comparisons unchanged.
Let $N(t)=1-\text{turnout}$ denote the observed non-voting share. Using
the rounded official second-vote shares, we define
\begin{align}
  R(t) &= \bigl(\text{AfD second-vote share}\bigr)\times\text{turnout},
  \notag\\
  L(t) &= \bigl(\text{Die Linke second-vote share}\bigr)\times\text{turnout},
  \notag\\
  N(t) &= 1-\text{turnout},
  \notag\\
  C_{\mathrm{obs}}(t) &= 1-L(t)-R(t)-N(t).
  \label{eq:proxy}
\end{align}
Thus $V(t)=L(t)+R(t)$ is an approximate observed radical vote share as a
fraction of the eligible electorate, and the residual category
$C_{\mathrm{obs}}$ collects all remaining parties, invalid second votes,
and rounding residuals.

At the measurement level, these proxies relate to the latent ODE states
through a measurement convention. Reading $L_t,R_t,A_t$ as the latent
ODE states (rather than the rounded proxies above), a compact convention
is
\begin{equation}\label{eq:measurement}
  V_t^{\mathrm{obs}}=L_t+R_t+\varepsilon^V_t,\qquad
  N_t^{\mathrm{obs}}=N^{\mathrm{bg}}+\kappa A_t+\varepsilon^N_t,
  \qquad
  C_t^{\mathrm{obs}}=1-V_t^{\mathrm{obs}}-N_t^{\mathrm{obs}},
\end{equation}
where $N^{\mathrm{bg}}$ is ordinary background non-voting, $\kappa>0$ is
a measurement scale, and the $\varepsilon$ terms collect aggregation and
classification error. The descriptive table below sets these errors to
zero and uses rounded official quantities; no claim is made that $A_t$,
$N^{\mathrm{bg}}$, or $\kappa$ are identified from four elections. The left-radical proxy is Die Linke throughout 2013--2025. The 2025 vote
for Bündnis Sahra Wagenknecht (BSW) is kept in the residual category in
the baseline mapping. This avoids introducing a new party into the
left-radical series only in the final observation and does not require
taking a position on BSW's classification. The convention is conservative
for the comparison made here: including BSW would raise $V(2025)$ and
therefore strengthen, rather than weaken, the qualitative staircase
pattern. We therefore treat the BSW-inclusive mapping as a robustness
check rather than as the baseline series.

The model variable $A(t)$ corresponds not to total non-voting $N(t)$, but
to the excess crisis-induced disengagement component of $N(t)$. We use $N(t)$
as the directly observed quantity and define an illustrative excess component
below by subtracting a background non-voting level.

\begin{table}[ht]
  \centering
\caption{Bundestag elections 2013--2025, as fractions of the eligible
    electorate. $V=L+R$ uses AfD and Die Linke as radical-party proxies
    (BSW 2025 in residual category); $C_{\mathrm{obs}}$ is mainstream and
    residual support; $N$ is non-voting; Sum is $V+C_{\mathrm{obs}}+N$.
    Source: \citet{bundeswahlleiter2013,bundeswahlleiter2017,%
bundeswahlleiter2021,bundeswahlleiter2025}.}
  \label{tab:germany}
  \smallskip
\begin{tabular}{lcccc}
    \hline
    Year & $V=L+R$ & $C_{\mathrm{obs}}$ & $N$ & Sum \\
    \hline
    2013 & 0.095 & 0.620 & 0.285 & 1.000 \\
    2017 & 0.166 & 0.596 & 0.238 & 1.000 \\
    2021 & 0.116 & 0.650 & 0.234 & 1.000 \\
    2025 & 0.244 & 0.581 & 0.175 & 1.000 \\
    \hline
\end{tabular}
\end{table}

\begin{table}[ht]
  \centering
  \caption{Sensitivity of the radical-support proxy $V$ as a fraction of
  the eligible electorate. The baseline broad-radical proxy is
  AfD${}+{}$Die Linke; BSW is added only as a 2025 sensitivity check.}
  \label{tab:germany_robust}
  \smallskip
  \begin{tabular}{lccc}
    \hline
    Year & AfD only & AfD${}+{}$Die Linke & AfD${}+{}$Die Linke${}+{}$BSW \\
    \hline
    2013 & 0.034 & 0.095 & 0.095 \\
    2017 & 0.096 & 0.166 & 0.166 \\
    2021 & 0.079 & 0.116 & 0.116 \\
    2025 & 0.172 & 0.244 & 0.285 \\
    \hline
  \end{tabular}
\end{table}

The robustness table shows that the qualitative 2025 increase is not an
artefact of aggregating AfD with Die Linke. The AfD-only
eligible-electorate series displays the same broad pattern; including
BSW in 2025 raises the broad-radical proxy further. The subsequent
illustrative parameterisation uses the baseline AfD${}+{}$Die Linke
mapping in Table~\ref{tab:germany}.

The model predicts $A^*=0$ at every post-shock equilibrium
(Theorem~\ref{thm:global4}; in the symmetric reduction,
Theorem~\ref{thm:eq4}). The raw observable $N=1-\text{turnout}$ is not the
model variable $A(t)$: it mixes ordinary background non-voting with any
crisis-induced excess disengagement. As a purely descriptive
normalisation, we treat the lowest observed non-voting level in the
series as a background level. With $N_{\mathrm{bg}}\approx0.175$ from
2025, the excess proxy
\[
  A_{\mathrm{exc}}(t):=N(t)-N_{\mathrm{bg}}
\]
is approximately $0.110$ (2013), $0.063$ (2017), $0.059$ (2021), and
$0.000$ (2025). This proxy should not be read as an estimate of the model
state $A(t)$; it is used only to illustrate the distinction between
background abstention and an excess disengagement reservoir.

The decline in observed non-voting from 2021 to 2025 coincides with the
largest single-period increase in $V$. In the language of the model, this
is compatible with the $A\to L\cup R$ mobilisation channel: part of an
excess disengagement reservoir may be reabsorbed into electoral
participation, and some of that mobilisation may flow into radical
support. The aggregate data do not identify these flows; they only show
that the turnout increase and the radical-support increase occurred in
the same interval.

We next give an illustrative parameterisation that reproduces two
post-election reference levels.  Fix $\mu=0.22$ and $\beta_0=0.18$,
corresponding to a subcritical baseline
($\mathcal{R}_{\mathrm{rad}}=0.82<1$; $V^*_0=0$). This parameter choice
should not be interpreted as a fit to the full pre-2013 German party
system. In particular, the observed proxy $V=L+R$ contains a
long-standing left-party component, whereas the parameterisation below
is intended only as a stylised representation of the post-2013
right-radical surge and its subsequent reference levels. The numerical
values therefore illustrate the threshold mechanism; they are not
estimates of the German party system as a whole. After the 2015
migration crisis, set
\begin{equation}
  \beta_1 = 0.249, \qquad
  \mathcal{R}_{\mathrm{rad}} = \tfrac{\beta_1}{\mu} = 1.13 > 1,\qquad
  V^*_1 = 1 - \tfrac{\mu}{\beta_1} = 0.1165.
  \label{eq:beta1}
\end{equation}
After the 2022 energy and cost-of-living crisis, set
\begin{equation}
  \beta_2 = 0.2910, \qquad
  \mathcal{R}_{\mathrm{rad}} = 1.32 > 1, \qquad
  V^*_2 = 1 - \tfrac{\mu}{\beta_2} = 0.2440.
  \label{eq:beta2}
\end{equation}
The two structural shifts are $\Delta\beta_1=\beta_1-\beta_0=0.069$
(2015) and $\Delta\beta_2=\beta_2-\beta_1=0.042$ (2022).

\begin{remark}[Status of the illustrative parameterisation]
  The values of $\beta_1$ and $\beta_2$ in~\eqref{eq:beta1}--\eqref{eq:beta2}
  are reverse-engineered from the two post-election reference levels
  $V(2021)$ and $V(2025)$.  The parameterisation therefore has no
  independent predictive content for these two levels.  Its purpose is only to show that
  the model can represent a subcritical pre-shock regime followed by two
  higher post-shock floors.  The remaining comparisons below should be read
  as qualitative consistency checks, not as statistical tests or forecasts.
\end{remark}

Under the calibration~\eqref{eq:beta1}--\eqref{eq:beta2}, the model
suggests three qualitative consistency checks against Table~\ref{tab:germany}.

\begin{enumerate}[label=(P\arabic*)]

\item \emph{Staircase of long-run floors.}
  Corollary~\ref{cor:staircase} predicts a strictly increasing sequence
  of post-shock equilibria: $V^*_0=0<V^*_1<V^*_2$. The relevant
  descriptive comparison is between later elections after the initial
  crisis episodes: 2021 after the post-2015 shift and 2025 after the
  post-2022 shift. The observed sequence
  $V(2021)=0.116<V(2025)=0.244$ is consistent with the staircase
  prediction. Since $V^*_1$ and $V^*_2$ are reverse-engineered from
  these two observations, this comparison is a consistency check rather
  than an independent test.

\item \emph{Transient overshoot above the new floor.}
  Theorem~\ref{thm:critical} gives the condition under which a state
  shock produces initial growth of total radical support, and
  Theorem~\ref{thm:window} bounds the duration of that growth. For a
  sufficiently large state component, the model thus allows a temporary
  overshoot above the new post-shock floor before convergence back to
  that floor. The 2017 observation $V(2017)=0.166>V^*_1=0.116$ is
  consistent with such a transient overshoot phase; by 2021,
  $V=0.116\approx V^*_1$. The 2021 dip relative to 2017 therefore need
  not be treated as an anomaly within this model class.

\item \emph{Co-movement of $V$ and observed non-voting.}
  The model implies an inverse movement between total radical support
  and the excess disengagement reservoir when disengaged voters are
  reabsorbed into participation and some are mobilised into radical
  support. In the descriptive series, observed non-voting $N$ falls from
  $0.285$ (2013) to approximately $0.175$ (2025), while $V$ rises from
  $0.095$ to approximately $0.244$. The largest single-period
  co-movement occurs between 2021 and 2025, when $N$ falls by $0.059$
  and $V$ rises by $0.128$. This is compatible with the
  $\delta$-mobilisation channel, but the aggregate data do not identify
  the underlying voter flows.

\end{enumerate}

We emphasise that this is a \emph{stylised calibration}: the parameter
values are chosen to reproduce the qualitative regime structure
(subcritical pre-2015, supercritical post-2015, staircase after 2022),
not to minimise any statistical criterion.  A rigorous estimation
procedure would require a probabilistic observation model and is beyond
the scope of the present paper.

\section{Discussion}\label{sec:discussion}

We developed and analysed two coupled ODE models of political competition with
a conserved electorate.

The baseline three-group model is path-independent under fixed parameters:
in both the symmetric and asymmetric cases the dynamics converge globally to
the attractor selected by the Perron--Frobenius threshold. Periodic orbits are
excluded by the Dulac function $B=(LRC)^{-1}$. Consequently, the baseline
cannot generate endogenous staircase dynamics, history-dependent long-run
floors, or multiple attractors.

The four-group extension adds a disengaged compartment and distinguishes pure
state shocks from permanent structural parameter shifts.  The post-shock
dynamics admit a complete global classification governed by the spectral
threshold $\mathcal{R}_{\mathrm{rad}}=\lambda_{\mathrm{PF}}(M^{-1}K)$:
below it the radical-free equilibrium is globally asymptotically stable; above it every
trajectory with a nonzero radical seed converges to the unique radicalised
equilibrium.  Lasting long-run shifts under fixed post-shock parameters arise from
structural parameter changes crossing this threshold; cumulative
sub-threshold shifts produce staircase dynamics absent from the baseline.  The empirical
illustrations against German Bundestag data 2013--2025 are qualitative only
and do not constitute a statistical test.

The model raises several questions of identifiability and estimation that
we leave to future work. A rigorous quantitative calibration would require
a probabilistic
observation model linking the ODE state to electoral and survey data,
and faces the following identifiability obstacles.

In the symmetric three-group model, $\alpha$ and $\gamma$ enter the
total radical-share equation only through their sum $\beta=\alpha+\gamma$.
Equilibrium observations constrain the ratio $\mu/\beta$, since
$V^*=1-\mu/\beta$, but they do not separately identify $\alpha$, $\gamma$,
and $\mu$. Separating direct recruitment from reactive polarisation would
require information on flows between party blocs, or additional survey
measures of cross-wing reactive mobilisation. Panel surveys such as the
German Longitudinal Election Study (GLES) are a natural source for such
auxiliary information~\cite{gles2021}.

The four-group model adds the mobilisation rate $\delta$ and the
re-engagement rate $\rho$. Observed non-voting $N(t)$ can help constrain
these parameters only if it is linked to the latent state $A(t)$ by an
explicit measurement equation, because $N(t)$ includes both ordinary
background abstention and crisis-induced excess disengagement. The
co-movement pattern in Section~\ref{sec:germany} is therefore best read
as a qualitative implication of the model, not as an identification
result. Precise estimation would require a stochastic state-space
formulation and richer time-series or panel data.

In the symmetric shock specification, each crisis episode has a state
component $\Delta_k$ and a structural component $\Delta\beta_k$. The state
component can produce a temporary turnout disruption if the election
occurs before the disengagement reservoir has been depleted; the
structural component is reflected in the post-shock long-run floor
$V_k^*$. Disentangling the two components within a single election cycle
requires timing assumptions or auxiliary data on between-election
attitudes, turnout intentions, and voter transitions.

On the mathematical side, the main open problem is a sharper transient theory
for the asymmetric four-group model, including explicit bounds on surge
duration and peak size under repeated asymmetric shocks, and a more explicit
global theory for structural-shock sequences with parameter drift.  Stochastic
forcing and time-periodic shocks would further allow the study of threshold
crossing and ratchet effects outside the autonomous setting.

The main methodological message is that reversible surges and lasting
political shifts are distinct dynamical phenomena and should not be
attributed to the same mechanism: pure state shocks govern transient
amplification, whereas structural parameter shifts govern long-run
realignment.

\appendix
\section{Mathematical Proofs: Baseline Model}\label{app:proofs}

This appendix collects the proofs of all results stated in
Sections~\ref{sec:model} and~\ref{sec:baseline}, together with supporting
technical material on the bifurcation structure and comparative statics of
the baseline model.

\subsection*{A.1\quad Forward invariance}

\begin{proposition}[Forward invariance of $\Simp$]\label{prop:inv}
For any $(L_0,R_0)\in\Simp$ and any positive parameters, the solution
$(L(t),R(t))$ of~\eqref{eq:model} remains in $\Simp$ for all $t\ge0$.
\end{proposition}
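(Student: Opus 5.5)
The plan is the standard tangency (Nagumo-type) argument for polynomial vector fields on a polytope: show that on each face of $\Simp$ the vector field of~\eqref{eq:model} points weakly inward. The right-hand side is polynomial, hence locally Lipschitz, so solutions exist and are unique locally. Since $\Simp$ is compact, forward invariance will also give global existence for $t\ge0$. The simplex is the intersection of the three half-planes $L\ge0$, $R\ge0$, $C=1-L-R\ge0$, so it suffices to check the sign of the corresponding derivative on each boundary face.

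On the face $L=0$ (with $R\ge0$, $C\ge0$) we have $\dot L=\gamma_{RL}RC\ge0$. Symmetrically, on $R=0$ we have $\dot R=\gamma_{LR}LC\ge0$. On the face $C=0$ we have $\dot C=-\dot L-\dot R=\mu_LL+\mu_RR\ge0$, because every recruitment term carries a factor $C$ and vanishes there. Thus at every boundary point the inward-normal component of the vector field is nonnegative. At the vertices, for example $(0,0)$ where both $\dot L$ and $\dot R$ vanish, the conditions hold simultaneously.

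Nonnegativity only on the boundary is not quite enough for a rigorous argument, because a trajectory could be tangent to a face and then exit. I would avoid appealing to Nagumo's theorem and instead write each equation in quasi-positive form. For $L$: $\dot L=L(\alpha_LC-\mu_L)+\gamma_{RL}RC$. As long as $R,C\ge0$, this gives
\[
  L(t)\ge L_0\exp\!\Bigl(\int_0^t(\alpha_LC-\mu_L)\,ds\Bigr)\ge0
\]
by variation of constants, and the same holds for $R$. For $C$ I compute $\dot C=-C\bigl(\alpha_LL+\alpha_RR+\gamma_{RL}R+\gamma_{LR}L\bigr)+\mu_LL+\mu_RR$, which has the same structure: a linear loss term in $C$ plus a source that is nonnegative while $L,R\ge0$.

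The main obstacle is making these three variation-of-constants estimates non-circular, since each assumes the other two variables are nonnegative. I would handle it with a first-exit-time argument. Let $t_1=\sup\{t:\ L,R,C\ge0 \text{ on }[0,t]\}$ and suppose $t_1<\infty$. On $[0,t_1]$ all sources are nonnegative, so the exponential lower bounds hold. Beyond $t_1$ I would use the Kamke (quasi-monotone) comparison for the linear system $\dot x=P(t)x$ in $(L,R,C)$ with time-dependent coefficient matrix obtained by freezing the nonlinear factors along the actual solution. Its off-diagonal entries are $\gamma_{RL}C$, $\gamma_{LR}C$, $\mu_L$, $\mu_R$. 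These are nonnegative near $t_1$ provided $C\ge0$, and that can fail only if $C$ exits, which the same system rules out. The simplest clean route is to note that this linear system is cooperative whenever its coefficients are evaluated along the solution, so the nonnegative orthant is invariant for it. Together with $L+R+C\equiv1$, this forces all three variables into $[0,1]$ for all $t\ge0$, which is $\Simp$.
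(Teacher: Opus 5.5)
Your face-by-face computations match the paper's proof. The paper then concludes positive invariance directly from weak inward/tangent pointing plus Picard--Lindel\"of uniqueness. In effect it invokes the Nagumo-type invariance theorem that you chose to avoid, and that theorem is valid for locally Lipschitz fields on a polytope.

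Your self-contained replacement, however, does not close as written. The frozen linear system $\dot x=P(t)x$ has off-diagonal entries $\gamma_{RL}C(t)$ and $\gamma_{LR}C(t)$. These are nonnegative on a right neighbourhood of $t_1$ only if the actual solution has $C\ge0$ there, which is exactly what the exit-time argument must establish. The sentence ``that can fail only if $C$ exits, which the same system rules out'' does not break the loop. The source $\mu_LL+\mu_RR$ in the $C$-equation is nonnegative only if $L,R\ge0$, and those were obtained from $C\ge0$. If $C(t_1)>0$, continuity rescues you. The case $C(t_1)=0$ is left open, for example for initial data on the edge $L+R=1$, or at a vertex $(0,1)$ where you need $C\ge0$ just after $t_1$ to keep $L\ge0$. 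The weak inequality $\dot C\ge0$ that you recorded on that face is not enough to settle it.

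The missing ingredient is strictness on that face, which the paper notes in the remark after its proof: $C=0$ forces $L+R=1$, so $\dot C=\mu_LL+\mu_RR\ge\min\{\mu_L,\mu_R\}>0$. Hence $C(t_1)=0$ implies $C>0$ on $(t_1,t_1+\epsilon)$. Then $P(t)$ is Metzler on $[t_1,t_1+\epsilon)$, the linear comparison gives $x\ge0$ there, and $t_1$ was not the exit time.

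Alternatively, for interior initial data your exponential lower bounds on $[0,t_1]$ already give $L(t_1),R(t_1),C(t_1)>0$. The exit-time argument then closes by continuity alone, with no Kamke comparison. Boundary initial data follow by continuous dependence on initial conditions and closedness of $\Simp$.

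With either repair your argument becomes a fully explicit version of what the paper's one-line appeal to inward pointing takes for granted.
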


\begin{proof}
We verify that the vector field $\mathbf{F}=(F_1,F_2)$ points into $\Simp$
(or is tangent to it) at every point on the boundary $\partial\Simp$.

\emph{Face $L=0$, $R\in[0,1]$:}
$F_1\big|_{L=0} = \gamma_{RL}\,R\,(1-R)\ge0$.
Hence $\dot{L}\ge0$: the trajectory cannot exit through this face.

\emph{Face $R=0$, $L\in[0,1]$:}
By symmetry, $F_2\big|_{R=0} = \gamma_{LR}\,L\,(1-L)\ge0$.

\emph{Face $C=0$, i.e.\ $L+R=1$:}
Here $C=0$, so $F_1\big|_{C=0}=-\mu_L\,L\le0$ and
$F_2\big|_{C=0}=-\mu_R\,R\le0$. Therefore $\dot{L}+\dot{R}\le0$, which means
$\dot{C}=-\dot{L}-\dot{R}\ge0$: the total radical share cannot increase
beyond~1.

Since the vector field points inward (or is tangent) on all three faces, and
the system has a unique locally Lipschitz solution by the
Picard--Lindel\"{o}f theorem, $\Simp$ is positively invariant.
\end{proof}

\begin{remark}
On the boundary face $C=0$ (i.e.\ $L+R=1$), at least one of $L,R$ is
positive, so
\[
  \dot C=\mu_L L+\mu_R R>0
\]
strictly. The face $C=0$ is therefore inward-pointing: a trajectory with
$C(0)>0$ cannot reach $C=0$ in finite time, and a trajectory with $C(0)=0$
returns immediately to the interior.
\end{remark}

\subsection*{A.2\quad Symmetric reduction and global dynamics}

In the symmetric case $\alpha_L=\alpha_R=\alpha$, $\mu_L=\mu_R=\mu$,
$\gamma_{LR}=\gamma_{RL}=\gamma$, set $\beta:=\alpha+\gamma$ and introduce
the total radical share $S:=L+R$ and the left--right imbalance $D:=L-R$.
Since $C=1-S$, a direct computation yields the triangular system
\begin{equation}\label{eq:SD_system}
  \dot{S} = S\bigl[\beta(1-S)-\mu\bigr],
  \qquad
  \dot{D} = D\bigl[(\alpha-\gamma)(1-S)-\mu\bigr].
\end{equation}
The $S$-equation is logistic with unique positive equilibrium
$S^*=1-\mu/\beta$ when $\beta>\mu$. On the diagonal $\{L=R\}$, setting
$S=2P$ gives
\begin{equation}\label{eq:scalar}
  \dot{P} = P[\beta(1-2P)-\mu].
\end{equation}

\begin{theorem}[Global dynamics of the symmetric model]\label{thm:global}
\begin{enumerate}
  \item If $\beta\le\mu$: $E_0=(0,0)$ is globally asymptotically stable on
    $\Simp$. Both radical wings vanish: $L(t),R(t)\to0$, $C(t)\to1$.
  \item If $\beta>\mu$: the unique interior equilibrium
    $E_1=(P^*,P^*)$, with $P^*=\tfrac12(1-\mu/\beta)$ and $C^*=\mu/\beta$,
    is globally asymptotically stable on $\Simp\setminus\{E_0\}$; $E_0$ is
    unstable.
\end{enumerate}
\end{theorem}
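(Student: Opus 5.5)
The plan is to work entirely with the triangular system \eqref{eq:SD_system}. The $S$-equation is autonomous and one-dimensional, so it fixes the long-run radical total. The imbalance $D$ then enters only through a linear, non-autonomous equation driven by $S(t)$, so it can be controlled once $S(t)$ is known. Convergence of $(S,D)$ is equivalent to convergence of $(L,R)=\bigl(\tfrac{S+D}{2},\tfrac{S-D}{2}\bigr)$. Forward invariance (Proposition~\ref{prop:inv}) ensures $S(t)\in[0,1]$ for all $t\ge0$. Since $|D|\le S$, we also have $|D(t)|\le S(t)$ throughout.

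\emph{Case $\beta\le\mu$.} Here $\dot S=S[\beta(1-S)-\mu]\le -\beta S^2\le0$.
\begin{itemize}
  \item If $\beta<\mu$, then $\dot S\le(\beta-\mu)S$, so $S(t)\to0$ exponentially.
  \item If $\beta=\mu$, then $\dot S=-\beta S^2$. Solving gives $S(t)=S_0/(1+\beta S_0 t)\to0$.
\end{itemize}
In both sub-cases $|D|\le S$ gives $D\to0$, hence $L,R\to0$ and $C\to1$.

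For stability of $E_0$, note that $S=L+R$ is a norm on the nonnegative quadrant, and it is nonincreasing along trajectories. Small initial $S$ therefore stays small, which gives Lyapunov stability. Combined with attractivity, $E_0$ is globally asymptotically stable on $\Simp$.

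\emph{Case $\beta>\mu$.} The logistic equation has equilibria $0$ and $S^*=1-\mu/\beta\in(0,1)$. Every solution with $S_0>0$ converges monotonically to $S^*$. A point of $\Simp\setminus\{E_0\}$ has $S_0=L_0+R_0>0$, so the $S$-component converges for all such initial data.

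For $D$, write $\dot D=g(t)D$ with $g(t)=(\alpha-\gamma)(1-S(t))-\mu$. As $t\to\infty$,
\[
g(t)\to(\alpha-\gamma)\frac{\mu}{\beta}-\mu=\frac{\mu}{\beta}\bigl(\alpha-\gamma-\beta\bigr)=-\frac{2\gamma\mu}{\beta}<0,
\]
where strict positivity of $\gamma$ is used. Hence $g(t)\le-\gamma\mu/\beta$ for all $t\ge T$. Since $D(t)=D(T)\exp\!\int_T^t g$, this gives $D(t)\to0$ exponentially. Therefore $(L,R)\to(S^*/2,S^*/2)=E_1$, with $C^*=\mu/\beta$.

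For local stability of $E_1$, linearise the triangular system at $(S^*,0)$. The Jacobian is triangular with eigenvalues $-\beta S^*<0$ and $-2\gamma\mu/\beta<0$. So $E_1$ is a hyperbolic sink and hence asymptotically stable; together with global attraction on $\Simp\setminus\{E_0\}$ this gives the claim.

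For instability of $E_0$, linearise $\dot S$ at $0$: the eigenvalue is $\beta-\mu>0$. Equivalently, every nearby nonzero state is attracted to $E_1$, so it leaves a neighbourhood of $E_0$.

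Uniqueness of the interior equilibrium follows from Theorem~\ref{thm:baseline_threshold}'s Perron--Frobenius argument. In this symmetric case it can also be read off directly: $S=S^*$ and $D=0$ are forced, since $g\ne0$ at $S^*$.

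The only delicate step is the $D$-equation. The coefficient $g$ is time-dependent, and its sign at finite times is uncontrolled when $\alpha>\gamma$. The argument therefore has to rely on the limit of $g$ rather than on a pointwise sign. This works because $S$ is fully decoupled and converges first. The borderline case $\beta=\mu$ needs the separate algebraic decay noted above rather than linearisation.
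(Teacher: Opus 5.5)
Your proposal is correct and follows essentially the same route as the paper's proof. Both use the triangular $(S,D)$ system, the logistic convergence of $S$, the $|D|\le S$ bound when $\beta\le\mu$, and eventual negativity of the coefficient $g(t)\to-2\gamma\mu/\beta$ to force $D\to0$; your additions (monotonicity of $S=L+R$ for Lyapunov stability of $E_0$, and the diagonal Jacobian at $(S^*,0)$ with eigenvalues $\mu-\beta$ and $-2\gamma\mu/\beta$ for local stability of $E_1$) supply stability details the paper's proof leaves implicit.
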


\begin{proof}
\textbf{Case $\beta\le\mu$:}
For $S\ge0$, $\dot S=(\beta-\mu)S-\beta S^2\le(\beta-\mu)S$. If
$\beta<\mu$, then $S(t)\le S(0)e^{(\beta-\mu)t}\to0$. If $\beta=\mu$,
then $\dot S=-\beta S^2\le0$ with equality only at $S=0$, so $S(t)\to0$
by integration of the scalar logistic equation. Since $|D|\le S$, also
$D(t)\to0$, hence $L,R\to0$.

\textbf{Case $\beta>\mu$:}
The $S$-equation is logistic; for any $S(0)\in(0,1]$, $S(t)\to
S^*=1-\mu/\beta$ monotonically. The $D$-equation is linear:
\[
  \dot D = k(t)\,D,\qquad k(t):=(\alpha-\gamma)(1-S(t))-\mu.
\]
Since $S(t)\to S^*$, we have $k(t)\to(\alpha-\gamma)\mu/\beta-\mu=
-2\gamma\mu/\beta<0$. Hence there exist $T>0$ and $\eta>0$ such that
$k(t)\le-\eta$ for all $t\ge T$, giving $|D(t)|\le|D(T)|e^{-\eta(t-T)}\to0$.
Consequently $L(t)\to S^*/2=P^*$ and $R(t)\to P^*$.
Instability of $E_0$: its dominant eigenvalue is $\lambda_1=\beta-\mu>0$.
\end{proof}

\begin{corollary}[Mainstream share at equilibrium]\label{cor:Cstar}
When $E_1$ exists ($\beta>\mu$), the mainstream share at equilibrium is
\[
  C^* = \frac{\mu}{\beta} = \frac{\mu}{\alpha+\gamma}\in(0,1).
\]
$C^*$ is strictly decreasing in $\alpha$ and $\gamma$, strictly increasing in
$\mu$, and positive for all positive parameters.
\end{corollary}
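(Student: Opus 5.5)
The statement is Corollary~\ref{cor:Cstar}. It follows directly from Theorem~\ref{thm:global}, part (ii), which already identifies the interior equilibrium $E_1=(P^*,P^*)$ with $C^*=1-2P^*=\mu/\beta$. So the plan is to read off $C^*$ and then verify its range and monotonicity from the explicit formula. As an independent check of the formula, I would set $\dot S=0$ with $S>0$ in the logistic equation~\eqref{eq:SD_system}. This gives $\beta(1-S^*)=\mu$, hence $C^*=1-S^*=\mu/\beta=\mu/(\alpha+\gamma)$.

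For the range, positivity is immediate because $\mu>0$ and $\beta=\alpha+\gamma>0$. The upper bound $C^*<1$ is equivalent to $\mu<\beta$, which is precisely the existence condition $\beta>\mu$ for $E_1$. Hence $C^*\in(0,1)$ whenever $E_1$ exists.

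For the comparative statics, I would compute the partial derivatives of $C^*(\alpha,\gamma,\mu)=\mu/(\alpha+\gamma)$ on the open parameter region $\{\alpha,\gamma,\mu>0,\ \alpha+\gamma>\mu\}$:
\[
  \frac{\partial C^*}{\partial\alpha}=\frac{\partial C^*}{\partial\gamma}=-\frac{\mu}{(\alpha+\gamma)^2}<0,
  \qquad
  \frac{\partial C^*}{\partial\mu}=\frac{1}{\alpha+\gamma}>0 .
\]
Alternatively, strict monotonicity follows without calculus: $\mu/\beta$ is strictly increasing in $\mu$, and strictly decreasing in $\beta$, which is itself strictly increasing in each of $\alpha$ and $\gamma$.

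There is no real obstacle here. The only care needed is to note that the monotonicity claims are meant within the region where $E_1$ exists, that is, $\beta>\mu$, which is an open set. Raising $\mu$ may eventually leave that region, at which point $E_0$ becomes the attractor and the long-run mainstream share jumps to $C=1$. That is still consistent with $C$ being nondecreasing in $\mu$ across the transcritical threshold.
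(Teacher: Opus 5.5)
Your proposal is correct and follows the paper's proof, which simply substitutes $P^*=\tfrac12(1-\mu/\beta)$ into $C=1-2P^*$; your derivative computations are the ones the paper records separately in Corollary~\ref{cor:cs}. One small slip in your closing aside: the long-run mainstream share does not jump at $\beta=\mu$ but tends continuously to $1$, because $\mu/\beta\to1$ as $\mu\uparrow\beta$, as expected at a transcritical bifurcation.
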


\begin{proof}
Direct substitution of $P^*$ into $C=1-2P^*$.
\end{proof}

\subsection*{A.3\quad Asymmetric case: existence, uniqueness, and stability}

We first record an equivalence used repeatedly below.

\begin{lemma}[Perron threshold and spectral abscissa]\label{lem:pf_abscissa}
Let $K$ be a $2\times2$ entrywise positive matrix and let
$M=\mathrm{diag}(\mu_L,\mu_R)$ with $\mu_L,\mu_R>0$. Then
\[
  \lambda_{\mathrm{PF}}(M^{-1}K)\lessgtr1
  \quad\Longleftrightarrow\quad
  s(K-M)\lessgtr0,
\]
where $s(\cdot)$ denotes the spectral abscissa.
\end{lemma}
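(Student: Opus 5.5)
Since $M$ is diagonal with positive entries, the plan is to work with the $2\times2$ matrices explicitly and use the Perron--Frobenius theorem only for the sign information. The matrix $K-M$ is Metzler: its off-diagonal entries $\gamma_{RL},\gamma_{LR}$ are positive. Hence $s(K-M)$ is a real, simple eigenvalue with a positive eigenvector. Likewise $M^{-1}K$ is entrywise positive, so $\lambda_{\mathrm{PF}}(M^{-1}K)$ is its spectral radius, with a positive eigenvector $v\gg0$.

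\textbf{Direct route via the eigenvector.} Let $\lambda:=\lambda_{\mathrm{PF}}(M^{-1}K)$ with $Kv=\lambda Mv$, $v\gg0$. Then
\[
(K-M)v=(\lambda-1)Mv .
\]
Since $Mv\gg0$, the vector $(K-M)v$ is componentwise positive, zero, or negative according to whether $\lambda>1$, $\lambda=1$, or $\lambda<1$.

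We then invoke the standard subinvariance characterisation for irreducible Metzler matrices $A$. If $Aw\ge cw$ (respectively $\le$) for some $w\gg0$, then $s(A)\ge c$ (respectively $\le$). Strict inequality holds when the vector inequality is not an equality. To prove this, let $u\gg0$ be a left eigenvector of $A$ for $s(A)$. Pairing gives
\[
u^\top Aw=s(A)\,u^\top w .
\]
Comparing this with $u^\top Aw\gtrless c\,u^\top w$ yields the sign of $s(A)-c$, strictly when $Aw-cw$ is nonzero and of one sign. Taking $A=K-M$, $w=v$ and $c=0$ then gives all three cases: $<$, $=$, and $>$.

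\textbf{Elementary backup route.} Alternatively, I would avoid Perron theory and compute both sides from $2\times2$ formulas, giving a fully elementary proof. Both $K-M$ and $M^{-1}K$ have real eigenvalues, because their discriminants are positive when the off-diagonal entries are positive. Then $s(K-M)<0$ iff $\operatorname{tr}(K-M)<0$ and $\det(K-M)>0$.

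On the other side, set $p(x):=\det(xI-M^{-1}K)$. Then $\lambda_{\mathrm{PF}}(M^{-1}K)<1$ iff $p(1)>0$ and $1$ exceeds the vertex $\tfrac12\operatorname{tr}(M^{-1}K)$. These conditions are linked through
\[
p(1)=\det(M^{-1})\det(M-K)=\det(K-M)/(\mu_L\mu_R).
\]
The main obstacle is matching the trace conditions: the two trace conditions are not literally equivalent. One must show, for example, that $\det(K-M)>0$ together with $\lambda_{\mathrm{PF}}<1$ forces $\alpha_L<\mu_L$ and $\alpha_R<\mu_R$. The argument uses the fact that $\lambda_{\mathrm{PF}}(M^{-1}K)$ dominates the diagonal entries $\alpha_i/\mu_i$. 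This case-juggling is exactly why I prefer the eigenvector/subinvariance argument above as the main proof. The equality case $\lambda_{\mathrm{PF}}=1\iff s(K-M)=0$ then follows from the two strict equivalences by exclusion.
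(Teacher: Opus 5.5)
Your main argument is correct, and it takes a genuinely different route from the paper.

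\textbf{What the paper does.} It cites the nonsingular $M$-matrix criterion: $M-K$ is a nonsingular $M$-matrix iff $\lambda_{\mathrm{PF}}(M^{-1}K)<1$, iff $s(K-M)<0$. This settles the subcritical case. The threshold and supercritical cases are then dismissed with a one-line appeal to ``continuity and irreducibility of $K$''.

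\textbf{What you do.} You pair the right Perron vector $v\gg0$ of $M^{-1}K$ with the left Perron vector $u\gg0$ of the irreducible Metzler matrix $K-M$. This gives the exact identity
\[
  s(K-M)\,u^\top v=\bigl(\lambda_{\mathrm{PF}}(M^{-1}K)-1\bigr)\,u^\top Mv ,
\]
with $u^\top v>0$ and $u^\top Mv>0$.

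\textbf{What your route buys.}
\begin{itemize}
  \item The sign equivalence holds in all three cases at once. Both trichotomies partition their sides, so the forward implications are already equivalences.
  \item The threshold case $\lambda_{\mathrm{PF}}=1\iff s(K-M)=0$ needs no limiting or continuity argument.
  \item The $2\times2$ structure is never used, so the lemma holds verbatim for any irreducible nonnegative $K$ and positive diagonal $M$.
\end{itemize}
The paper's version is shorter on the page, but it outsources the content to a cited theorem and leaves the boundary cases sketchy. Yours is self-contained and tighter.

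\textbf{Your backup route.} The elementary $2\times2$ route is also sound. The trace matching you flag follows from the strict inequalities $\lambda_{\mathrm{PF}}(M^{-1}K)>K_{ii}/\mu_i$ and $s(K-M)>K_{ii}-\mu_i$. Both hold because the off-diagonal entries are positive. The backup is not needed, though, given the main argument.
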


\begin{proof}
This is the nonsingular $M$-matrix criterion applied to $M-K$. Since $K$ is
entrywise positive and $M$ is a positive diagonal matrix, $M-K$ is a
nonsingular $M$-matrix iff $\lambda_{\mathrm{PF}}(M^{-1}K)<1$, equivalently
the Metzler matrix $K-M=-(M-K)$ has $s(K-M)<0$. The threshold and
supercritical cases follow by continuity and irreducibility of $K$.
\end{proof}

At an interior equilibrium $(L^*,R^*)$ with $C^*=1-L^*-R^*$, the equilibrium
conditions are
\begin{equation}\label{eq:eq_system}
\begin{aligned}
  C^*\bigl(\alpha_L L^*+\gamma_{RL}R^*\bigr) &= \mu_L L^*,\\
  C^*\bigl(\alpha_R R^*+\gamma_{LR}L^*\bigr) &= \mu_R R^*.
\end{aligned}
\end{equation}

The Jacobian of~\eqref{eq:model} at a general point $(L,R)$ is
\begin{equation}\label{eq:jacobian}
  J=
  \begin{pmatrix}
    \alpha_L(C-L)-\mu_L-\gamma_{RL}R &
    \gamma_{RL}(C-R)-\alpha_L L \\[3pt]
    \gamma_{LR}(C-L)-\alpha_R R &
    \alpha_R(C-R)-\mu_R-\gamma_{LR}L
  \end{pmatrix},
  \qquad C=1-L-R.
\end{equation}

\begin{theorem}[Existence, uniqueness, and local stability]\label{thm:PF}
Define
\[
  K:=\begin{pmatrix}\alpha_L & \gamma_{RL}\\\gamma_{LR} & \alpha_R\end{pmatrix},
  \quad
  M:=\begin{pmatrix}\mu_L & 0\\ 0 & \mu_R\end{pmatrix},
\]
and let $\lambda_{\mathrm{PF}}>0$ be the Perron root of $M^{-1}K$, with
explicit formula
\[
  \lambda_{\mathrm{PF}}=
  \frac{1}{2}\!\left(\frac{\alpha_L}{\mu_L}+\frac{\alpha_R}{\mu_R}
  +\sqrt{\!\left(\frac{\alpha_L}{\mu_L}-\frac{\alpha_R}{\mu_R}\right)^{\!2}
  +\frac{4\gamma_{RL}\gamma_{LR}}{\mu_L\mu_R}}\,\right).
\]
Let $u=(u_1,u_2)^\top\gg0$ be the corresponding Perron eigenvector.
\begin{enumerate}
  \item An interior equilibrium exists if and only if $\lambda_{\mathrm{PF}}>1$.
  \item When it exists it is unique, with
    $C^*=1/\lambda_{\mathrm{PF}}$ and
    $(L^*,R^*)=\tfrac{1-C^*}{u_1+u_2}(u_1,u_2)$.
  \item It is locally asymptotically stable.
\end{enumerate}
Note that $\mathcal{R}_{\mathrm{base}}=\lambda_{\mathrm{PF}}$, so
Theorem~\ref{thm:baseline_threshold} is the global counterpart of this local
result.
\end{theorem}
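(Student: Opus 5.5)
The proof has three parts: the equilibrium equations for existence and uniqueness, and the Jacobian at $E_1$ for local stability.

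\emph{Necessity.} Let $(L^*,R^*)$ be an interior equilibrium and write $x=(L^*,R^*)^\top\gg0$, $C^*=1-L^*-R^*$. Dividing \eqref{eq:eq_system} by $\mu_L,\mu_R$ gives $C^*M^{-1}Kx=x$. We have $C^*>0$: if $C^*=0$, the left-hand side vanishes while $x\ne0$. Hence $M^{-1}Kx=(1/C^*)x$.

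\emph{Uniqueness.} The matrix $M^{-1}K$ is entrywise positive. By Perron--Frobenius, its only eigenvector with a positive eigenvalue and all coordinates strictly positive is the Perron eigenvector. So $1/C^*=\lambda_{\mathrm{PF}}$ and $x$ is a positive multiple of $u$. Since $C^*<1$ for an interior point, $\lambda_{\mathrm{PF}}>1$. The scale is then fixed by $L^*+R^*=1-C^*$, which gives the formula in (ii) and shows the equilibrium is unique.

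\emph{Sufficiency.} If $\lambda_{\mathrm{PF}}>1$, define $C^*=1/\lambda_{\mathrm{PF}}\in(0,1)$ and $x=\frac{1-C^*}{u_1+u_2}u$. Then $x\gg0$ and $L^*+R^*+C^*=1$, and substituting back verifies \eqref{eq:eq_system}. The explicit formula for $\lambda_{\mathrm{PF}}$ is just the larger root of the characteristic polynomial of the $2\times2$ matrix $M^{-1}K$. The discriminant is positive because $\gamma_{RL}\gamma_{LR}>0$.

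\emph{Local stability.} I plan to rewrite the Jacobian \eqref{eq:jacobian} at $E_1$ in a structured form. With $f(x)=C\,Kx-Mx$ and $C=1-\mathbf 1^\top x$, one gets
\[
J^*=C^*K-M-(Kx)\mathbf 1^\top .
\]
\begin{itemize}
\item \emph{Trace.} Use the equilibrium identities $C^*(\alpha_L L^*+\gamma_{RL}R^*)=\mu_L L^*$ and its $R$-counterpart to eliminate $\mu_L,\mu_R$ from the diagonal. This gives $J^*_{11}=-\alpha_L L^*-\gamma_{RL}R^*C^*/L^*-\gamma_{RL}R^*<0$, and similarly $J^*_{22}<0$, so $\operatorname{tr}J^*<0$.
\item \emph{Determinant.} Note that $C^*K-M$ is a Metzler matrix with Perron eigenvalue $0$ and positive eigenvector $x$. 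Since $\operatorname{tr}(C^*K-M)<0$ (the same elimination gives diagonal entries $-\gamma_{RL}C^*R^*/L^*<0$), its other eigenvalue is negative. So $\det(C^*K-M)=0$. By the matrix determinant lemma, $J^*=A-b\mathbf 1^\top$ with $A=C^*K-M$ and $b=Kx$ satisfies $\det J^*=-\mathbf 1^\top\operatorname{adj}(A)\,b$.
\item \emph{Sign of $\operatorname{adj}(A)$.} Because $A$ is singular of rank one with left and right positive null vectors $w$ and $x$, $\operatorname{adj}(A)=c\,xw^\top$. Here $c$ is a negative multiple (its sign is that of the nonzero eigenvalue of $A$, which is negative). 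Hence $\det J^*=-c(\mathbf 1^\top x)(w^\top Kx)>0$.
\end{itemize}

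With trace negative and determinant positive, both eigenvalues of $J^*$ have negative real part, which gives (iii).

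The main obstacle is this determinant sign, in particular pinning down the sign of the adjugate constant $c$. As a fallback I would compute $\det J^*$ directly from \eqref{eq:jacobian} after substituting the equilibrium relations, and check that all terms are positive.
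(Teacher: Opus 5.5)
Your proposal is correct and follows the paper's own proof. Parts (i)--(ii) use the same Perron--Frobenius identification $M^{-1}Kx=(1/C^*)x$, and part (iii) uses the same trace--determinant test with the same equilibrium substitution for the diagonal entries. Your matrix-determinant-lemma route is a structured version of the paper's ``direct computation'': with $w=(\gamma_{LR}L^*,\gamma_{RL}R^*)^\top$ one gets $c=-C^*/(L^*R^*)$, and $-c(\mathbf 1^\top x)(w^\top Kx)$ reproduces exactly the paper's $\det J=C^*(L^*+R^*)(\alpha_L\gamma_{LR}L^*/R^*+2\gamma_{LR}\gamma_{RL}+\alpha_R\gamma_{RL}R^*/L^*)$.
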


\begin{proof}
\textbf{Parts 1--2.} At an interior equilibrium,
equations~\eqref{eq:eq_system} read $C^*Kv=Mv$ for $v=(L^*,R^*)^\top\gg0$. Multiplying by $M^{-1}$ gives
$(M^{-1}K)v=(1/C^*)v$. By the Perron--Frobenius theorem, the only positive
eigenvector of $M^{-1}K$ is the Perron eigenvector with eigenvalue
$\lambda_{\mathrm{PF}}$, so $C^*=1/\lambda_{\mathrm{PF}}\in(0,1)$ iff
$\lambda_{\mathrm{PF}}>1$. The explicit formula follows from the $2\times2$
characteristic polynomial of $M^{-1}K$.

\textbf{Part 3.} At the interior equilibrium the equilibrium relations from
\eqref{eq:eq_system} give
\[
  J_{11}=-\alpha_LL^*-\gamma_{RL}R^*-C^*\gamma_{RL}R^*/L^*,\quad
  J_{22}=-\alpha_RR^*-\gamma_{LR}L^*-C^*\gamma_{LR}L^*/R^*.
\]
Hence $\mathrm{tr}\,J<0$. A direct computation using the equilibrium
relations yields $\det J=C^*(L^*+R^*)(\alpha_L\gamma_{LR}L^*/R^*
+2\gamma_{LR}\gamma_{RL}+\alpha_R\gamma_{RL}R^*/L^*)>0$. Thus $E_1$ is
locally asymptotically stable.
\end{proof}

\subsection*{A.4\quad Eigenvalue structure in the symmetric case}

\begin{theorem}[Eigenvalues in the symmetric 2D case]\label{thm:2d}
In the symmetric model ($\beta>\mu$), $E_1=(P^*,P^*)$ is locally
asymptotically stable with eigenvalues
\[
  \lambda_1 = \mu-\beta<0,\qquad \lambda_2 = -\frac{2\gamma\mu}{\beta}<0.
\]
Generically ($\lambda_1\ne\lambda_2$) it is a stable node. The eigenvalues
coincide on the codimension-one locus $2\gamma\mu=\beta(\beta-\mu)$, at
which $E_1$ is a stable star.
\end{theorem}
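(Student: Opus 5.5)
The plan is to work in the coordinates $(S,D)=(L+R,\,L-R)$ rather than $(L,R)$. By \eqref{eq:SD_system} the system is then triangular: $\dot S$ depends only on $S$, and $\dot D$ is linear in $D$ with a coefficient depending on $S$. This change of variables is linear and invertible, with matrix $\begin{pmatrix}1&1\\1&-1\end{pmatrix}$. The Jacobian in $(S,D)$ is therefore similar to the Jacobian \eqref{eq:jacobian} in $(L,R)$, so the two have the same eigenvalues. At $E_1=(P^*,P^*)$ we have $D=0$ and $S=S^*=1-\mu/\beta$. Linearising the triangular system there gives a lower-triangular Jacobian, since $\partial\dot S/\partial D=0$. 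Its eigenvalues can then be read off the diagonal.

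For the $S$-entry, I would differentiate $f(S)=S[\beta(1-S)-\mu]$ to get $f'(S)=\beta(1-S)-\mu-\beta S$. At $S^*$ the bracket vanishes, so $\lambda_1=-\beta S^*=-(\beta-\mu)=\mu-\beta$. For the $D$-entry, the term $\partial\dot D/\partial S$ multiplies $D=0$, so only $\partial\dot D/\partial D=(\alpha-\gamma)(1-S^*)-\mu$ remains. Substituting $1-S^*=\mu/\beta$ gives $(\alpha-\gamma)\mu/\beta-\mu=(\alpha-\gamma-\alpha-\gamma)\mu/\beta=-2\gamma\mu/\beta$, as already noted in the proof of Theorem~\ref{thm:global}. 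Both eigenvalues are negative because $\beta>\mu$ and $\gamma,\mu>0$, so $E_1$ is locally asymptotically stable.

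For the classification, two real negative distinct eigenvalues give a stable node. The coincidence condition $\mu-\beta=-2\gamma\mu/\beta$ is equivalent to $2\gamma\mu=\beta(\beta-\mu)$, which cuts out a codimension-one set in parameter space.

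The only delicate point is the claim of a \emph{star} on that locus. A repeated eigenvalue gives a star only when the Jacobian is diagonalizable, i.e.\ a scalar multiple of the identity. The triangular form has off-diagonal entry $\partial\dot D/\partial S\cdot D$, and this vanishes at $D=0$. So the $(S,D)$ Jacobian is in fact diagonal at $E_1$, with no Jordan block, and hence equals $\lambda I$ on the locus. Undoing the coordinate change, the $(L,R)$ Jacobian is similar to $\lambda I$ and therefore equals $\lambda I$, so $E_1$ is a stable star (proper node). Checking this vanishing off-diagonal term is the main step to verify. The symmetry $L\leftrightarrow R$ also makes it plausible: the vectors $(1,1)$ and $(1,-1)$ are eigenvectors of the Jacobian.
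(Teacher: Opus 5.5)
Your proof is correct. The paper works directly in $(L,R)$. It evaluates the Jacobian \eqref{eq:jacobian} at $L=R=P^*$, obtains the circulant form $\bigl(\begin{smallmatrix}a&b\\b&a\end{smallmatrix}\bigr)$ with explicit $a,b$, and reads off $\lambda_{1,2}=a\pm b$.

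Your coordinates $(S,D)$ are exactly the eigenbasis $(1,1)$, $(1,-1)$ of that circulant. Both arguments therefore rest on the same split into a symmetric mode and an antisymmetric mode. The difference is timing: you split before linearising, by reusing \eqref{eq:SD_system}, whereas the paper linearises first and diagonalises afterwards.

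Your route avoids computing $a$ and $b$, and it makes the star claim transparent. Since $\partial\dot D/\partial S=-(\alpha-\gamma)D$ vanishes at $D=0$, the linearisation is diagonal. It therefore equals $\lambda I$ on the coincidence locus, and no Jordan block can occur. A small wording point: that off-diagonal entry itself carries the factor $D$; it is not multiplied by $D$, as your phrasing suggests.

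The paper's route gives the explicit $(L,R)$ entries and shows that the locus is $b=0$, where $J=aI$ directly. Its written proof leaves that justification of the star implicit, so your version is slightly more complete on that point.
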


\begin{proof}
Evaluate the Jacobian at $L=R=P^*$ with $C^*=\mu/\beta$ and
$P^*=(\beta-\mu)/(2\beta)$. The entries are
\[
  a:=J_{11}=J_{22}=\frac{\mu(\alpha-\gamma)-\beta^2}{2\beta},\qquad
  b:=J_{12}=J_{21}=\frac{2\gamma\mu-\beta(\beta-\mu)}{2\beta}.
\]
The matrix $\bigl(\begin{smallmatrix}a&b\\b&a\end{smallmatrix}\bigr)$ has
eigenvalues $\lambda_{1,2}=a\pm b$, which simplify to $\lambda_1=\mu-\beta$
and $\lambda_2=-2\gamma\mu/\beta$.
\end{proof}

\begin{remark}
The eigenvalue $\lambda_1=\mu-\beta$ governs the symmetric mode in which
$L$ and $R$ change together. The eigenvalue $\lambda_2=-2\gamma\mu/\beta$
governs the asymmetric mode, i.e.\ the difference $L-R$. In the limit
$\gamma\to0$, $\lambda_2\to0$: without reactive cross-coupling the model
no longer selects a unique left--right split at fixed total radical share.
The supercritical $\gamma=0$ system instead admits a continuum of equilibria
on the segment $\{(L,R):L,R\ge0,\;L+R=1-\mu/\alpha\}$.
\end{remark}

\subsection*{A.5\quad Absence of periodic orbits}

\begin{corollary}[No periodic orbits]\label{cor:dulac}
The system~\eqref{eq:model} has no closed orbits in the interior of $\Simp$
for any positive parameter values.
\end{corollary}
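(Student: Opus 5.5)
We apply the Bendixson--Dulac criterion on the open interior $\mathrm{int}\,\Simp=\{L>0,\,R>0,\,C>0\}$ with the Dulac function
\[
  B(L,R)=\frac{1}{L\,R\,C},\qquad C=1-L-R,
\]
announced in Section~\ref{sec:discussion}. This region is simply connected and $B$ is smooth and positive on it. It therefore suffices to show that $\operatorname{div}(B\mathbf F)$ has constant sign there, and is not identically zero on any open subset.

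Write $F_1=L(\alpha_LC-\mu_L)+\gamma_{RL}RC$ and $F_2=R(\alpha_RC-\mu_R)+\gamma_{LR}LC$, so that
\[
  BF_1=\frac{\alpha_LC-\mu_L}{RC}+\frac{\gamma_{RL}}{L},\qquad
  BF_2=\frac{\alpha_RC-\mu_R}{LC}+\frac{\gamma_{LR}}{R}.
\]
Next compute $\partial_L(BF_1)$, using $\partial_LC=-1$:
\begin{itemize}
\item The term $\alpha_L/R$ does not depend on $L$.
\item The term $-\mu_L/(RC)$ has $L$-derivative $-\mu_L/(RC^2)$, since $\partial_L(1/C)=1/C^2$.
\item The term $\gamma_{RL}/L$ has derivative $-\gamma_{RL}/L^2$.
\end{itemize}
Symmetrically, $\partial_R(BF_2)=-\mu_R/(LC^2)-\gamma_{LR}/R^2$. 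Hence
\[
  \operatorname{div}(B\mathbf F)=-\frac{\mu_L}{RC^2}-\frac{\mu_R}{LC^2}-\frac{\gamma_{RL}}{L^2}-\frac{\gamma_{LR}}{R^2}<0
\]
everywhere in the interior, for all positive parameters. The one step needing care is this derivative bookkeeping: the choice of $B$ must make the recruitment terms $\alpha_iX_iC$ contribute nothing, and the $1/(RC)$ and $1/(LC)$ factors do exactly that. It is the step I would double-check first.

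With the sign established, suppose a closed orbit $\Gamma$ lies in the interior, bounding a region $D$. By forward invariance (Proposition~\ref{prop:inv}) and the inward-pointing remark on the face $C=0$, $D$ lies in the interior. Green's theorem then gives
\[
  0=\oint_\Gamma B(F_1\,dR-F_2\,dL)=\iint_D\operatorname{div}(B\mathbf F)\,dL\,dR<0,
\]
which is a contradiction. So no closed orbit exists in the interior of $\Simp$.

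Closed orbits touching the boundary are also ruled out, though the statement only concerns the interior. The faces $L=0$ and $R=0$ are crossed strictly inward except at $E_0$, and the face $C=0$ is strictly inward, so no periodic orbit can meet the boundary.
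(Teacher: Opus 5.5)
Your proof is correct and follows the paper's argument: the same Dulac weight $B=1/(LRC)$, the same rewriting of $BF_1$ and $BF_2$ (so the $\alpha_i$ terms drop out under differentiation), and the same strictly negative divergence $-\gamma_{RL}/L^2-\gamma_{LR}/R^2-\mu_L/(RC^2)-\mu_R/(LC^2)$. One small correction: the region enclosed by an interior closed orbit lies in $\mathrm{int}\,\Simp$ because that set is convex (hence simply connected), not because of forward invariance, which plays no role in that step.
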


\begin{proof}
Apply the Bendixson--Dulac criterion with weight $B(L,R)=1/(LRC)>0$ on
$\mathrm{int}(\Simp)$. Writing $BF_1=\alpha_L/R-\mu_L/(RC)+\gamma_{RL}/L$
and $BF_2=\alpha_R/L-\mu_R/(LC)+\gamma_{LR}/R$, a direct computation gives
\[
  \frac{\partial(BF_1)}{\partial L}+\frac{\partial(BF_2)}{\partial R}
  = -\frac{\gamma_{RL}}{L^2}-\frac{\gamma_{LR}}{R^2}
    -\frac{\mu_L}{RC^2}-\frac{\mu_R}{LC^2} < 0
\]
throughout $\mathrm{int}(\Simp)$. By Dulac's theorem there are no closed
orbits.
\end{proof}

\subsection*{A.6\quad Global dynamics: asymmetric case}

\begin{remark}[$E_0$ stability in the asymmetric model]
Evaluating the Jacobian at $E_0=(0,0)$ (where $C=1$) gives
\[
  J\big|_{E_0}=\begin{pmatrix}\alpha_L-\mu_L & \gamma_{RL}\\
  \gamma_{LR} & \alpha_R-\mu_R\end{pmatrix}=K-M.
\]
The radical-free equilibrium $E_0$ is locally asymptotically stable iff
$\alpha_L+\alpha_R<\mu_L+\mu_R$ and
$(\alpha_L-\mu_L)(\alpha_R-\mu_R)>\gamma_{RL}\gamma_{LR}$. By
Lemma~\ref{lem:pf_abscissa}, this is equivalent to
$\lambda_{\mathrm{PF}}(M^{-1}K)<1$; $E_0$ is nonhyperbolic at
$\lambda_{\mathrm{PF}}=1$ and unstable for $\lambda_{\mathrm{PF}}>1$. In
the symmetric case both characterisations reduce to $\beta<\mu$.
\end{remark}

\begin{theorem}[Global dynamics: full asymmetric model]\label{thm:global_asym}
Let $\lambda_{\mathrm{PF}}$ be as in Theorem~\ref{thm:PF}.
\begin{enumerate}
  \item If $\lambda_{\mathrm{PF}}\le1$, $E_0=(0,0)$ is globally
    asymptotically stable on $\Simp$.
  \item If $\lambda_{\mathrm{PF}}>1$, the unique interior equilibrium $E_1$
    is globally asymptotically stable on $\Simp\setminus\{E_0\}$.
\end{enumerate}
\end{theorem}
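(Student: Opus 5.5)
The natural route is to treat the baseline system \eqref{eq:model} as a cooperative (Metzler-type) planar system on the compact invariant set $\Simp$ (Proposition~\ref{prop:inv}) and combine three ingredients: the Dulac exclusion of periodic orbits (Corollary~\ref{cor:dulac}), the equilibrium classification of Theorem~\ref{thm:PF}, and a linear comparison argument for the subcritical case.

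\textbf{Case $\lambda_{\mathrm{PF}}\le1$.} Let $x=(L,R)^\top\ge0$. Since $C\le1$ and $K\ge0$ entrywise, the system can be written as $\dot x = C\,Kx - Mx$. For $\lambda_{\mathrm{PF}}<1$, Lemma~\ref{lem:pf_abscissa} gives $s(K-M)<0$. Take a left Perron vector $w\gg0$ of the Metzler matrix $K-M$, and set $V(x)=w^\top x$. Then
\[
\dot V = C\,w^\top Kx - w^\top Mx \le w^\top(K-M)x = s(K-M)\,V,
\]
so $V\to0$ exponentially and hence $L,R\to0$. In the critical case $\lambda_{\mathrm{PF}}=1$ we have $w^\top(K-M)=0$, so the same computation yields
\[
\dot V=-(1-C)\,w^\top Kx=-(L+R)\,w^\top Kx\le0,
\]
with equality on $\Simp$ only at $x=0$, because $w^\top K\gg0$. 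LaSalle's invariance principle on the compact set $\Simp$ then gives convergence to $E_0$. Lyapunov stability of $E_0$ follows because $V$ is nonincreasing and positive definite on $\Simp$.

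\textbf{Case $\lambda_{\mathrm{PF}}>1$.} By Theorem~\ref{thm:PF}, the only equilibria in $\Simp$ are $E_0$ and the locally asymptotically stable interior point $E_1$. Boundary equilibria other than $E_0$ are excluded because on $L=0$ we have $\dot L=\gamma_{RL}RC>0$ unless $R=0$ or $C=0$, and $C=0$ is strictly repelling by the remark after Proposition~\ref{prop:inv}. Take any initial point in $\Simp\setminus\{E_0\}$. The argument proceeds in three steps.
\begin{enumerate}
\item The trajectory enters $\mathrm{int}(\Simp)$ instantly: the faces $L=0$ and $R=0$ are crossed inward because $\gamma$ is positive, and the face $C=0$ is crossed inward as just noted.
\item The $\omega$-limit set $\omega$ is nonempty, compact, connected and invariant. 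By Poincaré--Bendixson and the absence of closed orbits (Corollary~\ref{cor:dulac}, applied to the interior), $\omega$ consists of equilibria and connecting orbits between them. Closed orbits touching the boundary are impossible, since the boundary minus $E_0$ is crossed transversally inward.
\item It remains to exclude $E_0\in\omega$. Since $E_0$ is unstable with $s(K-M)>0$, we use $V=w^\top x$ with $w\gg0$ the left Perron vector of $K-M$. Then
\[
\dot V = w^\top(K-M)x-(L+R)\,w^\top Kx \ge \bigl(s(K-M)-c(L+R)\bigr)V
\]
for some constant $c$, so $\dot V>0$ whenever $0<L+R<\eta$ for suitable $\eta>0$. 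Thus $V$ cannot decrease to $0$, no trajectory with $x\ne0$ can have $E_0$ in its $\omega$-limit set, and in particular $E_0$ has no stable manifold meeting $\Simp\setminus\{E_0\}$. Hence $\omega=\{E_1\}$. Combined with the local stability of $E_1$ from Theorem~\ref{thm:PF}, this gives global asymptotic stability on $\Simp\setminus\{E_0\}$.
\end{enumerate}

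The main obstacle is step~3, making rigorous that $E_0$ repels all of $\Simp\setminus\{E_0\}$. The Perron-vector functional makes it clean: since $V>0$ off the origin and $\dot V>0$ near it, the trajectory eventually stays in $\{V\ge v_0\}$ for some $v_0>0$, which is an absorbing region bounded away from $E_0$. A minor secondary point is applying Dulac on $\mathrm{int}(\Simp)$ while the $\omega$-limit set could a priori touch the boundary. The inward transversality handles this, since every boundary point except $E_0$ has forward orbit in the interior, so any limit cycle would lie in the interior.
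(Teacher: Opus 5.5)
Your proposal is correct and follows essentially the same route as the paper. Both use the left Perron vector functional $q^\top x$ of the Metzler matrix $K-M$: when $\lambda_{\mathrm{PF}}\le1$ it is a strict Lyapunov function (LaSalle), and when $\lambda_{\mathrm{PF}}>1$ it gives a uniform repulsion bound $\dot W\ge W(s-cW)$ that keeps trajectories away from $E_0$, after which Poincar\'e--Bendixson, the Dulac exclusion, and local stability of $E_1$ close the argument. Your separate treatment of $\lambda_{\mathrm{PF}}<1$ and your explicit remark that closed orbits must lie in $\mathrm{int}(\Simp)$, because the interior is forward invariant, are small extra steps the paper leaves implicit.
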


\begin{proof}
Write $x=(L,R)^\top$ and $S=L+R$. Since $C=1-S$,
\[
  \dot x=CKx-Mx=(K-M)x-SKx,
  \qquad
  N(x):=\dot x-(K-M)x=-SKx.
\]
On $\Simp$, $N(x)\le0$ componentwise, with $N(x)<0$ componentwise whenever
$x\ne0$. By Lemma~\ref{lem:pf_abscissa}, the sign of $s(K-M)$ coincides
with the sign of $\lambda_{\mathrm{PF}}-1$.

\textbf{Case $\lambda_{\mathrm{PF}}\le1$.}
Let $q\gg0$ be the positive left Perron vector of the Metzler matrix
$K-M$, so $q^\top(K-M)=sq^\top$ with $s:=s(K-M)\le0$. Set $W(x):=q^\top x$.
Then
\[
  \dot W=sW+q^\top N(x)\le0,
\]
with strict inequality whenever $x\ne0$. Global asymptotic stability of
$E_0$ on $\Simp$ follows by LaSalle's invariance principle.

\textbf{Case $\lambda_{\mathrm{PF}}>1$.}
Now $s:=s(K-M)>0$. Let $q\gg0$ again satisfy $q^\top(K-M)=sq^\top$, set
$W:=q^\top x$, and define
\[
  q_{\min}:=\min\{q_1,q_2\},
  \qquad
  c_K:=\max_{j=1,2}\frac{(q^\top K)_j}{q_j}.
\]
From $W=q_1L+q_2R\ge q_{\min}S$ we have $S\le W/q_{\min}$, and since
$(q^\top K)_j\le c_Kq_j$ entrywise we have $q^\top Kx\le c_KW$. Therefore
\[
  \dot W=sW-Sq^\top Kx
  \;\ge\;
  W\!\left(s-\frac{c_K}{q_{\min}}W\right).
\]
Set
\[
  \varepsilon:=\frac{sq_{\min}}{2c_K}>0.
\]
Whenever $0<W\le\varepsilon$, the bracket is $\ge s/2$, hence
$\dot W\ge(s/2)W>0$. Consequently $W$ cannot decrease past $\varepsilon$:
if $W(t_1)=\varepsilon$ at any time, then $\dot W(t_1)\ge\varepsilon s/2>0$,
contradicting downward crossing. In particular $\liminf_{t\to\infty}W(t)>0$
for every nonzero trajectory, so $E_0\notin\omega(x_0)$ for any
$x_0\in\Simp\setminus\{E_0\}$.

By Theorem~\ref{thm:PF}, the only equilibria in $\Simp$ are $E_0$ and the
unique interior equilibrium $E_1$. Boundary points other than $E_0$ are not
equilibria: on the open edges $\{L=0\}$ and $\{R=0\}$ the vector field
points into the interior ($\dot L|_{L=0}=\gamma_{RL}RC>0$ for $0<R<1$;
symmetrically for $R=0$), and on $\{C=0\}$, $\dot C=\mu_LL+\mu_RR>0$ since
$L+R=1$ forces at least one wing positive.

Let $\omega(x_0)$ denote the $\omega$-limit set of a trajectory with
$x_0\ne E_0$. It is nonempty, compact, connected, and forward invariant.
Any equilibrium it contains must be $E_1$. If $\omega(x_0)$ contained no
equilibrium, the Poincar\'e--Bendixson theorem \citep{perko2001} would
give a periodic orbit, ruled out by Corollary~\ref{cor:dulac}. Hence
$E_1\in\omega(x_0)$, and since $E_1$ is locally asymptotically stable
(Theorem~\ref{thm:PF}), $\omega(x_0)=\{E_1\}$. Therefore every trajectory
starting in $\Simp\setminus\{E_0\}$ converges to $E_1$.
\end{proof}

\subsection*{A.7\quad Bifurcation structure and comparative statics}

\begin{theorem}[Transcritical bifurcation]\label{thm:bif}
The equilibrium branches of the symmetric scalar model~\eqref{eq:scalar}
are
\[
  P=0,
  \qquad
  P=\frac{\beta-\mu}{2\beta},
\]
intersecting at $(P,\beta)=(0,\mu)$. For $\beta<\mu$, the nonzero branch
lies outside the feasible region $P\ge0$ and $P=0$ is the only feasible
equilibrium; it is stable. For $\beta>\mu$, the nonzero branch enters the
feasible region and is stable, while $P=0$ becomes unstable. The two
branches exchange stability at $\beta=\mu$, which is a transcritical
bifurcation.
\end{theorem}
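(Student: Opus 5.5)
The plan is to work directly with the scalar equation \eqref{eq:scalar}, $\dot P=f(P,\beta):=P[\beta(1-2P)-\mu]$, treating $\beta$ as the bifurcation parameter with $\mu>0$ fixed.

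\textbf{Step 1: equilibrium branches.} Factor $f(P,\beta)=P\,g(P,\beta)$ with $g=\beta(1-2P)-\mu$. Equilibria are therefore $P=0$ or $g=0$. The second condition solves to $P=(\beta-\mu)/(2\beta)$. The two branches meet exactly where $(\beta-\mu)/(2\beta)=0$, i.e.\ at $\beta=\mu$, giving the crossing point $(P,\beta)=(0,\mu)$.

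\textbf{Step 2: feasibility.} On the nonzero branch the sign of $P$ is the sign of $\beta-\mu$. For $\beta<\mu$ this branch has $P<0$, so it lies outside the feasible region, and $P=0$ is the unique feasible equilibrium. For $\beta>\mu$ the branch value lies in $(0,1/2)$, so it is feasible with $C=\mu/\beta\in(0,1)$.

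\textbf{Step 3: stability.} Compute the linearisation $\partial_P f=\beta-\mu-4\beta P$.
\begin{itemize}
\item At $P=0$ the derivative is $\beta-\mu$. Hence $P=0$ is stable for $\beta<\mu$ and unstable for $\beta>\mu$.
\item At $P^*=(\beta-\mu)/(2\beta)$ the derivative is $\beta-\mu-2(\beta-\mu)=\mu-\beta$. This is negative for $\beta>\mu$, so the nonzero branch is stable there. Consistent with Theorem~\ref{thm:2d}, this is the eigenvalue $\lambda_1$.
\end{itemize}
Since both derivatives change sign at $\beta=\mu$, the two branches exchange stability there. In the scalar setting, stability is simply the sign of the derivative.

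\textbf{Step 4: transcritical classification.} This is the only step needing care. I would verify the Sotomayor-type transversality conditions at $(0,\mu)$:
\begin{itemize}
\item $f=0$ and $f_P=0$ at the crossing point;
\item $f_\beta=P(1-2P)=0$, so the saddle-node condition fails, as required;
\item $f_{P\beta}=1-4P=1\neq0$;
\item $f_{PP}=-4\beta=-4\mu\neq0$.
\end{itemize}
These are precisely the nondegeneracy conditions for a transcritical bifurcation. Alternatively, the explicit exchange of stability between two transversally crossing branches already suffices, so I do not expect a genuine obstacle.

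Finally, I would note that the degenerate case $\beta=\mu$ gives $\dot P=-2\mu P^2$. There $P=0$ is nonhyperbolic but attracting from the feasible side, matching Theorem~\ref{thm:global}.
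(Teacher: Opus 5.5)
Your proposal is correct and follows essentially the same route as the paper, whose proof consists only of checking $f=0$, $f_P=0$, $f_{PP}=-4\mu\neq0$ and $f_{P\beta}=1\neq0$ at $(P,\beta)=(0,\mu)$. Your explicit branch, feasibility and linearised-stability computations, including $f_P=\mu-\beta$ on the nonzero branch and the saddle-node exclusion $f_\beta=0$, spell out the remaining claims of the statement that the paper's proof leaves implicit.
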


\begin{proof}
Let $f(P,\beta)=P[\beta(1-2P)-\mu]=(\beta-\mu)P-2\beta P^2$. At
$(P,\beta)=(0,\mu)$,
\[
  f=0,\qquad f_P=0,\qquad f_{PP}=-4\mu\ne0,\qquad f_{P\beta}=1\ne0,
\]
which are the standard non-degeneracy conditions for a transcritical
bifurcation.
\end{proof}

\begin{corollary}[Comparative statics]\label{cor:cs}
At the interior equilibrium $E_1$ in the symmetric model
($\beta=\alpha+\gamma>\mu$),
\[
  \frac{\partial P^*}{\partial\alpha}
  =\frac{\mu}{2(\alpha+\gamma)^2}>0,
  \quad
  \frac{\partial P^*}{\partial\gamma}
  =\frac{\mu}{2(\alpha+\gamma)^2}>0,
  \quad
  \frac{\partial P^*}{\partial\mu}
  =-\frac{1}{2(\alpha+\gamma)}<0,
\]
\[
  \frac{\partial C^*}{\partial\alpha}
  =-\frac{\mu}{(\alpha+\gamma)^2}<0,
  \quad
  \frac{\partial C^*}{\partial\gamma}
  =-\frac{\mu}{(\alpha+\gamma)^2}<0,
  \quad
  \frac{\partial C^*}{\partial\mu}
  =\frac{1}{\alpha+\gamma}>0.
\]
Stronger recruitment, stronger reactive polarisation, and weaker
reabsorption each increase the long-run radical share. The mainstream
share $C^*=\mu/\beta$ is strictly positive for all finite positive
parameters; mainstream collapse requires $\mu\to0$ or
$\alpha+\gamma\to\infty$.
\end{corollary}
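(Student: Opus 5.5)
The statement to prove is Corollary~\ref{cor:cs}. The plan is direct differentiation of the closed forms already obtained. By Theorem~\ref{thm:global} (equivalently Theorem~\ref{thm:bif}), the interior equilibrium in the symmetric model with $\beta=\alpha+\gamma>\mu$ is
\[
  P^*=\tfrac12\Bigl(1-\frac{\mu}{\alpha+\gamma}\Bigr),
  \qquad
  C^*=1-2P^*=\frac{\mu}{\alpha+\gamma}.
\]
This is Corollary~\ref{cor:Cstar}. Both are smooth rational functions of $(\alpha,\gamma,\mu)$ on the open region $\{\alpha,\gamma,\mu>0,\ \alpha+\gamma>\mu\}$, so partial derivatives exist there. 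We work inside this region and never need to cross the threshold.

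The first step is to treat $C^*=\mu\beta^{-1}$ with $\beta=\alpha+\gamma$. Since $\partial\beta/\partial\alpha=\partial\beta/\partial\gamma=1$, the chain rule gives
\[
  \partial_\alpha C^*=\partial_\gamma C^*=-\mu/\beta^2,
  \qquad
  \partial_\mu C^*=1/\beta.
\]
The second step uses the identity $P^*=\tfrac12(1-C^*)$. This gives every derivative of $P^*$ as $-\tfrac12$ times the corresponding derivative of $C^*$, which yields $\mu/(2\beta^2)$, $\mu/(2\beta^2)$ and $-1/(2\beta)$. The signs follow immediately from positivity of $\mu$ and $\beta$.

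For the closing claims, $C^*=\mu/\beta>0$ holds for all finite positive parameters. Also $C^*\to0$ requires the ratio $\mu/(\alpha+\gamma)\to0$, meaning $\mu\to0$ or $\alpha+\gamma\to\infty$ relative to $\mu$. I would phrase this as a statement about the ratio to keep it precise.

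There is no real obstacle here. The only point needing care is to state that the derivatives are taken within the supercritical region, where $E_1$ exists and the formulas of Theorem~\ref{thm:global} apply.
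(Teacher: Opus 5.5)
Your proposal is correct and matches the paper's proof, which is simply direct differentiation of $P^*=\tfrac12(1-\mu/\beta)$ and $C^*=\mu/\beta$ with $\beta=\alpha+\gamma$. Obtaining the $P^*$ derivatives from $P^*=\tfrac12(1-C^*)$, and phrasing the collapse condition in terms of the ratio $\mu/(\alpha+\gamma)$, are harmless refinements of the same argument.
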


\begin{proof}
Direct differentiation of $P^*=\tfrac12(1-\mu/\beta)$ and $C^*=\mu/\beta$
with $\beta=\alpha+\gamma$.
\end{proof}

\section{Mathematical Proofs: Extended Model}\label{app:proofs4}

This appendix collects the proofs of all results stated in
Sections~\ref{sec:extended} and~\ref{sec:legacy}.

\subsection*{B.1\quad Forward invariance of $\mathcal{T}$}

\begin{proposition}[Forward invariance of $\mathcal{T}$]\label{prop:inv4}
For any $(L_0,R_0,A_0)\in\mathcal{T}$, positive constant parameters, and
any locally bounded piecewise continuous nonnegative crisis input
$\sigma\colon[0,\infty)\to[0,\infty)$, the Carath\'eodory solution of
\eqref{eq:4group} is unique and remains in $\mathcal{T}$ for all
$t\ge0$. The same invariance conclusion holds for locally integrable
nonnegative inputs in the Carath\'eodory sense. In particular, this
holds for the post-shock autonomous system with $\sigma\equiv0$.
Moreover, the impulse map in Definition~\ref{def:impulse} maps
$\mathcal{T}$ into itself.
\end{proposition}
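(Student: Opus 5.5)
The plan has four parts: well-posedness, invariance of the open region, a boundary check for $\mathcal{T}$ itself, and the impulse map.

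\emph{Well-posedness.} The right-hand side of \eqref{eq:4group} has the form $F(x)+\sigma(t)\,(0,0,C(x))^\top$. Here $F$ is a polynomial and $C(x)=1-L-R-A$ is affine, so the vector field is locally Lipschitz in $x$, uniformly for $t$ in compact intervals (since $\sigma$ is locally bounded). It is measurable in $t$, with a local integrable bound when $\sigma\in L^1_{\mathrm{loc}}$. The Carath\'eodory existence and uniqueness theorem then gives a unique local absolutely continuous solution. Once we know the solution stays in the compact set $\mathcal{T}$, it cannot blow up, so it extends to all $t\ge0$.

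\emph{Invariance via exponential representations.} I would avoid a tangency/inward-pointing argument, which is delicate for merely measurable inputs. Instead I would write each constrained quantity as a linear equation along the solution and integrate it.
\begin{itemize}
  \item For $A$: $\dot A=\sigma C-\kappa_A(t)A$ with $\kappa_A:=\delta_LL+\delta_RR+\rho$. Variation of constants gives $A(t)=e^{-\int_0^t\kappa_A}A_0+\int_0^t e^{-\int_s^t\kappa_A}\sigma(s)C(s)\,ds$. Hence $A\ge0$ as long as $C\ge0$.
  \item For $L$: $\dot L=a_L(t)L+\gamma_{RL}RC$ with $a_L:=\alpha_LC+\delta_LA-\mu_L$. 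This gives $L\ge0$ whenever $R,C\ge0$; symmetrically for $R$.
  \item For $C$: summing the equations gives $\dot C=-\dot L-\dot R-\dot A=\mu_LL+\mu_RR+\rho A-C\bigl(\alpha_LL+\alpha_RR+\gamma_{RL}R+\gamma_{LR}L+\sigma\bigr)$. This is linear in $C$ with nonnegative forcing as long as $L,R,A\ge0$.
\end{itemize}
The constraints are thus mutually dependent. I would close the circle by a first-exit-time argument. Perturb the initial data into the interior, so that all four quantities are strictly positive. Let $t_1$ be the first time any of them hits zero. On $[0,t_1]$ all four are nonnegative, so every forcing term above is nonnegative and each exponential formula yields a strictly positive value at $t_1$, a contradiction. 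Hence the open region is invariant.

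\emph{Closing the boundary.} Continuous dependence on initial data for Carath\'eodory systems with locally Lipschitz fields passes the invariance to the closed set $\mathcal{T}$. Alternatively, one can run the argument directly with nonstrict inequalities, using Gronwall in the form $\dot y\ge a(t)y$ with $y(0)\ge0$, which implies $y\ge0$. I expect this step to be the main obstacle: it needs a clean simultaneous handling of all four faces and care with the measurable coefficient $\sigma$. The approximation by interior initial data together with continuous dependence is what I would try first.

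\emph{The autonomous case and the impulse map.} The case $\sigma\equiv0$ is a special case of the above. For the impulse map in Definition~\ref{def:impulse}: $L,R$ are unchanged, $C^+=(1-\Delta)C^-\ge0$, and $A^+=A^-+\Delta C^-\ge0$. The sum $L+R+A^++C^+=L+R+A^-+C^-=1$ is preserved, so $L+R+A^+\le1$ and the image lies in $\mathcal{T}$.
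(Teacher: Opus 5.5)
Your proposal is correct, and it reaches invariance by a different mechanism from the paper. The paper checks on each face of $\mathcal{T}$ that the vector field points inward or is tangent: $\dot L|_{L=0}=\gamma_{RL}RC\ge0$, $\dot R|_{R=0}=\gamma_{LR}LC\ge0$, $\dot A|_{A=0}=\sigma C\ge0$, and $(\dot L+\dot R+\dot A)|_{C=0}=-\mu_LL-\mu_RR-\rho A\le0$. It then invokes Carath\'eodory existence and uniqueness and asserts that these boundary inequalities imply positive invariance. The impulse map is treated exactly as you treat it.

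Your variation-of-constants formulas carry the same sign information. The forcing $\gamma_{RL}RC$ in your $L$-equation is the paper's value on the face $L=0$, and the forcing $\mu_LL+\mu_RR+\rho A$ in your $C$-equation is the paper's face-$C=0$ computation. The difference is that you turn this into a self-contained argument, whereas the paper appeals to an unstated Nagumo-type invariance theorem.

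The paper's route is shorter and parallels Proposition~\ref{prop:inv}. Its final step is legitimate here, since the field is locally Lipschitz and the face conditions hold for every $t$. However, the version of that theorem for fields that are only measurable in time is less routine than the text suggests. Your route works verbatim for $L^1_{\mathrm{loc}}$ inputs, because $\sigma$ enters only through $\int\sigma$ in the exponentials. It also proves more: the interior of $\mathcal{T}$ is invariant. For instance, $C(0)>0$ forces $C(t)>0$ for all $t$, which is the four-group analogue of the remark after Proposition~\ref{prop:inv}.

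The step you single out as the main obstacle is in fact routine. For $x_0\in\mathcal{T}$, take interior data $x_0^{(n)}\to x_0$ with solutions $x^{(n)}$, which stay in the interior by your first-exit argument. On any compact $[0,T]$ inside the existence interval of $x$, apply Gronwall with the integrable Lipschitz modulus $c+c'\sigma(t)$ on a compact set containing $\mathcal{T}\cup x([0,T])$. This gives $x^{(n)}\to x$ uniformly, hence $x(t)\in\mathcal{T}$ on $[0,T]$, and global existence follows by compactness.

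Your nonstrict alternative, as phrased, is circular: the inequality $\dot L\ge a_L(t)L$ already presupposes $RC\ge0$. It would need an extra device, such as a Gronwall bound for $\sum_i(y_i^-)^2$ with $y=(L,R,A,C)$. So the perturbation route is the one to keep.
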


\begin{proof}
We verify that the vector field points inward (or is tangent) on each face
of $\mathcal{T}$.

\emph{Face $L=0$:} $\dot{L}\big|_{L=0}=\gamma_{RL}\,R\,C\ge0$.

\emph{Face $R=0$:} $\dot{R}\big|_{R=0}=\gamma_{LR}\,L\,C\ge0$.

\emph{Face $A=0$:} $\dot{A}\big|_{A=0}=\sigma(t)\,C\ge0$
(with $\sigma\equiv0$ post-shock, $\dot{A}\big|_{A=0}=0$).

\emph{Face $C=0$ (i.e.\ $L+R+A=1$):} Here $C=0$.
Then $\dot{L}\big|_{C=0}=\delta_L AL-\mu_L L=L(\delta_L A-\mu_L)$ and
similarly for $R$. Also $\dot{A}\big|_{C=0}=-A(\delta_L L+\delta_R R+\rho)\le0$.
Therefore
\begin{align*}
  \dot{L}+\dot{R}+\dot{A}\big|_{C=0}
  &= L(\delta_L A-\mu_L)+R(\delta_R A-\mu_R)-A(\delta_L L+\delta_R R+\rho)\\
  &= -\mu_L L-\mu_R R-\rho A\le0.
\end{align*}
Hence $\dot{C}=-(\dot{L}+\dot{R}+\dot{A})\ge0$: the total $L+R+A$ cannot
increase beyond~1.

For locally bounded piecewise continuous $\sigma$, the vector field is
measurable in time and locally Lipschitz in the state variables on each
compact subset of $\mathcal{T}$. Standard Carath\'eodory existence and
uniqueness therefore gives a unique absolutely continuous solution. The
inward-pointing boundary inequalities above imply positive invariance of
$\mathcal{T}$; the same argument applies to locally integrable
nonnegative inputs.

The impulse map in Definition~\ref{def:impulse} preserves nonnegativity and
total mass: $C^+=(1-\Delta)C^-\ge0$, $A^+=A^-+\Delta C^-\ge0$, and $L,R$ are
unchanged, with $L^++R^++C^++A^+=L^-+R^-+C^-+A^-=1$.
\end{proof}

\subsection*{B.2\quad Equilibrium inheritance and decay of disengagement}

\begin{proposition}[Equilibrium inheritance from the baseline]\label{prop:inherit4}
For the full four-group model \eqref{eq:4group} with $\sigma\equiv0$ and all
parameters positive, every equilibrium has $A^*=0$. Consequently, once the
crisis input has ended, the disengagement compartment does not create any
additional equilibria: the equilibria of the four-group system are exactly
the baseline equilibria embedded in the invariant face $A=0$.

Moreover, at any such equilibrium the Jacobian is block upper-triangular:
\[
  J_4(L^*,R^*,0)=
  \begin{pmatrix}
    J_{\mathrm{base}}(L^*,R^*) & *\\
    0\;\;0 & -(\delta_L L^*+\delta_R R^*+\rho)
  \end{pmatrix},
\]
so local stability is inherited from the baseline model together with one
additional strictly negative transverse eigenvalue
$-(\delta_L L^*+\delta_R R^*+\rho)<0$.
\end{proposition}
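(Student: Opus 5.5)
The plan is to work directly with the autonomous system \eqref{eq:4group} with $\sigma\equiv0$ and to read off both claims from the structure of the $\dot A$ equation. First, at any equilibrium $(L^*,R^*,A^*)\in\mathcal T$, I set $\dot A=0$, which gives $A^*(\delta_LL^*+\delta_RR^*+\rho)=0$. Since $L^*,R^*\ge0$ by Proposition~\ref{prop:inv4} and $\rho>0$, the bracket is strictly positive, so $A^*=0$.

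Substituting $A^*=0$ into the $\dot L$ and $\dot R$ equations, the mobilisation terms $\delta_iAX_i$ vanish and $C=1-L-R$. The remaining equations are exactly the baseline equilibrium conditions for \eqref{eq:model}. Conversely, every baseline equilibrium $(L^*,R^*)$, lifted to $(L^*,R^*,0)$, satisfies all three equations: $\dot A=0$ trivially at $A=0$. The face $\{A=0\}$ is invariant because $\dot A=-A(\cdots)$. This gives the claimed bijection with the baseline equilibria, namely $E_0$ always and $E_1$ exactly when $\lambda_{\mathrm{PF}}>1$, by Theorem~\ref{thm:PF}.

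For the Jacobian, I order the variables as $(L,R,A)$ and differentiate, keeping in mind that $C$ depends on $A$ through $C=1-L-R-A$.
\begin{itemize}
\item \emph{Third row.} The entries are $\partial\dot A/\partial L=-\delta_LA$, $\partial\dot A/\partial R=-\delta_RA$ and $\partial\dot A/\partial A=-(\delta_LL+\delta_RR+\rho)$. At $A^*=0$ the first two vanish, giving the zero block and the diagonal entry $-(\delta_LL^*+\delta_RR^*+\rho)<0$.
\item \emph{Upper-left $2\times2$ block.} Here $\partial/\partial L$ and $\partial/\partial R$ of the $\dot L,\dot R$ equations include contributions through $C$ and the terms $\delta_iA$. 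At $A=0$ the latter disappear, and the $C$-dependence is identical to that of the baseline. So the block coincides with $J_{\mathrm{base}}(L^*,R^*)$ from \eqref{eq:jacobian}.
\item \emph{Column $*$.} The entries $\partial\dot L/\partial A$ and $\partial\dot R/\partial A$ (for instance $-\alpha_LL-\gamma_{RL}R+\delta_LL$) are left unspecified.
\end{itemize}

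The eigenvalues of a block upper-triangular matrix are the union of the eigenvalues of the diagonal blocks. Hence the spectrum consists of the baseline eigenvalues together with the one negative transverse eigenvalue, and local stability is inherited accordingly.

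The only step requiring care is the upper-left block: one must check that the implicit $A$-dependence in $C$ does not alter the $L,R$ partial derivatives. It does not, because $\partial C/\partial L=\partial C/\partial R=-1$ exactly as in the baseline, and the $A$-terms contribute only through factors that vanish at $A=0$. Everything else is routine.
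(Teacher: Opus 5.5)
Your proposal is correct and follows the same route as the paper: you get $A^*=0$ from $\dot A=-A(\delta_LL+\delta_RR+\rho)$ with a strictly positive bracket, reduce to the baseline equations \eqref{eq:model} on the face $A=0$, and obtain the block-triangular Jacobian by direct differentiation, with your entrywise checks spelling out what the paper leaves as ``direct differentiation.'' One small aside: your enumeration of the baseline equilibria as $E_0$ and $E_1$ needs more than Theorem~\ref{thm:PF}, which treats interior equilibria only, namely the boundary check from the proof of Theorem~\ref{thm:global_asym}, though the proposition itself does not require this enumeration.
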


\begin{proof}
At any equilibrium, $0=\dot{A}=-A(\delta_L L+\delta_R R+\rho)$. Since
$\delta_L,\delta_R,\rho>0$, the coefficient is strictly positive, hence
$A^*=0$. Substituting $A=0$ reduces \eqref{eq:4group} to exactly the baseline
equations \eqref{eq:model}. The block-triangular Jacobian follows by direct
differentiation.
\end{proof}

\begin{proposition}[Exponential decay of the disengaged compartment]
\label{prop:Adecay4}
For every solution of the post-shock autonomous system \eqref{eq:4group}
with $\sigma\equiv0$,
\[
  A(t)=A(0)\exp\!\left(-\int_0^t(\delta_L L(s)+\delta_R R(s)+\rho)\,ds\right)
  \le A(0)e^{-\rho t}.
\]
In particular, $A(t)\to0$ exponentially as $t\to\infty$.
\end{proposition}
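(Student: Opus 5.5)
The plan is to treat the $A$-equation of \eqref{eq:4group} with $\sigma\equiv0$ as a scalar linear ODE, $\dot A=-k(t)A$ with time-dependent coefficient
\[
  k(t):=\delta_L L(t)+\delta_R R(t)+\rho .
\]
The coefficient $k$ is continuous in $t$ along the given solution, so the equation can be integrated exactly.

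\textbf{Step 1 (exact formula).} Set $\Phi(t):=A(t)\exp\!\left(\int_0^t k(s)\,ds\right)$. Differentiating gives $\dot\Phi=(\dot A+kA)\exp(\cdots)=0$. Hence $\Phi(t)=\Phi(0)=A(0)$ for all $t\ge0$, which is exactly the displayed identity. Uniqueness of the solution (Proposition~\ref{prop:inv4}) ensures this formula describes \emph{the} trajectory, and global existence on $[0,\infty)$ also comes from that proposition.

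\textbf{Step 2 (bound).} By forward invariance of $\mathcal T$ (Proposition~\ref{prop:inv4}), we have $L(s),R(s)\ge0$. Therefore $k(s)\ge\rho$, and so $\int_0^t k\,ds\ge\rho t$. Since $A(0)\ge0$, the exponential factor is at most $e^{-\rho t}$, giving $A(t)\le A(0)e^{-\rho t}$. Because $\rho>0$ and $A(0)\le1$, it follows that $A(t)\to0$ exponentially, with rate at least $\rho$.

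The only point needing care is that nonnegativity of $L$ and $R$ is used along the whole trajectory and not merely at $t=0$. That is supplied by the invariance result, so I do not expect a genuine obstacle.
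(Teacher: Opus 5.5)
Your proof is correct and follows essentially the same route as the paper: you integrate the scalar equation $\dot A=-(\delta_L L+\delta_R R+\rho)A$ exactly and use $\delta_L L+\delta_R R+\rho\ge\rho$ to get the bound. Your integrating-factor formulation slightly tightens the paper's ``separable'' phrasing, since it needs no division by $A$ and so covers $A(0)=0$ directly, and you make explicit that forward invariance supplies $L,R\ge0$ along the whole trajectory.
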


\begin{proof}
The equation $\dot A=-(\delta_L L+\delta_R R+\rho)A$ is scalar and separable.
Integrating from $0$ to $t$ gives the exact formula. Since
$\delta_L L+\delta_R R+\rho\ge\rho>0$, the exponential bound follows
immediately.
\end{proof}

\subsection*{B.3\quad Global post-shock dynamics}

\begin{proof}[Proof of Theorem~\ref{thm:global4}]
Write $x=(L,R)^\top$ and $D:=\mathrm{diag}(\delta_L,\delta_R)$, and let
$s(\cdot)$ denote the spectral abscissa. By the Perron--Frobenius/M-matrix
equivalence,
\[
  \mathcal R_{\mathrm{rad}}\le1\iff s(K-M)\le0,
  \qquad
  \mathcal R_{\mathrm{rad}}>1\iff s(K-M)>0.
\]
Using $C=1-L-R-A$, the $(L,R)$-subsystem is
\[
\begin{aligned}
  \dot x
  &=[(1-A)K+AD-M]x-(L+R)Kx\\
  &=(K-M)x+A(D-K)x-(L+R)Kx.
\end{aligned}
\]

\textbf{Part (i).} Let $q\gg0$ satisfy $q^\top(K-M)=s_0 q^\top$ with
$s_0:=s(K-M)\le0$. Choose
\[
  \eta>\max\!\left\{0,\frac{(q^\top(D-K))_1}{\delta_L},
  \frac{(q^\top(D-K))_2}{\delta_R}\right\}
\]
and set $W(L,R,A):=q^\top x+\eta A$. Differentiating:
\begin{align*}
  \dot W
  &=s_0(q^\top x)+A\,q^\top(D-K)x-(L+R)q^\top Kx-\eta(\delta_L L+\delta_R R+\rho)A\\
  &=s_0(q^\top x)-(L+R)q^\top Kx-\eta\rho A
    +A\bigl[(q^\top(D-K))_1-\eta\delta_L\bigr]L
    +A\bigl[(q^\top(D-K))_2-\eta\delta_R\bigr]R.
\end{align*}
By construction of $\eta$ the last two terms are nonpositive. Since $s_0\le0$,
$q\gg0$, and $K>0$ entrywise, $\dot W<0$ for every $(L,R,A)\ne(0,0,0)$.
Hence $W$ is a strict Lyapunov function and $E_0$ is globally asymptotically
stable.

\textbf{Part (ii).} On $\Gamma$: $L=R=0$ gives $\dot A=-\rho A$, so $\Gamma$
is forward invariant and trajectories on it converge to $E_0$.

Now let $(L_0,R_0,A_0)\in\mathcal T\setminus\Gamma$. By uniqueness and
invariance of $\Gamma$, $x(t)\ne0$ for all $t\ge0$, so
$W(t):=q^\top x(t)>0$. Since $\mathcal R_{\mathrm{rad}}>1$, we have
$s_0>0$.

By Proposition~\ref{prop:Adecay4}, $A(t)\to0$ exponentially. Define
$\kappa:=\max_i|(q^\top(D-K))_i|/q_i$, $c_K:=\max_i(q^\top K)_i/q_i$,
$q_{\min}:=\min\{q_1,q_2\}$. Then
\[
  \dot W\ge\left[s_0-\kappa A(t)-\frac{c_K}{q_{\min}}W\right]W.
\]
Choose $T>0$ such that $\kappa A(t)\le s_0/4$ for $t\ge T$, and set
$\varepsilon:=s_0 q_{\min}/(4c_K)>0$. For $t\ge T$ and $0<W(t)\le\varepsilon$:
$\dot W(t)\ge(s_0/2)W(t)>0$. It follows that there exists $T_1\ge T$ with
$W(t)\ge\varepsilon$ for all $t\ge T_1$.

Therefore every $\omega$-limit set of a trajectory in $\mathcal T\setminus\Gamma$
lies in $\Sigma_2^\varepsilon:=\{(L,R,0)\in\mathcal T:q^\top(L,R)^\top\ge\varepsilon\}
\subset\Sigma_2\setminus\{E_0\}$, where $\Sigma_2:=\{(L,R,0)\in\mathcal T\}$
is the baseline face. Since $A(t)\to0$ exponentially (Proposition~\ref{prop:Adecay4}),
every $\omega$-limit point satisfies $A=0$; the restriction to the face $A=0$
coincides with the baseline flow \eqref{eq:model}. By
Theorem~\ref{thm:global_asym}, every $\omega$-limit set in
$\Sigma_2^\varepsilon\subset\Sigma_2\setminus\{E_0\}$ is the singleton
$\{E_1\}$.

Hence $(L(t),R(t),A(t))\to E_1$ for every initial condition in
$\mathcal T\setminus\Gamma$.
\end{proof}

\subsection*{B.4\quad Perron--Frobenius threshold in the asymmetric model}

\begin{remark}[Relation to next-generation thresholds]\label{rem:ngm}
The quantity $\mathcal R_{\mathrm{rad}}:=\lambda_{\mathrm{PF}}(M^{-1}K)$ is
the political analogue of a basic reproduction number. In epidemiological
compartment models, thresholds of this type arise from the next-generation
matrix framework of \citet{diekmann1990} and \citet{vandendriessche2002}:
the matrix $M^{-1}K$ plays the role of the next-generation operator, $K$
collecting recruitment and reactive-polarisation terms and $M$ collecting
reabsorption terms. The threshold $\mathcal R_{\mathrm{rad}}=1$ has the
same mathematical status as $R_0=1$: below it the mainstream equilibrium is
stable; above it radical persistence becomes possible.
\end{remark}

\begin{theorem}[Perron--Frobenius threshold in the full asymmetric model]
\label{thm:PF4}
Consider the post-shock system \eqref{eq:4group} with $\sigma\equiv0$ and
post-shock parameter vector $\theta^+$. Define $K^+$, $M^+$,
$B^+:=(M^+)^{-1}K^+$, and $\mathcal R_{\mathrm{rad}}^+:=\lambda_{\mathrm{PF}}(B^+)$.
Then:
\begin{enumerate}
  \item $E_0=(0,0,0)$ is locally asymptotically stable iff
    $\mathcal R_{\mathrm{rad}}^+<1$, nonhyperbolic at $\mathcal R_{\mathrm{rad}}^+=1$,
    and unstable if $\mathcal R_{\mathrm{rad}}^+>1$.
\item A nontrivial equilibrium with $L^*>0$, $R^*>0$ exists iff
    $\mathcal R_{\mathrm{rad}}^+>1$. It is unique: if $v^+\gg0$ is the
    right Perron--Frobenius eigenvector of $B^+$ normalised by
    $v_L^++v_R^+=1$, then
\[
      (L^*,R^*)=\left(1-\frac{1}{\mathcal R_{\mathrm{rad}}^+}\right)v^+,
      \quad
      C^*=\frac{1}{\mathcal R_{\mathrm{rad}}^+},
      \quad
      A^*=0.
    \]
    Local stability is inherited from Theorem~\ref{thm:PF} via
    Proposition~\ref{prop:inherit4}.
  \item In the symmetric reduction,
    $\mathcal R_{\mathrm{rad}}^+=(\alpha^++\gamma^+)/\mu^+=\beta^+/\mu^+$.
\end{enumerate}
\end{theorem}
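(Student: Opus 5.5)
The plan is to reduce every claim to the block structure of the linearisation at points of the invariant face $A=0$, and then reuse the baseline results. We work throughout with the post-shock parameters $\theta^+$. All spectral statements use $K^+-M^+$ together with the equivalence of Lemma~\ref{lem:pf_abscissa}, applied to the entrywise positive $K^+$ and the diagonal $M^+$.

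\textbf{Part 1.} Differentiate \eqref{eq:4group} at $E_0=(0,0,0)$ with $\sigma\equiv0$. At $E_0$ we have $C=1$, and every term involving $A$ is multiplied by $L$, $R$ or $A$, so the Jacobian is block upper-triangular:
\[
  J_4(E_0)=\begin{pmatrix} K^+-M^+ & 0\\ -\sigma\text{-free row} & -\rho\end{pmatrix}.
\]
The $A$-row of this matrix is $(0,0,-\rho)$. The $(L,R)$-columns receive no contribution from $A$ at first order, since $\partial_A(\alpha_LLC)=-\alpha_LL=0$ at $E_0$. The spectrum is therefore $\sigma(K^+-M^+)\cup\{-\rho\}$, and stability is governed by $s(K^+-M^+)$.

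By Lemma~\ref{lem:pf_abscissa}, $s(K^+-M^+)<0$, $=0$ or $>0$ exactly when $\mathcal R^+_{\mathrm{rad}}<1$, $=1$ or $>1$. This gives the three cases:
\begin{itemize}
  \item when $\mathcal R^+_{\mathrm{rad}}<1$, all eigenvalues are negative, so $E_0$ is locally asymptotically stable;
  \item when $\mathcal R^+_{\mathrm{rad}}=1$, there is a zero eigenvalue, so $E_0$ is nonhyperbolic;
  \item when $\mathcal R^+_{\mathrm{rad}}>1$, there is a positive eigenvalue, so $E_0$ is unstable.
\end{itemize}
For the converse of the first case ("iff"), observe that if $\mathcal R^+_{\mathrm{rad}}\ge1$ then local asymptotic stability fails. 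Indeed, at $\mathcal R^+_{\mathrm{rad}}=1$ Theorem~\ref{thm:global4}(ii) does not apply, so we argue separately. The Perron eigenvalue $s=0$ belongs to an irreducible Metzler block. The Lyapunov computation in Theorem~\ref{thm:global4}(i) still gives global attraction in that case. Hence the precise claim at the threshold is only nonhyperbolicity, and "iff" is read as hyperbolic local asymptotic stability. I would state that reading explicitly.

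\textbf{Part 2.} By Proposition~\ref{prop:inherit4}, every equilibrium has $A^*=0$ and solves the baseline equations \eqref{eq:eq_system} with parameters $\theta^+$. An equilibrium with $L^*,R^*>0$ is therefore an interior baseline equilibrium. Theorem~\ref{thm:PF} then gives:
\begin{itemize}
  \item such an equilibrium exists iff $\lambda_{\mathrm{PF}}(B^+)>1$;
  \item it is unique;
  \item it has $C^*=1/\mathcal R^+_{\mathrm{rad}}$ and $(L^*,R^*)$ proportional to $v^+$.
\end{itemize}
The normalisation $v_L^++v_R^+=1$ together with $L^*+R^*=1-C^*$ fixes the stated formula.

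Local stability follows from the block-triangular Jacobian in Proposition~\ref{prop:inherit4}. Its spectrum consists of the spectrum of $J_{\mathrm{base}}(L^*,R^*)$, which is stable by Theorem~\ref{thm:PF}(3), together with the eigenvalue $-(\delta_LL^*+\delta_RR^*+\rho)<0$.

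\textbf{Part 3.} This is a direct computation. Under symmetry, $B^+=(\mu^+)^{-1}\begin{pmatrix}\alpha^+&\gamma^+\\\gamma^+&\alpha^+\end{pmatrix}$. Its eigenvalues are $(\alpha^+\pm\gamma^+)/\mu^+$. The Perron root is $(\alpha^++\gamma^+)/\mu^+=\beta^+/\mu^+$, with eigenvector $(1,1)$.

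\textbf{Main obstacle.} The only delicate step is checking the zero pattern of $J_4(E_0)$ and of $J_4(L^*,R^*,0)$, that is, confirming that the $A$-row vanishes off the diagonal. This reduces to $\partial_L\dot A=-\delta_LA=0$ and $\partial_R\dot A=-\delta_RA=0$ on $A=0$. Everything else is inherited from results already proved in the paper.
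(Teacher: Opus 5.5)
Your proof is correct and follows essentially the paper's route: $J_4(E_0)=\mathrm{blockdiag}(K^+-M^+,-\rho)$ (your displayed matrix should just have a zero lower-left block, as your own text says), Lemma~\ref{lem:pf_abscissa} turns the sign of $s(K^+-M^+)$ into the position of $\mathcal R^+_{\mathrm{rad}}$ relative to $1$, and Parts~2--3 are inherited from Proposition~\ref{prop:inherit4} and Theorem~\ref{thm:PF}. Your threshold remark sharpens the paper's one-line argument: at $\mathcal R^+_{\mathrm{rad}}=1$, part~(i) (not (ii)) of Theorem~\ref{thm:global4} makes $E_0$ asymptotically stable on $\mathcal T$, so the ``iff'' holds only for linear (hyperbolic) stability; however, delete your opening claim that local asymptotic stability fails whenever $\mathcal R^+_{\mathrm{rad}}\ge1$, because it contradicts that observation.
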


\begin{proof}
At $E_0$ with $C=1$, $A=0$, the Jacobian of \eqref{eq:4group} is
$J_4(E_0)=\mathrm{blockdiag}(K^+-M^+,\,-\rho^+)$.
Local stability of $E_0$ is equivalent to $s(K^+-M^+)<0$, which by the
M-matrix criterion is equivalent to $\mathcal R_{\mathrm{rad}}^+<1$. Parts (ii)
and (iii) follow from Proposition~\ref{prop:inherit4} and Theorem~\ref{thm:PF}.
\end{proof}

\begin{remark}[Relation to the symmetric shock results]\label{rem:PF4bridge}
The scalar threshold $\beta=\mu$ used in Theorems~\ref{thm:trigger}
and~\ref{thm:cumulative} is the symmetric special case of
$\mathcal R_{\mathrm{rad}}^+=1$. The location of the structural bifurcation is
determined by the recruitment, reactive-polarisation, and reabsorption
parameters $(\alpha_i,\gamma_{ij},\mu_i)$; the mobilisation and
re-engagement parameters $(\delta_i,\rho)$ govern transient amplification
and decay only.
\end{remark}

\subsection*{B.5\quad Symmetric stability and the mobilisation window}

In the symmetric parameter regime
\[
  \alpha_L=\alpha_R=\alpha,\qquad
  \mu_L=\mu_R=\mu,\qquad
  \gamma_{LR}=\gamma_{RL}=\gamma,\qquad
  \delta_L=\delta_R=\delta,
\]
write $S(t):=L(t)+R(t)$ and $\beta:=\alpha+\gamma$. The total radical share
and the disengagement compartment satisfy the closed two-dimensional system
\[
\begin{aligned}
  \dot S &= S\bigl[(\beta-\mu)-\beta S+(\delta-\beta)A\bigr],\\
  \dot A &= -(\delta S+\rho)A,
\end{aligned}
\]
on $\mathcal F=\{(S,A):\ S\ge0,\ A\ge0,\ S+A\le1\}$.

\begin{proposition}[Stability of equilibria in the symmetric model]
\label{prop:stab4}
For the symmetric two-dimensional $(S,A)$-system:
\begin{enumerate}
  \item If $\beta\le\mu$, the origin $(S,A)=(0,0)$ is globally
  asymptotically stable on $\mathcal F$.
  \item If $\beta>\mu$, the edge $\{S=0\}$ is forward invariant. Every
  trajectory with $S_0=0$ converges to the origin; every trajectory with
  $S_0>0$ converges to
  \[
    (S^*,A^*)=\left(1-\frac{\mu}{\beta},\,0\right).
  \]
  In the full symmetric $(L,R,A)$ system, this corresponds to
  $L^*=R^*=\tfrac12(1-\mu/\beta)$, $A^*=0$. In this regime the origin is
  a saddle and $(S^*,0)$ is a stable node.
\end{enumerate}
\end{proposition}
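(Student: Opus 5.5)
The plan is to work directly with the closed planar system \eqref{eq:2d} on $\mathcal F$, whose forward invariance follows from Proposition~\ref{prop:inv4}. The $A$-equation decouples as $\dot A=-(\delta S+\rho)A$, so Proposition~\ref{prop:Adecay4} gives $A(t)\le A_0e^{-\rho t}$. This yields exponential decay of $A$ along every trajectory. It also shows that the edge $\{A=0\}$ is invariant, and on that edge the flow is the logistic equation $\dot S=S[(\beta-\mu)-\beta S]$. Likewise $\{S=0\}$ is invariant, since $\dot S$ carries the factor $S$; on it $\dot A=-\rho A$, so every trajectory with $S_0=0$ tends to the origin in both regimes.

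For part (i), with $\beta\le\mu$, I would first try a comparison argument. We have $\dot S\le S[(\beta-\mu)-\beta S+|\delta-\beta|A_0e^{-\rho t}]$, and the last term is integrable in $t$. Hence $S(t)\le S_0\exp\bigl(\int_0^t((\beta-\mu)-\beta S)\,ds+|\delta-\beta|A_0/\rho\bigr)$. When $\beta<\mu$ this forces $S\to0$ exponentially. When $\beta=\mu$, suppose $\limsup S>0$. The integral $\int\beta S$ would then diverge, because $S$ is Lipschitz on the compact set $\mathcal F$ and so cannot return to zero too quickly. The divergent integral contradicts boundedness of $S$ from below via the exact identity $\ln S(t)=\ln S_0+\int_0^t(-\beta S+(\delta-\beta)A)\,ds$, which would send $\ln S\to-\infty$ and hence $S\to0$. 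This gives attractivity. For Lyapunov stability at the origin I would alternatively reuse the function $W=S+\eta A$ with $\eta$ large, which is the symmetric specialisation of Part~(i) of the proof of Theorem~\ref{thm:global4}, and get global asymptotic stability at once. I will lead with that route, since it is already established: with $q=(1,1)$, Theorem~\ref{thm:global4}(i) applied to the symmetric $(L,R,A)$ system gives $L,R,A\to0$, hence $(S,A)\to(0,0)$.

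For part (ii), with $\beta>\mu$, take $S_0>0$. Using $A\le A_0e^{-\rho t}$ and the integrating identity for $\ln S$ as above, the term $(\delta-\beta)A$ contributes only a bounded amount. Once $A$ is small, $\dot S\ge S[(\beta-\mu)/2-\beta S]$, so $\liminf S\ge(\beta-\mu)/(2\beta)>0$. The $\omega$-limit set is therefore a compact invariant subset of $\{A=0,\ S\ge c>0\}$. On that edge the logistic flow has the single equilibrium $S^*=1-\mu/\beta$, which attracts every point with $S>0$, and any compact invariant set there must equal $\{S^*\}$. Hence $(S,A)\to(S^*,0)$. The statement for the full symmetric system then follows by lifting through Theorem~\ref{thm:global4}(ii): since $S_0>0$, the initial point lies off $\Gamma$, so $(L,R,A)\to E_1$, and by symmetry $E_1$ has $L^*=R^*$.

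Finally comes the linearisation. At the origin the Jacobian of \eqref{eq:2d} is $\mathrm{diag}(\beta-\mu,-\rho)$, which is a saddle for $\beta>\mu$. At $(S^*,0)$ it is upper triangular, with diagonal entries $-\beta S^*=\mu-\beta<0$ and $-(\delta S^*+\rho)<0$. Real negative eigenvalues make it a stable node; the degenerate equal-eigenvalue case is still a node, of improper type. The main obstacle is the critical case $\beta=\mu$ in part (i). The quoted Theorem~\ref{thm:global4}(i) already covers it, so the work there reduces to checking that its hypothesis $s_0\le0$ includes equality, which it does.
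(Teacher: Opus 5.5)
Your proposal is correct and essentially matches the paper's proof. For (i), the paper uses exactly the Lyapunov function $W=S+A$ with LaSalle; this is your $S+\eta A$ from Theorem~\ref{thm:global4}(i) with $q=(1,1)$ and $\eta=1$, which is admissible since $(\delta-\beta)/\delta<1$. Your Jacobian computations at the origin and at $(S^*,0)$ are the paper's. For (ii), the paper concludes by sandwiching $\dot S$ between the logistic equations with rates $(\beta-\mu)\pm\eta$ and letting $\eta\downarrow0$, a slightly more direct alternative to your $\liminf$ bound plus $\omega$-limit argument on $\{A=0\}$. That argument works, but the fact it needs is two-sided invariance of the $\omega$-limit set, not merely that $S^*$ attracts every point with $S>0$.
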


\begin{proof}
\textbf{Case (i).} Let $W(S,A):=S+A$. Then
\[
  \dot W=(\beta-\mu)S-\beta S^2-\beta SA-\rho A.
\]
If $\beta\le\mu$, every term is nonpositive and equality holds only at
$(S,A)=(0,0)$. LaSalle's invariance principle gives global asymptotic
stability of the origin.

\textbf{Case (ii).} If $S_0=0$, then $\dot S|_{S=0}=0$, so $S(t)\equiv0$
and $A(t)=A_0e^{-\rho t}\to0$.

If $S_0>0$, positivity gives $S(t)>0$ for all $t$. Since
$\dot A=-(\delta S+\rho)A$, we have $A(t)\to0$ exponentially.

Fix $\eta\in(0,\beta-\mu)$. For all sufficiently large $t$,
$|(\delta-\beta)A(t)|\le\eta$, hence
\[
  S[(\beta-\mu)-\eta-\beta S]\le\dot S\le S[(\beta-\mu)+\eta-\beta S].
\]
Comparison with the logistic equations
$\dot z_\pm=z_\pm[(\beta-\mu)\pm\eta-\beta z_\pm]$ gives
\[
  \frac{\beta-\mu-\eta}{\beta}
  \le\liminf_{t\to\infty}S(t)
  \le\limsup_{t\to\infty}S(t)
  \le\frac{\beta-\mu+\eta}{\beta}.
\]
Letting $\eta\downarrow0$ yields $S(t)\to1-\mu/\beta$.

The Jacobian in $(S,A)$ is upper triangular at both equilibria: at the
origin the eigenvalues are $\beta-\mu>0$ and $-\rho<0$ (saddle); at
$(S^*,0)$ they are $-(\beta-\mu)<0$ and $-(\delta S^*+\rho)<0$ (stable
node).
\end{proof}

\begin{remark}[Exact instantaneous growth criterion]\label{rem:Ac}
Assume $\delta>\beta$. For a general post-shock state $(S_0,A_0)$, the
exact condition for initial growth of total radical support is
\[
  \dot S(0)>0\iff A_0>A_c(S_0):=\frac{\mu-\beta+\beta S_0}{\delta-\beta}.
\]
Theorem~\ref{thm:critical} is the near-mainstream limit $S_0\to0$, giving
$A_c(S_0)\to\Delta_c$. For a system already substantially radicalised, the
effective threshold $A_c(S_0)>\Delta_c$ is higher.
\end{remark}

\begin{corollary}[Properties of $\Delta_c$]\label{cor:Dc}
The threshold $\Delta_c=(\mu-\beta)/(\delta-\beta)$ satisfies $\Delta_c>0$
iff $\mu>\beta$ and $\delta>\beta$, and $\Delta_c<1$ iff $\delta>\mu$. The
politically meaningful feasible surge regime is therefore
$\delta>\mu>\beta$. Moreover,
\[
  \frac{\partial\Delta_c}{\partial\mu}>0,
  \qquad
  \frac{\partial\Delta_c}{\partial\delta}<0,
  \qquad
  \frac{\partial\Delta_c}{\partial\beta}
  =\frac{\mu-\delta}{(\delta-\beta)^2}<0
  \text{ when }\delta>\mu,
\]
and $\Delta_c\to0$ as $\beta\to\mu^-$.
\end{corollary}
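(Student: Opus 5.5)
The statement is Corollary~\ref{cor:Dc}, a collection of elementary facts about the explicit rational function $\Delta_c(\mu,\delta,\beta)=(\mu-\beta)/(\delta-\beta)$. I would prove each clause by direct sign analysis and differentiation, treating $\Delta_c$ as a function on the domain $\delta\neq\beta$ with all parameters positive.

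\emph{Positivity.} A quotient is positive iff numerator and denominator have the same sign. So $\Delta_c>0$ iff either ($\mu>\beta$ and $\delta>\beta$) or ($\mu<\beta$ and $\delta<\beta$). The statement records only the first branch, so I would read it in the context of Theorem~\ref{thm:critical}, which already assumes $\beta<\mu$. Under that standing assumption, $\Delta_c>0$ iff $\delta>\beta$. I would state this restriction explicitly, since the second sign branch is the main subtlety of an otherwise routine argument.

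\emph{The bound $\Delta_c<1$.} Given $\delta>\beta$, multiply through by $\delta-\beta>0$. This gives $\mu-\beta<\delta-\beta$, that is, $\delta>\mu$, which is part (iii) of Theorem~\ref{thm:critical}. Combining the two clauses yields the feasible regime $\delta>\mu>\beta$.

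\emph{Derivatives.} I would compute the three partial derivatives directly and sign them inside the regime $\delta>\mu>\beta$:
\[
  \partial_\mu\Delta_c=\frac{1}{\delta-\beta}>0,
  \qquad
  \partial_\delta\Delta_c=-\frac{\mu-\beta}{(\delta-\beta)^2}<0.
\]
For $\beta$, the quotient rule gives
\[
  \partial_\beta\Delta_c
  =\frac{-(\delta-\beta)+(\mu-\beta)}{(\delta-\beta)^2}
  =\frac{\mu-\delta}{(\delta-\beta)^2},
\]
which is negative exactly when $\delta>\mu$.

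\emph{Limit.} Holding $\delta$ fixed with $\delta>\mu$, as $\beta\to\mu^-$ the numerator $\mu-\beta$ tends to $0^+$. The denominator tends to $\delta-\mu>0$, so $\Delta_c\to0^+$.
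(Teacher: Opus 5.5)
Your proposal is correct and is essentially the paper's own proof, which consists only of direct differentiation, sign analysis, and the rearrangement $\mu-\beta<\delta-\beta\iff\delta>\mu$. Your caveats are accurate refinements that the paper leaves implicit: read literally, the positivity equivalence misses the branch $\mu<\beta$, $\delta<\beta$, and $\Delta_c<1\iff\delta>\mu$ holds only when $\delta>\beta$, so the corollary is valid under the standing assumptions of Theorem~\ref{thm:critical}, which is exactly how you impose them.
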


\begin{proof}
Direct differentiation and sign analysis. The condition $\Delta_c<1$ is
$\mu-\beta<\delta-\beta$, i.e.\ $\delta>\mu$.
\end{proof}

\begin{proposition}[Mobilisation-window bound]\label{prop:window4}
Assume $\beta<\mu<\delta$ and let $\Delta_c=(\mu-\beta)/(\delta-\beta)$.
Let $A_0=A(t_0^+)$ be the disengagement level immediately after a state
shock. If $A_0>\Delta_c$, then for $\tau:=t-t_0$ and
\[
  \tau^*:=\frac{1}{\rho}\ln\!\left(\frac{A_0}{\Delta_c}\right),
\]
one has $\dot S(\tau)<0$ for every $\tau\ge\tau^*$ with $S(\tau)>0$. In
particular, the surge window is contained in $[0,\tau^*]$. In the
near-mainstream limit, $A_0\to\Delta$, giving the bound of
Theorem~\ref{thm:window}.
\end{proposition}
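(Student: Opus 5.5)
The argument repeats the proof of Theorem~\ref{thm:window}, with the post-shock disengagement level $A_0$ in place of $\Delta$. We work in the symmetric reduction \eqref{eq:2d} and restart time at $\tau=0$ immediately after the shock. Then $(S(0),A(0))=(S_0,A_0)$ lies in $\mathcal F$ by Proposition~\ref{prop:inv4}, and the autonomous post-shock flow applies for $\tau>0$.

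First, we bound the reservoir. Proposition~\ref{prop:Adecay4} (or the symmetric equation $\dot A=-(\delta S+\rho)A$ together with $S\ge0$) gives
\[
  A(\tau)\le A_0e^{-\rho\tau}\qquad(\tau\ge0).
\]

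Second, wherever $S(\tau)>0$ we divide the $S$-equation by $S$. Dropping $-\beta S\le0$, and using $\delta-\beta>0$ so that the upper bound on $A$ can be substituted, we get
\[
  \frac{\dot S(\tau)}{S(\tau)}\le(\beta-\mu)+(\delta-\beta)A_0e^{-\rho\tau}=:g(\tau).
\]
The function $g$ is strictly decreasing because $\rho>0$ and $A_0>0$. It vanishes exactly when $e^{-\rho\tau}=(\mu-\beta)/((\delta-\beta)A_0)=\Delta_c/A_0$, that is, at $\tau=\tau^*$. The hypothesis $A_0>\Delta_c$ is what makes $\tau^*>0$. For $\tau>\tau^*$ we then have $g(\tau)<0$, hence $\dot S<0$.

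The main obstacle is the endpoint $\tau=\tau^*$ itself, since the statement asserts $\tau\ge\tau^*$ and there $g=0$. We recover strictness from the terms we discarded, and there are two cases.
\begin{itemize}
  \item If $S(\tau^*)>0$, then $-\beta S(\tau^*)<0$, so $\dot S/S\le g(\tau^*)-\beta S(\tau^*)<0$. This suffices.
  \item We also note that $S(\tau)>0$ for some $\tau$ forces $S_0>0$ and hence $S>0$ for all $\tau$, since $\{S=0\}$ is invariant. Then $\delta S>0$, so the inequality on $A$ is in fact strict for $\tau>0$; this gives a second route to strictness.
\end{itemize}
Either observation closes the endpoint case. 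The inclusion of the surge window in $[0,\tau^*]$ follows at once.

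Finally, for the near-mainstream limit, the jump \eqref{eq:jump} with $A(t_0^-)=0$ gives $A_0=\Delta\,C(t_0^-)=\Delta(1-\varepsilon)$. This tends to $\Delta$ as $\varepsilon\to0$, so $\tau^*$ converges to the expression \eqref{eq:tstar} of Theorem~\ref{thm:window}. That expression is always a valid bound, because $A_0\le\Delta$ and the bound we derived is monotone in $A_0$.
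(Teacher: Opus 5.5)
Your proof is correct and follows the paper's argument: bound $A(\tau)\le A_0e^{-\rho\tau}$, substitute it into $\dot S/S$, and use $(\beta-\mu)+(\delta-\beta)\Delta_c=0$. The paper handles the endpoint $\tau=\tau^*$ exactly as in your first bullet, keeping the $-\beta S$ term to get $\dot S/S\le-\beta S<0$ for all $\tau\ge\tau^*$; your second route, via strict decay of $A$ when $S>0$, is valid but not needed.
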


\begin{proof}
For $S>0$,
$\dot S/S=(\beta-\mu)-\beta S+(\delta-\beta)A\le(\beta-\mu)+(\delta-\beta)A$.
Since $\dot A=-(\delta S+\rho)A\le-\rho A$, we have
$A(\tau)\le A_0e^{-\rho\tau}$. For $\tau\ge\tau^*$,
$A_0e^{-\rho\tau}\le\Delta_c$, hence
\[
  \frac{\dot S}{S}\le(\beta-\mu)+(\delta-\beta)\Delta_c-\beta S=-\beta S<0
\]
whenever $S>0$.
\end{proof}

\subsection*{B.6\quad Structural trigger and cumulative shocks}

\begin{proof}[Proof of Theorem~\ref{thm:trigger}]
In this theorem the structural component is restricted to a scalar shift
$\beta_0\to\beta'=\beta_0+\Delta\beta$, with $\mu$ (and $\delta,\rho$)
fixed.

If $S(t_0^-)=0$, then the state shock leaves $S(t_0^+)=0$, the edge
$\{S=0\}$ is invariant, and the system returns to the origin
(equivalently $E_0=(0,0,0)$ in full coordinates).

If $S(t_0^-)>0$, the state shock preserves $S(t_0^+)=S(t_0^-)>0$ (only
$C$ and $A$ are affected). The post-shock system evolves with parameter
$\beta'$. By Proposition~\ref{prop:stab4}(ii), the long-run attractor is
the origin if $\beta'\le\mu$, and $(S^*,A^*)=(1-\mu/\beta',\,0)$ if
$\beta'>\mu$. Hence a permanent shift in the long-run regime occurs iff
$\beta_0+\Delta\beta>\mu$. When this holds,
\[
  C^\infty=1-S^*=\frac{\mu}{\beta_0+\Delta\beta},
\]
which depends only on $\beta'$ and $\mu$ and is therefore independent of
$\Delta$.
\end{proof}

\begin{proof}[Proof of Theorem~\ref{thm:cumulative}]
After the $k$-th shock, the post-shock structural parameter is
$B_k=\beta_0+\sum_{j=1}^k\Delta\beta_j$. By Theorem~\ref{thm:trigger},
conditional on a nonzero radical seed whenever the post-shock regime is
supercritical, the long-run mainstream share is
\[
  C_k^\infty=
  \begin{cases}
    1, & B_k\le\mu,\\[2mm]
    \mu/B_k, & B_k>\mu.
  \end{cases}
\]
Parts (i)--(iii) follow immediately from this formula and the assumption
$\Delta\beta_k\ge0$. Part (iv) is the definition of the first
threshold-crossing index,
\[
  k^*=\min\!\left\{k:\sum_{j=1}^k\Delta\beta_j>\mu-\beta_0\right\}.
\]
\end{proof}

\subsection*{B.7\quad Asymmetric state-shock threshold}

\begin{theorem}[Asymmetric critical shock threshold]\label{thm:PF_Delta}
Let $K$, $M$, $D$ be as in Section~\ref{sec:legacy} and, for
$\Delta\in[0,1]$, define
\[
\begin{aligned}
  G(\Delta) &:= (1-\Delta)K+\Delta D-M,\\
  \Phi(\Delta) &:= \lambda_{\mathrm{PF}}\bigl(M^{-1}((1-\Delta)K+\Delta D)\bigr).
\end{aligned}
\]
For $\Delta\in[0,1)$ let $q(\Delta)\gg0$ denote the positive left
Perron--Frobenius eigenvector of the irreducible Metzler matrix
$G(\Delta)$ associated with $s(G(\Delta))$. At the endpoint $\Delta=1$,
irreducibility may fail, so statements about growth are made directly
through $s(G(1))$ or by the one-sided limit $\Delta\uparrow1$. Assume
$\Phi(0)<1$.
\begin{enumerate}
  \item If $\Phi(1)\le1$, then $\Phi(\Delta)\le1$ for all $\Delta\in[0,1]$,
    with strict inequality on $[0,1)$. Hence no feasible pure state shock
    produces positive near-mainstream radical growth.
  \item If $\Phi(1)>1$, there is a unique threshold
$\Delta_c^{\mathrm{asym}}\in(0,1)$, defined as the right endpoint of
$\{\Delta\in[0,1]:\Phi(\Delta)\le1\}$, with
$\Phi(\Delta_c^{\mathrm{asym}})=1$, $\Phi(\Delta)\le1$ for
$\Delta\le\Delta_c^{\mathrm{asym}}$, and $\Phi(\Delta)>1$ for
$\Delta>\Delta_c^{\mathrm{asym}}$. If the crossing is nondegenerate, the
subcritical inequality is strict for $\Delta<\Delta_c^{\mathrm{asym}}$.
    For a near-mainstream post-shock state $x(t_0^+)=\varepsilon v$ with
    $v\ge0$, $v\ne0$, and $\varepsilon\ll1$, the weighted radical mass
    $V_\Delta:=q(\Delta)^\top x$ satisfies
    \[
      \dot V_\Delta(t_0^+)=s(G(\Delta))\,V_\Delta(t_0^+)+O(\varepsilon^2),
    \]
    so the sign of initial weighted radical growth is determined by
    $\Delta-\Delta_c^{\mathrm{asym}}$.
\end{enumerate}
In the symmetric case,
$\Delta_c^{\mathrm{asym}}=(\mu-\beta)/(\delta-\beta)$.
\end{theorem}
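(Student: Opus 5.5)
The plan is to replace $\Phi$ by the spectral abscissa of $G(\Delta)$, prove that this abscissa is convex in $\Delta$, and read off both the threshold structure and the growth statement from that.

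\textbf{Step 1: reduce $\Phi$ to the spectral abscissa.} For $\Delta\in[0,1)$ the matrix $(1-\Delta)K+\Delta D$ is entrywise positive. Lemma~\ref{lem:pf_abscissa} therefore applies with $K$ replaced by this matrix and gives
\[
  \operatorname{sign}(\Phi(\Delta)-1)=\operatorname{sign}\, s(G(\Delta)).
\]
The endpoint $\Delta=1$ needs a separate check, because there $G(1)=D-M$ is diagonal and irreducibility fails. Directly, $\Phi(1)=\max_i\delta_i/\mu_i$ and $s(G(1))=\max_i(\delta_i-\mu_i)$, and these have the same sign relative to $1$ and $0$ respectively. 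So the whole statement can be phrased in terms of $g(\Delta):=s(G(\Delta))$ on $[0,1]$. Note that $g(0)<0$ by the hypothesis $\Phi(0)<1$.

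\textbf{Step 2: prove that $g$ is convex.} This is the main obstacle, since convexity of the spectral abscissa along general affine matrix paths is false. In the present $2\times2$ case I would use the explicit formula
\[
  g(\Delta)=\tfrac12\Bigl(\operatorname{tr}G(\Delta)+\sqrt{(G_{11}-G_{22})^2+4G_{12}G_{21}}\Bigr).
\]
Here $\operatorname{tr}G$ and $G_{11}-G_{22}$ are affine in $\Delta$, and $4G_{12}G_{21}=\bigl(2(1-\Delta)\sqrt{\gamma_{RL}\gamma_{LR}}\bigr)^2$ is the square of an affine function. The square root is thus the Euclidean norm of an affine map $\Delta\mapsto\R^2$, hence convex, and $g$ is convex and continuous on $[0,1]$.

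\textbf{Step 3: read off the two cases.} In Case (i), $g(1)\le0$ and $g(0)<0$, so convexity gives $g(\Delta)\le(1-\Delta)g(0)+\Delta g(1)<0$ on $[0,1)$, hence $\Phi<1$ there. In Case (ii), $g(0)<0<g(1)$. Continuity gives a zero, and I take $\Delta_c^{\mathrm{asym}}$ to be the largest zero, which is the right endpoint of $\{g\le0\}$. A convex function that is negative at $0$ and vanishes at an interior point $\Delta_c^{\mathrm{asym}}$ has positive right slope there; hence $g>0$ on $(\Delta_c^{\mathrm{asym}},1]$ and $g<0$ on $[0,\Delta_c^{\mathrm{asym}})$. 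This yields uniqueness of the threshold and, through Step 1, the claimed behaviour of $\Phi$.

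\textbf{Step 4: growth and the symmetric case.} Following the proof of Theorem~\ref{thm:global4}, I would write
\[
  \dot x=(K-M)x+A(D-K)x-SKx .
\]
With $A(t_0^-)=0$ and $x(t_0^+)=\varepsilon v$, we have $A(t_0^+)=\Delta(1-\varepsilon(v_L+v_R))=\Delta+O(\varepsilon)$ and $S=O(\varepsilon)$. Hence $\dot x=G(\Delta)x+O(\varepsilon^2)$. Pairing with the left eigenvector $q(\Delta)$ gives $\dot V_\Delta=s(G(\Delta))V_\Delta+O(\varepsilon^2)$. Since $V_\Delta=\Theta(\varepsilon)$ because $q\gg0$, the sign of initial growth is the sign of $g(\Delta)$, i.e.\ of $\Delta-\Delta_c^{\mathrm{asym}}$ by Step 3.

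In the symmetric case, $(1,1)^\top$ is a left eigenvector of $G(\Delta)$ with eigenvalue $(1-\Delta)\beta+\Delta\delta-\mu$. By Perron--Frobenius this eigenvalue is $g(\Delta)$, and setting it to zero gives $\Delta_c^{\mathrm{asym}}=(\mu-\beta)/(\delta-\beta)$, in agreement with Theorem~\ref{thm:critical}.
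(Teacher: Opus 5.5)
Your proof is correct and is essentially the paper's argument. The paper also uses the explicit $2\times2$ Perron-root formula and notes that both off-diagonal entries scale with $(1-\Delta)$, so the square root is the Euclidean norm of an affine function and hence convex. It reads the threshold off the resulting sublevel set and obtains the growth claim by pairing $\dot x$ with $q(\Delta)$. The only difference is that the paper applies convexity to $\Phi$ directly, whereas you apply it to $s(G(\Delta))$ and transfer signs through Lemma~\ref{lem:pf_abscissa}, which is why you need the separate endpoint check at $\Delta=1$. Your chord argument also shows that the inequality below $\Delta_c^{\mathrm{asym}}$ is strict without the nondegeneracy proviso.
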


\begin{proof}
The sign criterion follows from
$q(\Delta)^\top\dot x=s(G(\Delta))\,q(\Delta)^\top x+O(\varepsilon^2)$
together with the Perron--Frobenius/M-matrix equivalence
$s(G(\Delta))\gtrless0\iff\Phi(\Delta)\gtrless1$. Writing
$a_L(\Delta)=((1-\Delta)\alpha_L+\Delta\delta_L)/\mu_L$,
$a_R(\Delta)=((1-\Delta)\alpha_R+\Delta\delta_R)/\mu_R$,
$b_0=\gamma_{RL}/\mu_L$, $c_0=\gamma_{LR}/\mu_R$:
\[
  \Phi(\Delta)=\tfrac12\bigl(a_L+a_R+\sqrt{(a_L-a_R)^2+4b_0c_0(1-\Delta)^2}\bigr).
\]
The square-root term is the Euclidean norm of an affine function of
$\Delta$, hence convex; therefore $\Phi$ is continuous and convex on
$[0,1]$. If $\Phi(1)\le1$, convexity gives $\Phi(\Delta)\le1$ on $[0,1]$,
strict on $[0,1)$. If $\Phi(1)>1$, the sublevel set
$\{\Delta:\Phi(\Delta)\le1\}$ is an interval containing $0$ but not $1$,
which gives a unique boundary point $\Delta_c^{\mathrm{asym}}\in(0,1)$.
\end{proof}

\begin{remark}[Quadratic formula for $\Delta_c^{\mathrm{asym}}$]\label{rem:Dc_formula}
Write $a_i:=\alpha_i-\mu_i$, $e_i:=\delta_i-\alpha_i$,
$g:=\gamma_{RL}\gamma_{LR}$. Setting $\det G(\Delta)=0$ yields the
quadratic
\begin{equation}\label{eq:Dc_quadratic}
  q_2\Delta^2+q_1\Delta+q_0=0,
\end{equation}
where
\[
  q_2:=e_Le_R-g,
  \quad
  q_1:=a_Le_R+a_Re_L+2g,
  \quad
  q_0:=a_La_R-g>0.
\]
When $q_2\ne0$, equation \eqref{eq:Dc_quadratic} has roots
\[
  \frac{-q_1\pm\sqrt{q_1^2-4q_2q_0}}{2q_2}.
\]
Under the assumption $\Phi(0)<1<\Phi(1)$, the critical value
$\Delta_c^{\mathrm{asym}}$ is the smallest root in $(0,1)$ at which the
Perron eigenvalue crosses zero; equivalently,
\[
  \Delta_c^{\mathrm{asym}}
  =\min\bigl\{\Delta\in(0,1):\det G(\Delta)=0\bigr\},
\]
which corresponds to the root with $\operatorname{tr}G(\Delta)\le0$. In
the symmetric case \eqref{eq:Dc_quadratic} reduces to
$(a+e\Delta)^2=\gamma^2(1-\Delta)^2$, giving
$\Delta_c^{\mathrm{asym}}=(\mu-\beta)/(\delta-\beta)=\Delta_c$.
\end{remark}

\begin{proposition}[Asymmetric Perron-vector window bound]\label{prop:PF_window}
Let $s_0:=s(K-M)<0$, $q\gg0$ with $q^\top(K-M)=s_0 q^\top$, and
$\bar\kappa:=\max_i(q^\top(D-K))_i/q_i$. For a pure state shock of
amplitude $\Delta$ with $A(0^-)=0$, write $\tau:=t-t_0$ and
$W(\tau):=q^\top x(\tau)$. Then
\[
  \dot W(\tau)\le[s_0+\bar\kappa A(\tau)]W(\tau)
  \le[s_0+\bar\kappa\Delta e^{-\rho\tau}]W(\tau).
\]
If $\bar\kappa\le0$, $W$ is strictly decreasing. If $\bar\kappa>0$,
define $\Delta_q:=-s_0/\bar\kappa$ and
$\tau_q^*:=\rho^{-1}\ln(\Delta/\Delta_q)$ for $\Delta>\Delta_q$. Then
$\Delta\le\Delta_q$ implies $W$ nonincreasing; if $\Delta>\Delta_q$,
every interval on which $W$ increases is contained in $[0,\tau_q^*)$.

In the symmetric case: $s_0=\beta-\mu$, $\bar\kappa=\delta-\beta$,
$\Delta_q=\Delta_c$, and $\tau_q^*$ coincides with the bound $\tau^*$ of
Theorem~\ref{thm:window}.
\end{proposition}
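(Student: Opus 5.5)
The plan is to differentiate $W=q^\top x$ along the post-shock flow, using the decomposition already used in the proof of Theorem~\ref{thm:global4}. With $C=1-L-R-A$ and $D=\mathrm{diag}(\delta_L,\delta_R)$, the $(L,R)$-subsystem is
\[
  \dot x=(K-M)x+A(D-K)x-(L+R)Kx .
\]
Multiplying on the left by $q^\top$ gives
\[
  \dot W=s_0W+A\,q^\top(D-K)x-(L+R)\,q^\top Kx .
\]

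The three terms are then bounded one at a time.

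\begin{itemize}
  \item \emph{Saturation term.} Since $K$ is entrywise positive, $q\gg0$ and $x\ge0$, the term $-(L+R)q^\top Kx$ is nonpositive and can be dropped.
  \item \emph{Mobilisation term.} Write $q^\top(D-K)x=\sum_i (q^\top(D-K))_i x_i$. Each coefficient satisfies $(q^\top(D-K))_i\le\bar\kappa q_i$ by definition of $\bar\kappa$. Because $x_i\ge0$, this gives $q^\top(D-K)x\le\bar\kappa W$. Multiplying by $A\ge0$ yields $A\,q^\top(D-K)x\le\bar\kappa A W$, which is the first inequality.
  \item \emph{Second inequality.} Proposition~\ref{prop:Adecay4} with $A(0^+)=\Delta C(0^-)\le\Delta$ gives $A(\tau)\le\Delta e^{-\rho\tau}$. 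Multiplying by $W\ge0$ gives $\bar\kappa A W\le\bar\kappa\Delta e^{-\rho\tau}W$ when $\bar\kappa>0$. When $\bar\kappa\le0$, one instead uses $\bar\kappa A\le0$ directly.
\end{itemize}

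The monotonicity statements follow from the sign of the bracket. If $\bar\kappa\le0$, then $\dot W\le s_0W<0$ whenever $W>0$. Note that $W>0$ for a nonzero seed, by invariance of $\Gamma$ from Theorem~\ref{thm:global4}. Hence $W$ is strictly decreasing.

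Now suppose $\bar\kappa>0$. The bound $h(\tau):=s_0+\bar\kappa\Delta e^{-\rho\tau}$ is strictly decreasing in $\tau$.
\begin{itemize}
  \item If $\Delta\le\Delta_q=-s_0/\bar\kappa$, then $h(\tau)\le h(0)\le0$. So $\dot W\le0$ and $W$ is nonincreasing.
  \item If $\Delta>\Delta_q$, then $h$ vanishes exactly at $\tau_q^*=\rho^{-1}\ln(\Delta/\Delta_q)$ and $h<0$ for $\tau>\tau_q^*$. On $[\tau_q^*,\infty)$ we get $\dot W\le hW\le0$, strictly below zero after $\tau_q^*$. An interval on which $W$ increases must therefore lie in $[0,\tau_q^*)$. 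One detail needs care: at the endpoint $\tau_q^*$ itself $\dot W\le0$, so an increasing interval cannot extend past it. I would phrase "interval on which $W$ increases" as one where $\dot W>0$.
\end{itemize}

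For the symmetric case I substitute directly.
\begin{itemize}
  \item $K-M$ has eigenvalues $\beta-\mu$ and $\alpha-\gamma-\mu$, so $s_0=\beta-\mu$ with $q\propto(1,1)$.
  \item Then $(q^\top(D-K))_i/q_i=\delta-\alpha-\gamma=\delta-\beta$ for both $i$, so $\bar\kappa=\delta-\beta$.
  \item Hence $\Delta_q=(\mu-\beta)/(\delta-\beta)=\Delta_c$, and $\tau_q^*=\tau^*$ from \eqref{eq:tstar}.
\end{itemize}

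The only delicate point is the sign bookkeeping in the mobilisation term. The componentwise bound via $\bar\kappa$ needs $x\ge0$. Then multiplying by $A$ and by the exponential bound needs $A\ge0$ and $\bar\kappa>0$, with the case $\bar\kappa\le0$ treated separately. Everything else is routine.
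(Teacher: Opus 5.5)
Your proposal is correct and is essentially the paper's proof. Both use the decomposition $\dot x=(K-M)x+A(D-K)x-(L+R)Kx$, drop the saturation term, apply the componentwise bound $q^\top(D-K)\le\bar\kappa q^\top$ and the decay $A(\tau)\le\Delta e^{-\rho\tau}$, and finish with sign analysis of $s_0+\bar\kappa\Delta e^{-\rho\tau}$; you supply the details the paper's one-line proof leaves implicit. Your separate treatment of $\bar\kappa\le0$ is in fact necessary: for $\bar\kappa<0$, multiplying $A\le\Delta e^{-\rho\tau}$ by $\bar\kappa W$ reverses the inequality, so the second step of the displayed chain is justified only for $\bar\kappa\ge0$, and there your direct bound $\dot W\le s_0W<0$ (for a nonzero seed) is the right route, a point the paper passes over.
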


\begin{proof}
Using $C=1-L-R-A$, $\dot x=[(1-A)K+AD-M]x-(L+R)Kx$, so
\[
  \dot W=s_0 W+A\,q^\top(D-K)x-(L+R)q^\top Kx
  \le[s_0+\bar\kappa A(\tau)]W.
\]
Since $\dot A\le-\rho A$ and $A(0)\le\Delta$ from the jump condition,
$A(\tau)\le\Delta e^{-\rho\tau}$. The rest follows by sign analysis of
$s_0+\bar\kappa\Delta e^{-\rho\tau}$.
\end{proof}

\section{Additional Numerical Illustrations}\label{app:numerics}

All simulations use the \texttt{solve\_ivp} function with the RK45 method
(relative tolerance $10^{-10}$, absolute tolerance $10^{-12}$).

\subsection*{C.1\quad Baseline model}

Figure~\ref{fig:bif} shows the equilibrium radical share $P^*$ and mainstream
share $C^*$ as functions of the polarisation parameter $\gamma$ for three
values of the recruitment rate $\alpha$, with $\mu=0.20$. The transcritical
bifurcation at $\beta=\mu$ is clearly visible as the left endpoint of each
curve. The mainstream share $C^*=\mu/(\alpha+\gamma)$ approaches but never
reaches zero over the plotted range, consistent with Corollary~\ref{cor:cs}.

\begin{figure}[ht]
\centering
\includegraphics[width=\textwidth]{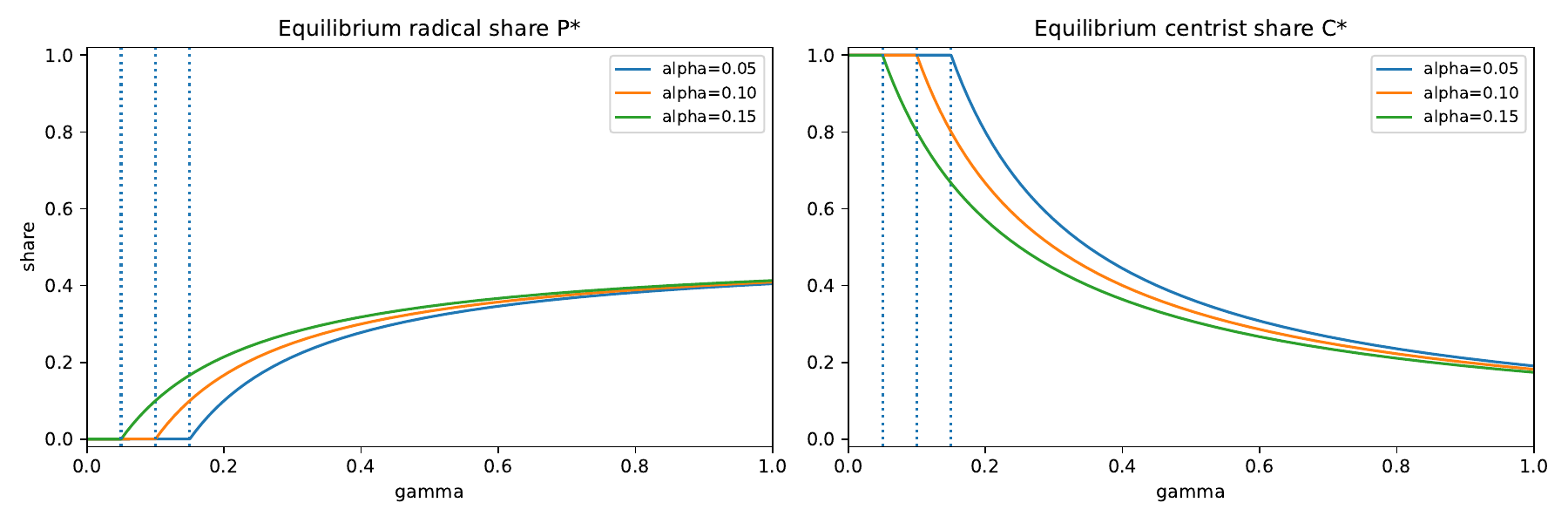}
\caption{Equilibrium radical share $P^*$ (left) and mainstream share
  $C^*$ (right) as functions of $\gamma$ for
  $\alpha\in\{0.05,0.10,0.15\}$ and $\mu=0.20$.}
\label{fig:bif}
\end{figure}

Figure~\ref{fig:phase} shows phase portraits in $\Sigma_2$ for the same
three parameter sets as Figure~\ref{fig:bif}. All trajectories converge
to the unique attractor (marked by stars), illustrating global stability
(Theorem~\ref{thm:baseline_threshold}); no limit cycles are present,
consistent with Corollary~\ref{cor:dulac}.

\begin{figure}[ht]
\centering
\includegraphics[width=\textwidth]{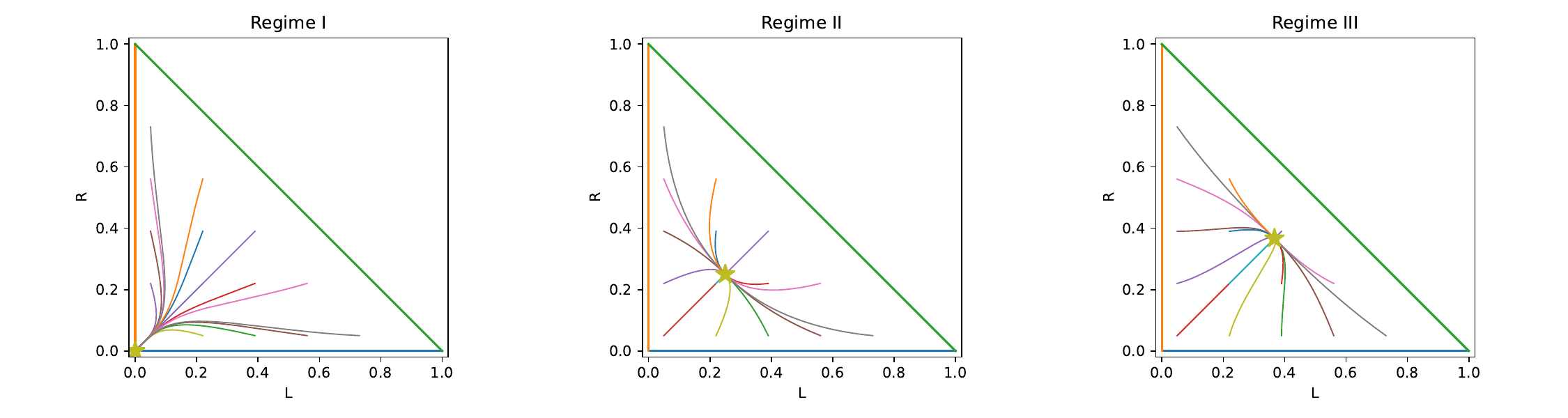}
\caption{Phase portraits in $\Sigma_2$ for the same three parameter sets
  as Figure~\ref{fig:bif}. Stars mark attracting equilibria.}
\label{fig:phase}
\end{figure}

\subsection*{C.2\quad Extended model}

The figures in this subsection use the symmetric extended model with
$L=R=:P$ and $\beta=\alpha+\gamma$. Trajectories are plotted in terms of
$P$ per radical wing; total radical support is therefore $S=2P$. Each
simulation starts with a small nonzero radical seed and $A(0)=0$, then
shocks are applied at the times stated in the corresponding figure
caption; the impulse map of Definition~\ref{def:impulse} transfers
$\Delta\,C(t_k^-)$ from the mainstream pool to the disengagement
compartment at each shock time.

Figure~\ref{fig:regimes} illustrates the three qualitatively distinct
post-shock trajectories of Section~\ref{sec:legacy}. The left panel
($\Delta=0.10<\Delta_c=0.25$) shows a small state shock: no initial
radical surge, convergence to the radical-free equilibrium
$C^\infty=1$. The centre panel ($\Delta=0.55>\Delta_c$, $\Delta\beta=0$)
shows a large pure state shock: transient radical growth followed by
full long-run recovery to $C^\infty=1$, with the disengaged pool $A$
(dotted) depleted at rate at least $\rho$ (baseline timescale
$1/\rho=10$). The right panel ($\Delta=0.55$, $\Delta\beta=0.15$, so
$\beta'=0.45>\mu$) shows a mixed state and structural shock: the system
converges to a new equilibrium with $C^\infty=\mu/\beta'\approx0.89$.
Conditional on a nonzero radical seed, the state component determines
the transient path, while the structural component determines the
long-run mainstream share. Here $P=L=R$ is plotted per radical wing, so
total radical support is $S=2P$.

\begin{figure}[ht]
\centering
\includegraphics[width=\textwidth]{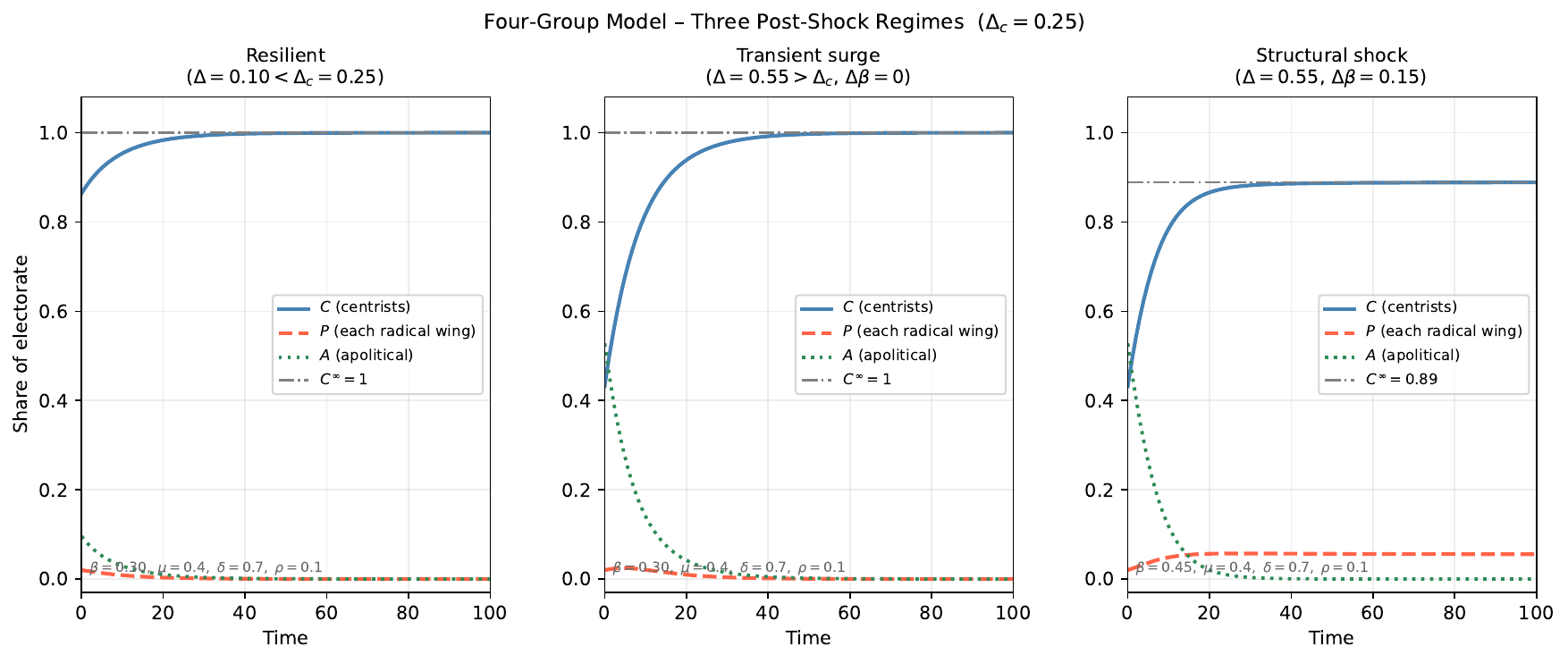}
\caption{Three shock regimes in the symmetric extended model.
  Parameters: $\beta_0=0.30$, $\mu=0.40$, $\delta=0.70$, $\rho=0.10$
  ($\Delta_c=0.25$).}
\label{fig:regimes}
\end{figure}

Figure~\ref{fig:staircase} simulates four shocks at times
$t=10,25,40,60$, each with state component $\Delta_k=0.40$ and
structural component $\Delta\beta_k=0.04$. The cumulative structural
parameter $B_k$ crosses $\mu=0.40$ at $k^*=3$. Before $k^*$, shocks are
transient in the long run: in the absence of further shocks the
mainstream share would return to $C^\infty=1$, although in the finite
simulation it only partially recovers between shocks. After $k^*$, the
long-run mainstream share is $C_k^\infty=\mu/B_k<1$ and decreases
whenever $\Delta\beta_k>0$; after the fourth shock, $B_4=0.46$ and
$C_4^\infty\approx0.870$. This is the staircase pattern of
Theorem~\ref{thm:cumulative} and Corollary~\ref{cor:staircase}.

\begin{figure}[ht]
\centering
\includegraphics[width=0.85\textwidth]{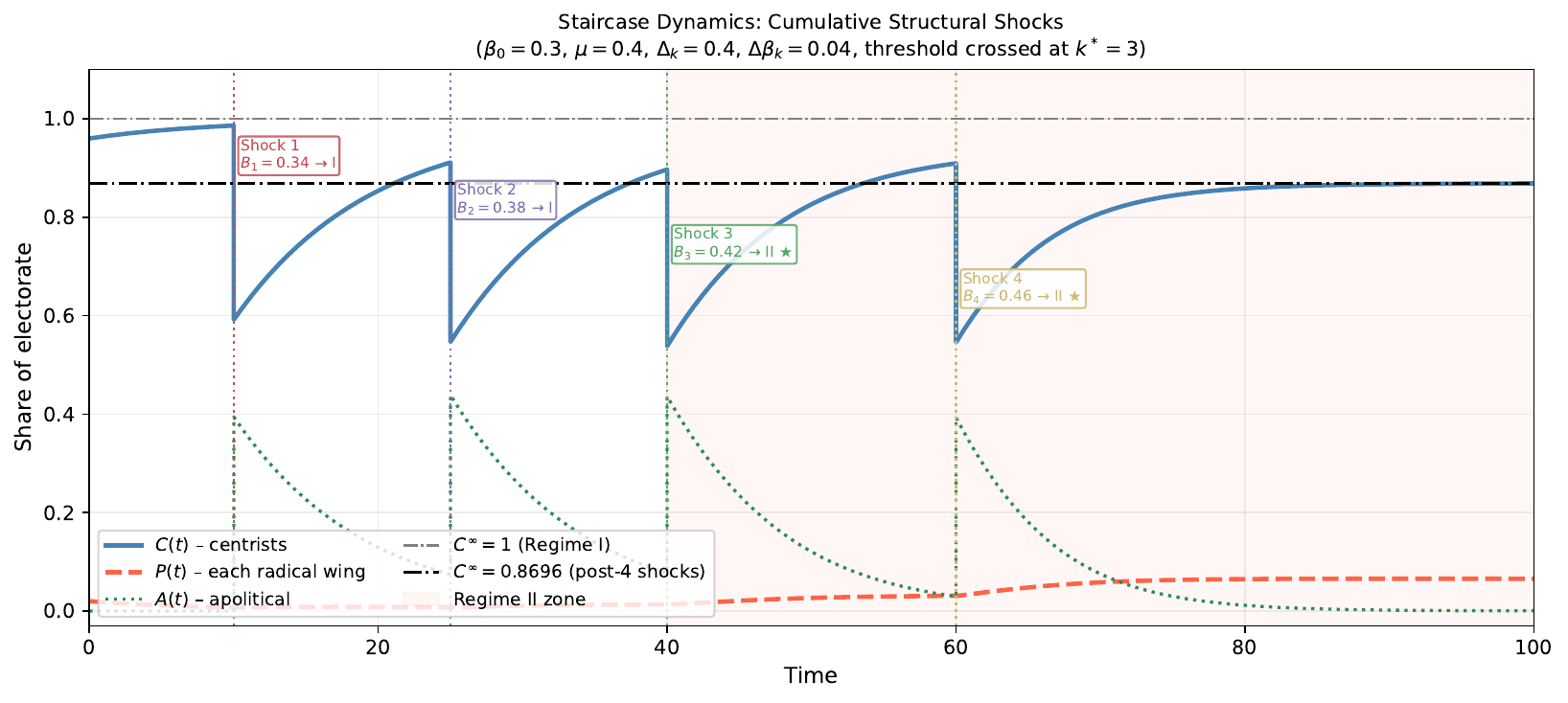}
\caption{Staircase dynamics under four sequential shocks
  ($\Delta_k=0.40$, $\Delta\beta_k=0.04$; $k^*=3$). Parameters:
  $\beta_0=0.30$, $\mu=0.40$, $\delta=0.70$, $\rho=0.10$.}
\label{fig:staircase}
\end{figure}

Figure~\ref{fig:asym_traj} shows trajectories of the full asymmetric
system~\eqref{eq:4group} for two parameter sets, both with an impulse
state shock of amplitude $\Delta=0.35$ at $t=5$. The left panel
($\mathcal{R}_{\mathrm{rad}}=0.840<1$) is subcritical: the shock creates
a temporary disengagement spike and a temporary drop in the mainstream
share, but the system returns to the radical-free equilibrium
$C^\infty=1$. The right panel ($\mathcal{R}_{\mathrm{rad}}=1.593>1$) is
supercritical: the system converges to the asymmetric radical-support
equilibrium $L^*=0.285$, $R^*=0.088$, $C^*=0.628$, consistent with
Theorem~\ref{thm:global4}.

\begin{figure}[ht]
\centering
\includegraphics[width=\textwidth]{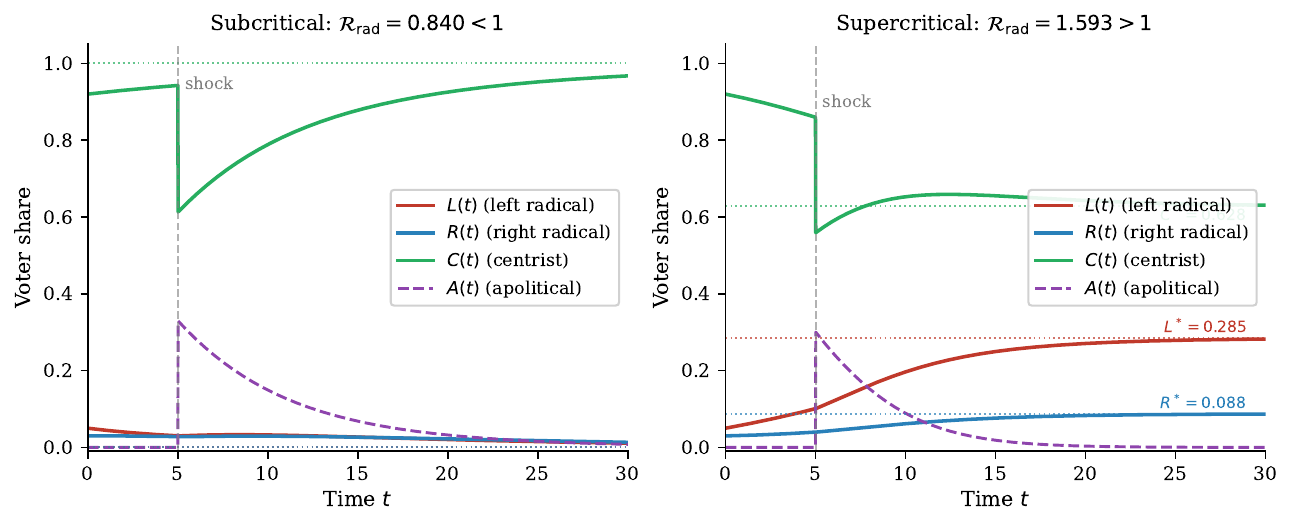}
\caption{Four-group asymmetric trajectories with impulse state shock
  $\Delta=0.35$ at $t=5$.
  Left: $\alpha_L=0.15$, $\alpha_R=0.20$, $\gamma_{RL}=0.08$,
  $\gamma_{LR}=0.12$, $\delta_L=0.70$, $\delta_R=0.55$, $\mu_L=0.30$,
  $\mu_R=0.35$, $\rho=0.12$.
  Right: $\alpha_L=0.40$, $\alpha_R=0.25$, $\gamma_{RL}=0.15$,
  $\gamma_{LR}=0.08$, $\delta_L=0.75$, $\delta_R=0.60$, $\mu_L=0.28$,
  $\mu_R=0.32$, $\rho=0.10$.}
\label{fig:asym_traj}
\end{figure}

Figure~\ref{fig:phi_delta} plots
$\Phi(\Delta):=\lambda_{\mathrm{PF}}(M^{-1}((1-\Delta)K+\Delta D))$ for
symmetric and asymmetric parameter sets. Both satisfy $\Phi(0)<1$ and
$\Phi(1)>1$, and each crosses $\Phi=1$ once in the plotted regime, at
$\Delta_c^{\mathrm{sym}}=0.091$ and $\Delta_c^{\mathrm{asym}}=0.116$
(vertical dotted lines). The figure illustrates the critical-threshold
construction in Theorem~\ref{thm:PF_Delta}.

\begin{figure}[ht]
\centering
\includegraphics[width=0.65\textwidth]{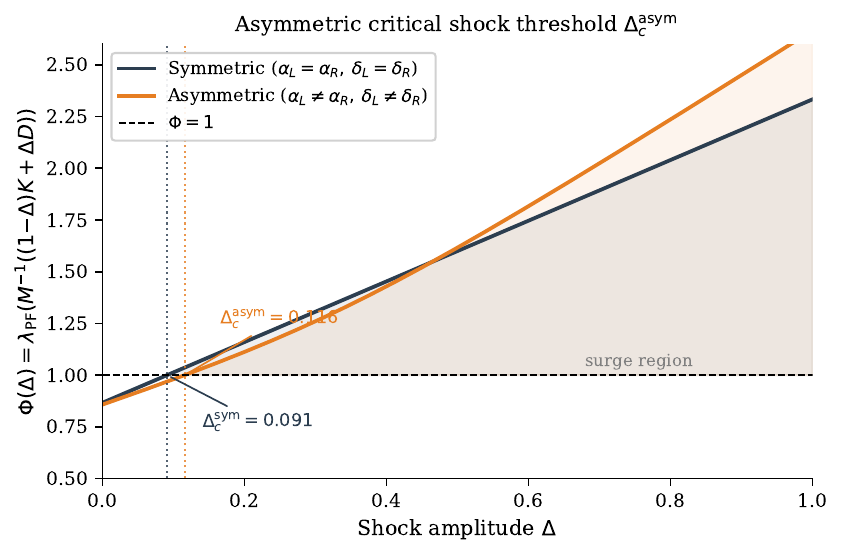}
\caption{$\Phi(\Delta)$ for symmetric and asymmetric parameter sets.
  Symmetric: $\alpha=0.18$, $\gamma=0.08$, $\mu=0.30$, $\delta=0.70$.
  Asymmetric: $\alpha_L=0.15$, $\alpha_R=0.22$, $\gamma_{RL}=0.08$,
  $\gamma_{LR}=0.05$, $\mu_L=\mu_R=0.30$, $\delta_L=0.80$,
  $\delta_R=0.55$.}
\label{fig:phi_delta}
\end{figure}

\bibliography{mss_paper}

\end{document}